\documentclass[11pt,reqno,a4paper]{amsart}
\pdfoutput=1
\usepackage{bold-extra}
\usepackage{microtype}
\usepackage{amsfonts,amstext,amsmath,amssymb,stmaryrd,calc,mathrsfs,booktabs}
\def\mathrlapinternal#1#2{\rlap{$\mathsurround=0pt#1{#2}$}}
\def\mathrlap{\mathpalette\mathrlapinternal}
\newcommand{\rua}[1]{\mathrel{\mathrlap{\:\xrightarrow{{\color{white} #1}}}{\xrightarrow{#1}\:}}}
\newcommand{\eqdef}{\stackrel{\raisebox{-.15ex}{\scalebox{.5}{\upshape\textrm{def}}}}{=}}
\def\vec#1{\mathchoice{\mbox{\boldmath$\displaystyle#1$}}
{\mbox{\boldmath$\textstyle#1$}}
{\mbox{\boldmath$\scriptstyle#1$}}
{\mbox{\boldmath$\scriptscriptstyle#1$}}}
\newcommand{\norm}[1]{\|#1\|}
\newcommand{\tup}[1]{\langle #1\rangle}

\newcommand{\dc}{\mathop{\downarrow}\!}
\newcommand{\uc}{\mathop{\uparrow}\!}
\newcommand{\ra}{\mathrel{\rightarrow^\ast}}
\newcommand{\ru}[1]{\xrightarrow{#1}}
\newcommand{\dom}{\mathop{\mathsf{dom}}}
\newcommand{\remove}[1]{\overline{#1}}
\newcommand{\prefix}[1]{\mathsf{Pref}(#1)}
\newcommand{\nat}{\mathbb{N}}
\newcommand{\infwords}{\Sigma^\ast_\mathsf{acc}}

\newcommand{\Ltrans}{L_{\mathsf{acc}}}

\newcommand{\subword}{\preceq}
\newcommand{\tickYes}{Yes}%
\newcommand{\tickNo}{No}%
\usepackage{color,graphicx,tikz}
\usetikzlibrary{petri}
\usetikzlibrary{positioning}
\usetikzlibrary{arrows}
\usetikzlibrary{automata}
\usetikzlibrary{trees}
\usepackage{multirow,longtable}
\usepackage{listings}
\lstset{language=C++,flexiblecolumns=true,basicstyle=\footnotesize,numbers=left,numberstyle=\tiny}
\newcommand{\cpp}{\lstinline[language=C++]}
\usepackage[numbers]{natbib}\renewcommand{\cite}{\citep}

\makeatletter
\def\NAT@spacechar{~}%
\makeatother
\renewcommand{\cite}{\citep}

\providecommand{\doi}[1]{doi:\href{http://dx.doi.org/#1}{\nolinkurl{#1}}}
\makeatletter
\newcommand{\citepay}[2][\@empty]{\citeauthor{#2}~(\ifx#1\@empty\relax\else#1
  \fi\citeyear{#2})}
\def\bibinfo@X@doi#1{#1}
\providecommand{\doi}[1]{doi:\href{http://dx.doi.org/#1}{\nolinkurl{#1}}}
\makeatother
\newcommand{\citeay}[1]{\citeauthor{#1}, \citeyear{#1}}
\usepackage{amsthm,thmtools}
\theoremstyle{plain}
\newtheorem{theorem}{Theorem}
\newtheorem{lemma}[theorem]{Lemma}
\newtheorem{proposition}[theorem]{Proposition}
\newtheorem{corollary}[theorem]{Corollary}
\newtheorem{fact}[theorem]{Fact}
\theoremstyle{definition}
\newtheorem{definition}[theorem]{Definition}

\theoremstyle{remark}
\newtheorem{claim}{Claim}[theorem]
\providecommand{\qedhere}{\qed}
\renewcommand{\paragraph}{\subsubsection*}
\usepackage{url}
\usepackage{hyperref}

\urlstyle{same}

\renewcommand{\subsectionautorefname}{Section}
\renewcommand{\subsubsectionautorefname}{Section}
\renewcommand{\sectionautorefname}{Section}
\renewcommand{\subsectionautorefname}{Section}
\renewcommand{\subsubsectionautorefname}{Section}

\providecommand{\subfigureautorefname}{Figure}
\begin{document}
\providecommand{\subfigureautorefname}{Fig.$\!$}
\def\sectionautorefname{Section} \def\subsectionautorefname{Section}
\renewcommand\subsubsectionautorefname[1]{\S}
\def\subfigureautorefname{Figure} \def\chapterautorefname{Chapter}

\title{Forward Analysis and Model Checking \mbox{for Trace Bounded} WSTS}
\thanks{Work supported by ANR projects AVeriSS (ANR-06-SETIN-001) and
  AVERILES (ANR-05-RNTL-02).  An extended abstract of this work
  appeared in the Proceedings of the 32nd International Conference on
  Application and Theory of Petri Nets, \emph{Lect. Notes in
    Comput. Sci.} vol.~6709 pp.~49--68 (L.\ M.~Kristensen and
  L.~Petrucci, Eds.), Springer, 2011.}  \author[P.~Chambart]{Pierre
  Chambart} \address{OCamlPro} \author[A.~Finkel]{Alain Finkel}
\author[S.~Schmitz]{Sylvain Schmitz} \address{LSV, ENS Cachan \& CNRS
  \& INRIA, Universit\'e Paris-Saclay, France}
\begin{abstract}
  We investigate a subclass of well-structured transition systems
  (WSTS), the \emph{trace bounded}---in the sense of \citeauthor{bcfl}
  (\emph{T.~Amer. Math. Soc.}, 1964)---complete deterministic ones, which we
  claim provide an adequate basis for the study of forward analyses as
  developed by \citeauthor{cwsts2} (\emph{Logic. Meth. Comput. Sci.},
  2012).  Indeed, we prove that, unlike other conditions considered
  previously for the termination of forward analysis, trace
  boundedness is decidable.  Trace boundedness turns out to be a
  valuable restriction for WSTS verification, as we show that it
  further allows to decide all $\omega$-regular properties on the set
  of infinite traces of the system.  \keywords complete WSTS, model
  checking, flattable system, bounded language, acceleration
\end{abstract}
\maketitle

\section{Introduction}
\paragraph{General Context} Forward analysis using acceleration
\citep{accel,flataccel} is established as one of the most efficient
practical means---albeit in general without termination guarantee---to
tackle safety problems in infinite state systems, e.g.\ in the tools
\textsc{TReX}~\citep{trex}, \textsc{Lash}~\citep{lash}, or
\textsc{Fast}~\citep{fast}.  Even in the context of
\emph{well-structured transition systems} (WSTS), a unifying framework
for infinite systems endowed with a generic backward coverability
algorithm due to \citet{wqo}, forward procedures are commonly felt to
be more efficient than the backward algorithm~\citep{prevspost}: e.g.\
for lossy channel systems~\citep{lcs}, although the backward procedure
always terminates, only the non terminating forward procedure is
implemented in the tool \textsc{TReX}~\citep{trex}.

Acceleration techniques rely on symbolic representations of sets of
states to compute exactly the effect of repeatedly applying a finite
sequence of transitions $w$, i.e.\ the effect of $w^\ast$.  The forward
analysis terminates if and only if a finite sequence $w_1^\ast\cdots
w_n^\ast$ of such accelerations deriving the full reachability set can
be found, resulting in the definition of the \emph{post$^\ast$
  flattable} class of systems \citep{flataccel}.  Despite evidence
that many classes of systems are
flattable~\citep{flat2dim,flatevery,BFGHM-lics15},
whether a system is post$^\ast$-flattable is undecidable for general
systems~\citep{flataccel}.
 
\paragraph{The Well Structured Case}
\Citet{cwsts1,cwsts2} have laid new theoretical foundations for the
forward analysis of deterministic WSTS---where determinism is
understood with respect to transition labels---, by defining
\emph{complete} deterministic WSTS (cd-WSTS) as a means to obtain
finite representations for downward closed sets of states \citep[see
also][]{wadl}, \emph{$\infty$-effective} cd-WSTS as those for which
the acceleration of certain sequences can effectively be computed, and
by proposing a conceptual forward procedure \`a la
\citet{kmtree} for computing the full cover of a
cd-WSTS---i.e.\ the downward closure of its set of reachable states.
Similarly to post$^\ast$ flattable systems, this procedure called
``$\mathsf{Clover}$'' terminates if and only if the cd-WSTS at hand is
\emph{cover flattable}, which is undecidable~\citep{cwsts2}.  As we
show in this paper, post$^\ast$ flattability is also undecidable for
cd-WSTS, thus motivating the search for even stronger sufficient
conditions for termination.  A decidable sufficient condition that we
can easily discard as too restrictive is trace set finiteness,
corresponding to terminating systems~\citep{origwsts}.

\paragraph{This Work}
Our aim with this paper was to find a reasonable decidable sufficient
condition for the termination of the $\mathsf{Clover}$ procedure.  We
have found one such condition in the work of \citet*{foctlpr} with
\emph{trace flattable} systems, which are maybe better defined as the
systems with a \emph{bounded} trace set in the sense of~\citet{bcfl}:
a language $L\subseteq\Sigma^\ast$ is bounded if there exists
$n\in\mathbb{N}$ and $n$ words $w_1,\dots,w_n$ in $\Sigma^\ast$ such
that $L\subseteq w_1^\ast\cdots w_n^\ast$.  The regular expression
$w_1^\ast\cdots w_n^\ast$ is called a \emph{bounded expression} for
$L$.  Trace bounded cd-WSTS encompass systems with finite trace set.

Trace boundedness implies post$^\ast$ and cover flattability.
Moreover, \citeauthor{foctlpr} show that it allows to decide liveness
properties for a restricted class of counter systems (see also
\cite{demri15,DBLP:conf/fct/GantyI15} for other classes of trace
bounded counter systems).  However, to the best of our knowledge,
nothing was known regarding the decidability of trace boundedness
itself, apart from the \citeyear{bcfl} proof of decidability for
context-free grammars by \citet{bcfl} and the \citeyear{emg} one for
equal matrix grammars by \citet{emg}.

We characterize trace boundedness for cd-WSTS and provide as our
main contribution a generic decision algorithm in \autoref{sec:dec}.
We employ vastly different techniques than those used by \citet{bcfl}
and \citet{emg}, since we rely on the results of \citet{cwsts1,cwsts2}
to represent the effect of certain transfinite sequences of
transitions.  We further argue in \autoref{sec:undec} that both the
class of systems (deterministic WSTS) and the property (trace boundedness)
are in some sense optimal: we prove that trace boundedness becomes
undecidable if we relax either the determinism or the
well-structuredness conditions, and that the less restrictive property
of post$^\ast$ flattability is not decidable on deterministic WSTS.

We investigate in \autoref{sec:cmplx} the complexity of trace
boundedness.  It can grow very high depending on the type of
underlying system, but this is the usual state of things with
WSTS---e.g.\ the non multiply-recursive lower bound for coverability
in lossy channel systems of \citet{CS-lics08} also applies to trace
boundedness---and does not prevent tools to be efficient on case
studies.  Although there is no hope of finding general upper bounds
for all WSTS, we nevertheless propose a generic proof recipe, based on
a detailed analysis of our decidability proof, which results in tight
upper bounds in the cases of lossy channel systems and affine counter
systems.  In the simpler case of Petri nets, we demonstrate that trace
boundedness is \textsc{ExpSpace}-hard (matching the \textsc{ExpSpace}
upper bound from~\citep{Schmitz11}), but that the size of the
associated bounded expression can be non primitive-recursive.

Beyond coverability, and as further evidence to the interest of trace
boundedness for the verification of WSTS, we show that all
$\omega$-regular word properties can be checked against the set of
infinite traces of trace bounded $\infty$-effective cd-WSTS, resulting
in a non trivial recursive class of WSTS with decidable liveness
(\autoref{sub:declive}).  Liveness properties are in general
undecidable in cd-WSTS~\citep{undecLCS,lrpn}: techniques for
propositional linear-time temporal logic (LTL) model checking are not
guaranteed to terminate~\citep{emerson,fwlcs} or limited to
subclasses, like Petri nets~\citep{ltlpn}.  %
As a consequence of our result, action-based
(aka transition-based) LTL model checking is decidable for cd-WSTS
(\autoref{sub:decltl}), whereas state-based properties are
undecidable for trace bounded cd-WSTS \citep{racs}.
 
One might fear that trace boundedness is too strong a property to be of any
practical use.  For instance, commutations, as created by concurrent
transitions, often result in trace unboundedness.  However, bear in
mind that the same issues more broadly affect all forward analysis
techniques, and have been alleviated in tools through various
heuristics.  Trace boundedness offers a new insight into why such
heuristics work, and can be used as a theoretical foundation for their
principled development; we illustrate this point in \autoref{sec:tunb}
where we introduce trace boundedness modulo a partial commutation
relation.  We demonstrate the interest of this extension by verifying
a liveness property on the Alternating Bit Protocol with a bounded
number of sessions.

This work results in an array of concrete classes of WSTS, including
lossy channel systems \cite{lcs}, broadcast protocols \cite{emerson},
and Petri nets and their monotone extensions, such as reset/transfer
Petri nets \cite{boundedRPN}, for which trace boundedness is decidable
and implies both computability of the full coverability set and
decidability of liveness properties.  Even for trace unbounded
systems, it provides a new foundation for the heuristics currently
employed by tools to help termination, as with the commutation
reductions we just mentioned.

\renewcommand{\paragraph}{\subsubsection}
\section{Background}\label{sec:prelim}

\subsection{A Running Example}
\newcommand{\semicommutation}[1]{}
\begin{figure}[t]%
\centering
\begin{minipage}{.6\textwidth}%
\footnotesize
\begin{lstlisting}
// Performs n invocations of the rpc() function
// with at most P>=1 simultaneous concurrent calls
piped_multirpc (int n) {
  int sent = n, recv = n; rendezvous rdv;
  while (recv > 0)
    if (sent > 0 && recv - sent < P) {
      post(rdv, rpc); // asynchronous call
      sent--;
    } else { // sent == 0 || recv - sent >= P
      wait(rdv); // a rpc has returned
      recv--;
    }
}
\end{lstlisting}%
\end{minipage}%
\\
\begin{minipage}{\textwidth}
\centering\tikzstyle{token}=[fill=gray,draw=none,circle,inner sep=0.5pt,minimum size=1ex,text=white,font=\pgfutil@font@tiny,every token]
\begin{tikzpicture}[->,>=stealth',shorten >=1pt,initial text=,%
                    node distance=3cm,on grid,semithick,auto,
                    inner sep=2pt,every transition/.style={minimum
                      width=2mm,minimum height=8mm}]
  \node[place,label={[gray]above:\cpp{main}},color=gray,label={[gray,font=\tiny]left:$(1)$}](0){};
  \node[fill=gray,circle,inner sep=0.5pt,minimum size=1ex]{};
  \node[place,below right=2.8cm and 2cm of 0,label={below:\cpp{n}},label={[font=\tiny]above:$(3)$}](2){};
  \node[transition,below=1.5cm of 0,minimum width=8mm,minimum height=2mm,label={[gray,font=\footnotesize]left:$g$},color=gray]{}
    edge[pre,bend left=20,color=gray](0) edge[post,bend right=20,color=gray](0) edge[post,bend right=35,color=gray](2);
  \node[place,right=2cm of 0,label
    distance=-3ex,label={above:\cpp{piped_multirpc}},label={[font=\tiny]below left:$(2)$}](1){};
  \node[place,below right=.8cm and 3cm of 1,label={above:\cpp{P-recv+sent}},label={[font=\tiny]right:$(4)$}](3){\cpp{P}};
  \node[place,above right=.8cm and 3cm of 2,label={below:\cpp{recv-sent}},label={[font=\tiny]right:$(5)$}](4){};
  \node[transition,right=1cm of 0,label={[gray,font=\footnotesize]below:$c$},color=gray]{}
    edge[pre,color=gray](0) edge[post,color=gray](1);
  \semicommutation{%
  \node[place,color=gray,dashed,above right=.5cm and 4.2cm of
    1,label={[gray]above left:\cpp{rdv}},label={[gray,font=\tiny]above right:$(7)$}](rdv){};
  \node[place,color=gray,dashed,below right=.5cm and 4.2cm of
    2,label={[gray]below left:\cpp{rpc}},label={[gray,font=\tiny]below right:$(6)$}](rpc){};
  }
  \node[transition,right=1.5cm of 1,label={[font=\footnotesize]above:$e$}]{}
    edge[pre,bend right=20](1) edge[post,bend left=20](1)
    edge[pre,bend right=20](4) edge[post,bend right=10](3)
    \semicommutation{edge[pre,bend left=20,color=gray,dashed](rdv)};
  \node[transition,right=1.5cm of 2,label={[font=\footnotesize]below:$i$}]{}
    edge[pre,bend right=20](1) edge[post, bend left=20](1) edge[pre](2)
    edge[pre,bend left=20](3) edge[post,bend left=10](4)
    \semicommutation{edge[post,bend right=20,color=gray,dashed](rpc)};
  \semicommutation{%
    \node[transition,color=gray,dashed,below=1.9cm of rdv,minimum width=8mm,minimum
      height=2mm,label={[font=\footnotesize,gray]right:$r$}]{}
    edge[pre,color=gray,dashed](rpc)
    edge[post,color=gray,dashed](rdv);
  }
  \end{tikzpicture}
\end{minipage}\vspace{-1em}
\caption{\label{fig:petri}A piped RPC client in C-like syntax, and its Petri net modelization.}
\end{figure}
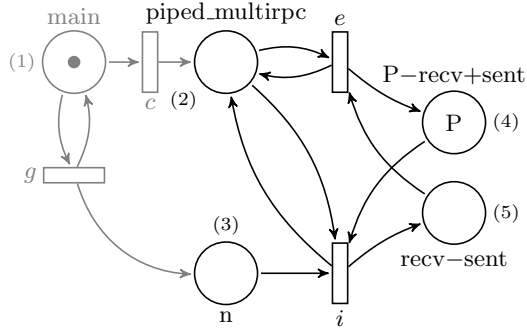

\lstset{basicstyle=\small}
We consider throughout this paper an example (see \autoref{fig:petri})
inspired by the recent literature on \emph{asynchronous} or
\emph{event-based} programming \citep{tame,liveasynch}, namely that of
a client performing $n$ asynchronous remote procedure calls
(corresponding to the \cpp{post(r,rpc)} statement on line~7), of which
at most $P$ can simultaneously be pending.  Such piped---or
windowed---clients are commonly employed to prevent server
saturation.

The abstracted ``producer/consumer'' Petri net for this program
(ignoring the grayed parts for now) has two transitions $i$ and $e$
modeling the \cpp{if} and \cpp{else} branches of lines 6 and 9
respectively.  The deterministic choice between these two branches is
here replaced by a nondeterministic one, where the program can choose
the \cpp{else} branch and wait for some \cpp{rpc} call to return
before the window of pending calls is exhausted.  Observe that we can
recover the original program behavior by further controlling the Petri
net with the \emph{bounded} regular language $i^P(ei)^\ast e^P$ ($P$
is fixed), i.e.\ taking the intersection by synchronous product with a
deterministic finite automaton for $i^P(ei)^\ast e^P$.  
This is an example of a trace bounded system. 

Even without bounded control, the Petri net of
\autoref{fig:petri} has a bounded, finite, language for each
fixed initial $n$; however, for $P\geq 2$, if we expand it for parametric
verification with the left grayed area to allow any $n$ (or set $n=\omega$
as initial value to switch to server mode), then its language becomes
unbounded.  We will reuse this example in \autoref{sec:dec} when
characterizing unboundedness in cd-WSTS.  The full system is of course
bounded when synchronized with a deterministic finite automaton for
the language $g^\ast ci^P(ei)^\ast e^P$.

\subsection{Definitions}

\paragraph{Languages}
Let $\Sigma$ be a finite alphabet; we denote by $\Sigma^\ast$ the set
of finite sequences of elements from $\Sigma$, and by $\Sigma^\omega$
that of infinite sequences;
$\Sigma^\infty\eqdef\Sigma^\ast\cup\Sigma^\omega$.  We denote the
empty sequence by $\varepsilon$, the set of non empty finite sequences
by $\Sigma^+\eqdef\Sigma^\ast\setminus\{\varepsilon\}$, the length of
a sequence $w$ by $|w|$, the left quotients of a language
$L_2\subseteq L^\infty$ by a language $L_1\subseteq\Sigma^\ast$ by
$L_1^{-1}L_2\eqdef\{v\in\Sigma^\infty\mid\exists u\in L_1,uv\in
L_2\}$, and the set of finite prefixes of $L_2$ by
$\prefix{L_2}\eqdef\{u\in\Sigma^\ast\mid\exists
v\in\Sigma^\infty,uv\in L\}$.

We make regular use of the closure of bounded languages by finite
union, intersection and concatenation, taking subsets, prefixes,
suffixes, and factors, and of the following sufficient condition for the
unboundedness of a language $L$ \citep[Lemma~5.3]{bcfl}: the existence
of two words $u$ and $v$ in $\Sigma^+$, such that $uv\neq vu$ and each
word in $\{u,v\}^\ast$ is a factor of some word in~$L$.

\paragraph{Orderings}
Given a relation $\mathrel{R}$ on $A\times B$, we denote by
$\mathrel{R}^{-1}$ its inverse, by $\mathbin{R}(C)\subseteq B$ the
image of $C\subseteq A$, by $\mathbin{R}^\ast$ its transitive reflexive
closure if $\mathbin{R}(A)\subseteq A$, and by
$\mathop{\mathsf{dom}}\mathrel{R}\eqdef{\mathbin{R}}^{-1}(B)$ its
domain.

A \emph{quasi ordering} $\leq$ is a reflexive and transitive relation on
a set $S$.  We write $\geq\;=\;\leq^{-1}$ for the converse quasi order,
${<}\eqdef{\leq}\setminus{\geq}$ for the associated strict order, and
${\equiv}\eqdef {\leq}\cap{\leq}^{-1}$ for the associated equivalence relation.  
The
\emph{$\leq$-upward closure} $\uc C$ of a set $C\subseteq S$ is
$\{s\in S\mid\exists c\in C,c\leq s\}$; its \emph{$\leq$-downward
closure} is $\dc C\eqdef\{s\in S\mid\exists c\in C,c\geq s\}$.  A set
$C$ is \emph{$\leq$-upward closed} (resp.\ $\leq$-downward closed) if
$\uc C=C$ (resp.\ $\dc C=C$).  A set $B$ is a \emph{basis} for an
upward-closed set $C$ (resp.\ downward-closed) if $\uc B=C$
(resp.\ $\dc B=C$).  An upper bound $s\in S$ of a set $A$ verifies
$a\leq s$ for all $a$ of $A$, while we denote its \emph{least upper
  bound}, if it exists, by $\mathsf{lub}(A)$.

A \emph{well quasi ordering} (wqo) is a quasi ordering such that for
any infinite sequence $s_0s_1s_2\cdots$ of $S^\omega$ there exist $i<j$ in
$\mathbb{N}$ such that $s_i\leq s_j$.  Equivalently, there does not
exist any strictly descending chain $s_0>s_1>\cdots>s_i>\cdots$, and
any \emph{antichain}, i.e.\ set of pairwise incomparable elements, is
finite.  In particular, the set of minimal elements of an
upward-closed set $C$ is finite when quotiented by $\equiv$, and is a basis for $C$.
Pointwise comparison $\leq$ in $\mathbb{N}^k$, and scattered subword
comparison $\preceq$ on finite sequences in $\Sigma^\ast$ are well
quasi orders by Higman's Lemma.

\paragraph{Continuous Directed Complete Partial Orders}
A \emph{directed subset} $D\neq\emptyset$ of $S$ is such that any pair
$\{x,y\}$ of elements of $D$ has an upper bound in $D$.  A
\emph{directed complete partial order} (dcpo) is such that any
directed subset has a least upper bound.  A subset $O$ of a dcpo is
\emph{open} if it is upward-closed and if, for any directed subset $D$
such that $\mathsf{lub}(D)$ is in $O$, $D\cap O\neq\emptyset$.  A
partial function $f$ on a dcpo is \emph{partial continuous} if it is
monotonic, $\mathsf{dom}f$ is open, and for any directed subset $D$ of
$\mathsf{dom}f$, $\mathsf{lub}(f(D))=f(\mathsf{lub}(D))$.  Two
elements $s$ and $s'$ of a dcpo are in a \emph{way below} relation,
noted $s\ll s'$, if for every directed subset $D$ such that
$\mathsf{lub}(D)\leq s'$, there exists $s''\in D$ s.t.\ $s\leq s''$.
A dcpo is \emph{continuous} if, for every $s'$ in $S$,
$\mathsf{wb}(s')\eqdef\{s\in S\mid s\ll s'\}$ is directed and has $s'$
as least upper bound.

\paragraph{Well Structured Transition Systems}
A \emph{labeled transition system} (LTS)
$\mathcal{S}=\tup{S,s_0,\Sigma,\rightarrow}$ comprises a set $S$ of
states, an initial state $s_0\in S$, a finite set of labels $\Sigma$, a
transition relation $\rightarrow$ on $S$ defined as the union of the
relations ${\ru{a}}\subseteq S\times S$ for each $a$ in $\Sigma$.  The
relations are extended to sequences in $\Sigma^\ast$ by
$s\ru{\varepsilon}s$ and $s\ru{aw}s''$ for $a$ in $\Sigma$ and $w$ in
$\Sigma^\ast$ if there exists $s'$ in $S$ such
that $s\ru{a}s'$ and $s'\ru{w}s''$.  We write $\mathcal{S}(s)$ for the
same LTS with $s$ in $S$ as initial state (instead of $s_0$).  A LTS is
\begin{itemize}
\item \emph{uniformly bounded branching} if there exists $k \in \mathbb{N}$ such that
  $\mathsf{Post}_{\mathcal{S}}(s)\eqdef\{s'\in S\mid s\ru{}s'\}$ contains less than $k$ elements
  for all $s$ in~$S$,
\item \emph{deterministic} if $\ru{a}$ is a partial function for each
  $a$ in $\Sigma$---and is thus uniformly bounded branching---; we abuse
  notation in this case and identify $u$ with the partial function
  $\ru{u}$ for $u$ in $\Sigma^\ast$,
\item \emph{state bounded} if its \emph{reachability set}
  $\mathsf{Post}_{\mathcal{S}}^\ast(s_0)\eqdef\{s\in S\mid s_0\ra s\}$ is finite,
\item \emph{trace bounded} if its \emph{trace set}
  $T(\mathcal{S})\eqdef\{w\in\Sigma^\ast\mid\exists
    s\in S,s_0\ru{w}s\}$ is a bounded language,
\item \emph{terminating} if its trace set $T(\mathcal{S})$ is finite.
\end{itemize}

A \emph{well-structured transition system} (WSTS)~\citep{origwsts,wqo,wsts}
$\tup{S,s_0,\Sigma,{\rightarrow},{\leq},F}$ is a labeled
transition system $\tup{S,s_0,\Sigma,{\rightarrow}}$ 
endowed with a wqo
$\leq$ on $S$ and an $\leq$-upward closed set of final states $F$, such
that $\rightarrow$ is \emph{monotonic} wrt.\ $\leq$: for any $s_1$,
$s_2$, $s_3$ in $S$ and $a$ in $\Sigma$, if $s_1\leq s_2$ and
$s_1\ru{a}s_3$, then there exists $s_4\geq s_3$ in $S$ with
$s_2\ru{a}s_4$.

The \emph{language} of a WSTS is defined as
$L(\mathcal{S})\eqdef\{w\in\Sigma^\ast\mid\exists s\in F,s_0\ru{w}s\}$; see
\citet{wsl} for a general study of such languages.  In the context of
Petri nets, $L(\mathcal{S})$ is also called the \emph{covering} or \emph{weak}
language, and $T(\mathcal{S})$ the \emph{prefix} language.  Observe that a
\emph{deterministic finite-state automaton} (DFA) is a deterministic
WSTS $\mathcal{A}=\tup{Q,q_0,\Sigma,\delta,{=},F}$, where $Q$ is
finite (we shall later omit $=$ from the definition of DFAs).

Given
$\mathcal{S}_1=\tup{S_1,s_{0,1},\Sigma,{\rightarrow_1},{\leq_1},F_1}$
and
$\mathcal{S}_2=\tup{S_2,s_{0,2},\Sigma,{\rightarrow_2},{\leq_2},F_2}$ two WSTS,
their \emph{synchronous product} is the WSTS
$\mathcal{S}_1\times\mathcal{S}_2\eqdef\tup{S_1\times
  S_2,(s_{0,1},s_{0,2}),\Sigma,{\rightarrow_\times},{\leq_\times},F_1\times
  F_2}$, where for all $s_1$, $s'_1$ in $S_1$, $s_2$, $s'_2$ in $S_2$,
$a$ in $\Sigma$, $(s_1,s_2)\ru{a}_\times(s'_1,s'_2)$ if and only if
$s_1\ru{a}_1 s'_1$ and $s_2\ru{a}_2 s'_2$, and
$(s_1,s_2)\leq_\times(s'_1,s'_2)$ if and only if $s_1\leq_1s'_1$ and
$s_2\leq_2s'_2$, is again a WSTS, such that
$L(\mathcal{S}_1\times\mathcal{S}_2)=L(\mathcal{S}_1)\cap
L(\mathcal{S}_2)$.

We often consider the case $F=S$ and omit $F$ from the WSTS definition,
as we are more interested in trace sets, which provide more evidence
on the reachability sets.

\paragraph{Coverability}
A WSTS is \emph{$\mathsf{Pred}$-effective} if $\rightarrow$ and $\leq$
are decidable, and a finite basis for
$\uc\mathsf{Pred}_{\mathcal{S}}(\uc s, a)\eqdef\uc\{s'\in S\mid\exists
s''\in S,s'\ru{a}s''\text{ and }s\leq s''\}$ can effectively be
computed for all $s$ in $S$ and $a$ in $\Sigma$ \cite{wsts}.

The \emph{cover set} of a WSTS is
$\mathsf{Cover}_{\mathcal{S}}(s_0)\eqdef\dc\mathsf{Post}^\ast_{\mathcal{S}}(s_0)$,
and it is decidable whether a given state $s$ belongs to
$\mathsf{Cover}_{\mathcal{S}}(s_0)$ for finite branching
$\mathsf{Pred}$-effective WSTS, thanks to a backward algorithm that
checks whether $s_0$ belongs to
$\uc\mathsf{Pred}^\ast_{\mathcal{S}}(\uc s)\eqdef\uc\{s'\in
S\mid\exists s''\in S,s'\ra s''\text{ and }s''\geq s\}$.  One can also
decide the emptiness of the language of a WSTS, by checking whether
$s_0$ belongs to $\uc\mathsf{Pred}^\ast_{\mathcal{S}}(F)$.

\paragraph{Flattenings}
Let $\mathcal{A}$ be a DFA with a bounded language.  The synchronous
product $\mathcal{S}\times\mathcal{A}$ of $\mathcal{S}$ and
$\mathcal{A}$ is a \emph{flattening} of $\mathcal{S}$.  Consider the
projection $\pi$ from $S\times Q$ to $S$ defined by $\pi(s,q)\eqdef
s$; then $\mathcal{S}$ is \emph{post$^\ast$ flattable} if there exists
a flattening $\mathcal{S}'$ of $\mathcal{S}$ such that
$\mathsf{Post}^\ast_{\mathcal{S}}(s_0)=\pi(\mathsf{Post}^\ast_{\mathcal{S}'}((s_0,q_0)))$.
In the same way, it is \emph{cover flattable} if
$\mathsf{Cover}_{\mathcal{S}}(s_0)=\pi(\mathsf{Cover}_{\mathcal{S}'}((s_0,q_0)))$,
and \emph{trace flattable} if $T(\mathcal{S})=T(\mathcal{S}')$.
Remark that
\begin{enumerate}
\item trace flattability is equivalent to the boundedness of the trace
  set, and that
\item trace flattability implies post$^\ast$ flattability, which
in turn implies cover flattability.
\end{enumerate}

\paragraph{Complete WSTS}
A deterministic WSTS $\tup{S,s_0,\Sigma,{\rightarrow},{\leq}}$
is \emph{complete} (a cd-WSTS) if \mbox{$(S,\leq)$} is a continuous
dcpo and each transition function $a$ for $a$ in $\Sigma$ is partial continuous
\cite{cwsts1,cwsts2}.  The \emph{lub-acceleration} $u^\omega$ of
a partial continuous function $u$ on $S$, $u$ in $\Sigma^+$, is
again a partial function on $S$ defined by
\begin{align*}
\dom u^\omega&\eqdef\{s\in\dom u\:\mid s\leq u(s)\}\\
u^\omega(s)&\eqdef\mathsf{lub}(\{u^n(s)\mid
n\in\mathbb{N}\})&\text{for all $s$
  in $\dom u^\omega$.}
\end{align*}
A complete WSTS is \emph{$\infty$-effective} if
$u^\omega$ is computable for every $u$ in $\Sigma^+$.

\subsection{Working Hypotheses}\label{sub:hyp}
Our decidability results rely on some effectiveness assumptions for a
restricted class of WSTS: the complete deterministic ones.  We discuss
in this section the exact scope of these hypotheses.  As an appetizer,
notice that both trace boundedness and action-based $\omega$-regular
properties are only concerned with trace sets, hence one can more
generally consider classes of WSTS for which a trace-equivalent complete
deterministic system can effectively be found.  \autoref{fig:nom}
presents the various classes of systems mentioned at one point or
another in the main text or in the proofs.  It also provides a good
way to emphasize the applicability of our results on
$\infty$-effective cd-WSTS.

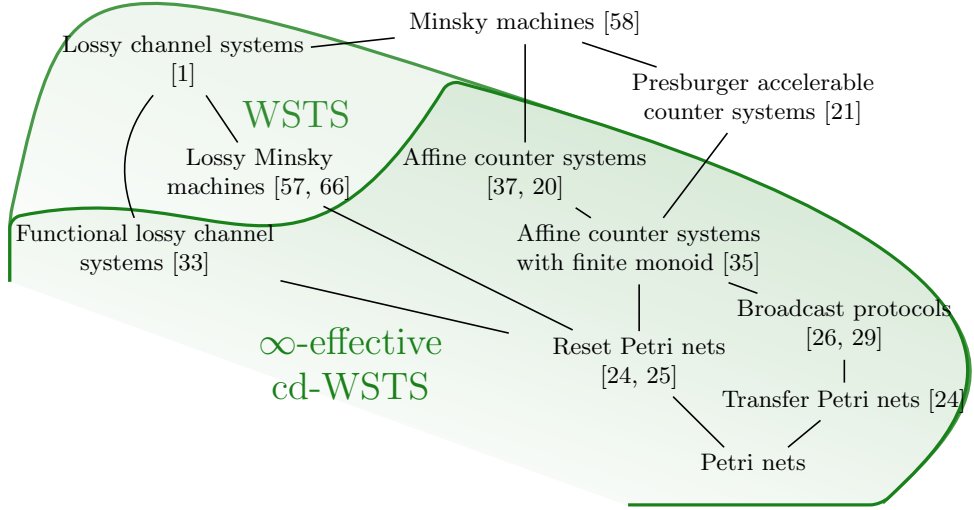
\begin{figure*}[t!]
  \centering
  \pgfdeclarelayer{background}\pgfsetlayers{background,main}
  \definecolor{mydarkgreen}{rgb}{0.1,.5,0.1}
  \begin{tikzpicture}[-,shorten >=1pt,initial text=,%
                     node distance=2cm,on grid,semithick,auto,
                     inner sep=2pt,
                     every node/.style={text width=3.2cm,text badly
                     centered,font=\footnotesize}]
    \node(minsky){Minsky machines {\citep{minsky}}};
    \node[below=2cm of minsky](aff){Affine counter
                     systems {\citep{affine,racs}}};
    \node[below right=1cm and 3cm of minsky](pres){Presburger
                     accelerable counter systems {\citep{foctlpr}}};
    \node[below left=0.5cm and 4.5cm of minsky](lcs){Lossy channel
                     systems {\citep{lcs}}};
    \node[below right=1.5cm and 1cm of lcs](lcn){Lossy Minsky
                     machines {\citep{lrpn,phs10}}};
    \node[below left=2.5cm and .5cm of lcs,text width=3.4cm](flcs){Functional lossy channel
                     systems {\citep{cwsts1}}};
    \node[below right=1cm and 1.5cm of aff](fm){Affine counter
                     systems with finite monoid {\citep{fm}}};
    \node[below right=1cm and 2.7cm of fm](broad){Broadcast
                     protocols {\citep{emerson,broadcast}}};
    \node[below=1cm of broad](tr){Transfer Petri nets {\citep{rtransnets}}};
    \node[below=1.5cm of fm](reset){Reset Petri
                     nets\\{\citep{rtransnets,boundedRPN}}};
    \node[below right=1.5cm and 1.5cm of reset](pn){Petri nets\\~};
    \path
      (aff) edge (minsky)
      (pres) edge (minsky)
      (lcs) edge (minsky)
      (flcs) edge[bend left] (lcs)
      (lcn) edge (lcs)
      (fm) edge (aff)
      (fm) edge (pres)
      (reset) edge (fm)
      (broad) edge (fm)
      (tr) edge (broad)
      (flcs) edge (reset)
      (lcn) edge (reset)
      (pn) edge (reset)
      (pn) edge (tr);
    \begin{pgfonlayer}{background}
    \draw[color=mydarkgreen!80,rounded corners,shade,top color=mydarkgreen!9,middle color=mydarkgreen!7,shading=axis,shading angle=-42,very thick]
    (flcs.south west) -- (flcs.north west)
    .. controls (-6,1) .. (aff.north east) .. controls
    (7.5,-3.8) and (tr.south east) .. (pn.south east);
    \draw[color=mydarkgreen,rounded corners,shade,top color=mydarkgreen!20,middle
    color=mydarkgreen!11,shading=axis,shading angle=-42,very thick]
    (flcs.south west) -- (flcs.north west) .. controls (-4,-2) and (-3,-4) .. (-1,-.75) .. controls (8,-3.2) and (tr.south east) .. (pn.south east) -- (pn.south west);
    \end{pgfonlayer}
    \node[color=mydarkgreen!90,font=\Large] at (-3,-1.2) {WSTS};
    \node[color=mydarkgreen,font=\Large] at (-2.3,-4.5) {$\infty$-effective cd-WSTS};
  \end{tikzpicture}
  \caption{\label{fig:nom}Classes of systems mentioned in
    the paper, with a few relevant references.}
\end{figure*}

\paragraph{Completeness}
\citet{cwsts2} define \emph{$\omega^2$-WSTS} as the class
of systems that can be completed, and provide an extensive off-the-shelf
algebra of datatypes with their completions~\citep{cwsts1}.  As they
argue, all the concrete classes of deterministic WSTS considered in
the literature are $\omega^2$.  Completed systems share their sets of
finite and infinite traces with the original systems: the added limit
states only influence transfinite sequences of transitions.

For instance, the whole class of \emph{affine counter systems}, with
affine transition functions of form $f(\vec x)=\vec A\vec x+\vec b$,
with $\vec A$ a $k\times k$ matrix of non negative integers and $\vec
b$ a vector of $k$ integers---encompassing reset/transfer Petri nets
and broadcast protocols---can be completed to configurations in
$(\mathbb{N}\cup\{\omega\})^k$.  Similarly, \emph{functional lossy
  channel systems}---a deterministic variant of lossy channel
systems~\citep[see also \autoref{sub:posta}]{cwsts1}---can work on
\emph{products}~\citep[Corollary~6.5]{fwlcs}.  On both accounts, the
completed functions are partial continuous.

\paragraph{Determinism}
Beyond deterministic systems, one can consider \emph{finite branching}
WSTS~\citep{cwsts1}.  These are defined as deterministic WSTS equipped
with a labeling function $\sigma$.  Consider a deterministic WSTS
$\tup{S,s_0,\mathcal{F},{\rightarrow},{\leq}}$, where $\mathcal{F}$ is
a finite alphabet of action names; together with a labeling $\sigma:
\mathcal{F}\rightarrow\Sigma$, it defines a possibly non deterministic
WSTS $\tup{S,s_0,\Sigma,{\rightarrow}',{\leq}}$ with $s\ru{a}'s'$ if
and only if there exists $f$ in $\mathcal{F}$ such that $s\ru{f}s'$
and $\sigma(f)=a$.

Assuming basic effectiveness assumptions on the so-called
\emph{principal filters} $\uc s$ of $S$, we can decide the following
sufficient condition for determinism on finite branching WSTS:

\begin{proposition}\label{propdet}
  Let $\mathcal{S}$ be defined by a deterministic WSTS
  $\tup{S,s_0,\mathcal{F},{\rightarrow},{\leq}}$ along with a labeling
  $\sigma: \mathcal{F}\rightarrow\Sigma$.  If finite bases can be
  computed for
  $\uc s\cap\uc s'$ for all $s,s'$ in $S$, and for
  $S$ itself,
  then one can decide whether, for all reachable states $s$ of $S$ and
  pairs $(f,f')$ of transition functions in $\mathcal{F}$ with
  $\sigma(f)=\sigma(f')$, $s\in\dom\ru{f}\cap\dom\ru{f'}$ implies
  $f=f'$.
\end{proposition}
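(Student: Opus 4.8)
The plan is to recast the stated property as a (non-)coverability question. Write $\mathcal{F}$ for the transition functions, which form a finite set since they play the rôle of the label alphabet of the underlying LTS. The condition fails exactly when some reachable state of $\mathcal{S}$ lies in the \emph{conflict set}
\[
 C=\bigcup_{\substack{f\neq f'\in\mathcal{F}\\\sigma(f)=\sigma(f')}}\bigl(\dom\ru{f}\cap\dom\ru{f'}\bigr),
\]
i.e.\ the set of states at which two distinct transition functions sharing a label are simultaneously enabled. As $\mathcal{F}$ is finite, this is a finite union. Hence the property holds if and only if $\mathsf{Post}^\ast_{\mathcal{S}}(s_0)\cap C=\emptyset$, and I will show that $C$ is upward closed with a computable finite basis, so that this reduces to standard WSTS coverability.

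First I would note that each domain $\dom\ru{f}$ is upward closed: this is immediate from monotonicity of $\rightarrow$, since $s\in\dom\ru{f}$ and $s\leq s'$ yield some $s''\geq\ru{f}(s)$ with $s'\ru{f}s''$, so $s'\in\dom\ru{f}$. Consequently every $\dom\ru{f}\cap\dom\ru{f'}$, and therefore $C$ itself, is upward closed, and by the wqo it admits a finite basis of minimal elements. The effective join-semilattice hypothesis is precisely what lets me compute such a basis: given finite bases $A_f$ of $\dom\ru{f}$ and $A_{f'}$ of $\dom\ru{f'}$, the identity $\uc a\cap\uc b=\uc(a\vee b)$ gives $\dom\ru{f}\cap\dom\ru{f'}=\uc\{a\vee b\mid a\in A_f,\,b\in A_{f'}\}$; the computable joins produce a finite generating set, and taking the union over all relevant pairs and discarding non-minimal elements (the order being decidable) yields a finite basis $B$ of $C$.

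With $B$ in hand, I would invoke the backward coverability algorithm for effective WSTS recalled above: it decides whether $s_0\in\uc\mathsf{Pred}^\ast_{\mathcal{S}}(\uc B)$, equivalently whether some state of $C=\uc B$ is reachable from $s_0$. The determinism condition then holds iff this test is negative, and the whole procedure terminates because it runs over the finitely many pairs $(f,f')$.

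The main obstacle is not the coverability step but the effectiveness of the basis computation, i.e.\ obtaining the finite bases $A_f$ of the individual domains $\dom\ru{f}$ together with $\mathsf{Pred}$-effectiveness for the backward search; the join-semilattice structure is exactly the ingredient that turns intersections of upward-closed sets into a finite computation. A basis for $\dom\ru{f}$ is itself available in the usual effective settings---for instance, when a least state $\bot$ is present one has $\uc\bot=S$ and $\mathsf{Pred}_{\mathcal{S}}(\uc\bot,f)=\dom\ru{f}$, so a finite basis of $\uc\mathsf{Pred}_{\mathcal{S}}(\uc\bot,f)$ does the job---and it is here that the argument must be tied to the concrete effectiveness assumptions of the system at hand.
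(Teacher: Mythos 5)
Your proposal is correct and follows essentially the same route as the paper's proof: reduce the failure of the condition to a coverability question for the upward-closed conflict set, and use the effective join-semilattice to turn finite bases of the individual domains $\dom\ru{f}$ into a finite basis of their intersection via $\uc a\cap\uc b=\uc(a\vee b)$, then run the generic backward algorithm. Your additional remarks---that the domains are upward closed by monotonicity, and that their finite bases come from the system's $\mathsf{Pred}$-effectiveness---only make explicit what the paper leaves implicit.
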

\begin{proof}
  Let $B$ be a finite basis for $S$, i.e.\ $\uc B=S$, and let
  $D\eqdef\dom\ru{f}\cap\dom\ru{f'}$.

  We can reformulate the existence of an $s$ violating the condition
  of the proposition as a coverability problem, by checking whether $s_0$
  belongs to $\mathsf{Pred}^\ast(D)$, which is decidable thanks to the
  usual backward reachability algorithm if we provide a finite basis
  for $D$.  To that end, we first compute $B_f$ and $B_{f'}$ two
  finite bases for ${\dom\ru{f}}=\uc\mathsf{Pred}_\mathcal{S}(\uc
  B,f)$ and ${\dom\ru{f'}}=\uc\mathsf{Pred}_\mathcal{S}(\uc B,f)$
  using the $\mathsf{Pred}$-effectiveness of $\mathcal{S}$.  We then
  compute a finite basis for
  \begin{equation}\label{eq-bff}
    D = \bigcup_{s_f\in B_f,s_{f'}\in B_{f'}}(\uc s_f \cap\uc s_{f'})
  \end{equation}
  using the computation of finite bases for intersections of principal
  filters.
\end{proof}
\noindent For instance, labeled functional lossy channel systems and
labeled affine counter systems fit \autoref{propdet}; also note that
determinism is known to be \textsc{ExpSpace}-complete for labeled
Petri nets~\cite{yen,Schmitz11}.

Another extension beyond cd-WSTS is possible: Call a system $\mathcal{S}$
\emph{essentially deterministic} if, analogously to the
\emph{essentially finite branching} systems of \citet{wqo}, for each
state $s$ and symbol $a$, there is a single maximal element inside
$\mathsf{Post}_{\mathcal{S}}(s,a)=\{s'\in S\mid s\ru{a}s'\}$, which we
can effectively compute.  Indeed, from $\mathcal{S}$ we can construct a
deterministic system $\mathcal{S}_d$ with transitions
$s\ru{a}\mathsf{max}(\mathsf{Post}_{\mathcal{S}}(s,a))$ defined whenever
$\mathsf{Post}_{\mathcal{S}}(s,a)$ is not empty, for all $s$ in $S$ and
$a$ in $\Sigma$.  Thanks to monotonicity, any string recognized from
some state in $\mathsf{Post}_{\mathcal{S}}(s,a)$ can also be recognized
from $\mathsf{max}(\mathsf{Post}_{\mathcal{S}}(s,a))$, which entails
$T(\mathcal{S})=T(\mathcal{S}_d)$.

Recall that though most of \emph{infinite branching WSTS} can be embedded into their finite branching WSTS completion \cite{BFGHM-lics15}, this completion has no reason to be uniformly bounded or deterministic.

Finally, one can try to devise trace- and cover-equivalent
deterministic semantics for systems with unbounded \emph{but finite} branching, like
\emph{functional lossy channel systems} \citep{cwsts1} for lossy
channel systems, or reset Petri nets for lossy Minsky machines.  From
a verification standpoint, the deterministic semantics is then
equivalent to the classical one.

\paragraph{Effectiveness}
All the concrete classes of WSTS we have mentioned are
$\mathsf{Pred}$-effective, and we assume this property from all our
systems from now on.  It also turns out that $\infty$-effective
systems abound, including once more (completed) affine counter systems
\citep{cwsts2} and functional lossy channel systems.

\section{Deciding Trace Boundedness}\label{sec:dec}
We present in this section two semi-algorithms, first for trace
boundedness, which relies on the decidability of language emptiness in
WSTS, and then for trace unboundedness, for which we show that a
finite witness can be found in cd-WSTS.  In fact, this second
semi-algorithm can be turned into a full-fledged algorithm when some
extra care is taken in the search for a witness.
\begin{theorem}\label{cordec}
  Trace boundedness is decidable for
  $\infty$-effective cd-WSTS.  If the trace set is
  bounded, then one can compute an adequate bounded expression
  $w_1^\ast\cdots w_n^\ast$ for it.
\end{theorem}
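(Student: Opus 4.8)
The plan is to decide boundedness by dovetailing two semi-algorithms, one that halts exactly on the bounded instances and one that halts exactly on the unbounded ones; since every $\infty$-effective cd-WSTS is either bounded or not, the combined procedure always terminates. For the bounded case I would enumerate all bounded expressions $e=w_1^\ast\cdots w_n^\ast$ (countably many, ranging over $n$ and over tuples of words in $\Sigma^\ast$) and, for each, decide whether $L(\mathcal{S})\subseteq L(e)$, i.e.\ whether $L(\mathcal{S})\cap\overline{L(e)}=\emptyset$. As $L(e)$ is regular, $\overline{L(e)}$ is recognised by a DFA $\mathcal{A}$, itself a deterministic WSTS under equality; the synchronous product $\mathcal{S}\times\mathcal{A}$ is again a $\mathsf{Pred}$-effective, finite-branching WSTS with $L(\mathcal{S}\times\mathcal{A})=L(\mathcal{S})\cap\overline{L(e)}$, so its emptiness is decidable by the backward coverability test recalled in \autoref{sec:prelim}. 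The semi-algorithm halts and outputs $e$ as soon as containment holds. It is sound because $L(\mathcal{S})\subseteq L(e)$ makes $L(\mathcal{S})$ bounded by definition, and complete because a bounded $L(\mathcal{S})$ is contained in some bounded expression the enumeration eventually reaches; this already yields the effective bounded expression promised in the second sentence.

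For the unbounded case I would search for a \emph{witness} tailored to the completed system: a reachable limit configuration $\hat s$ together with two words $u,v\in\Sigma^+$ such that $uv\neq vu$, both $u$ and $v$ are firable from $\hat s$ and return to configurations $\geq\hat s$, and $\hat s$ can still reach $F$. Soundness is precisely the unboundedness criterion of \citet{bcfl} recalled earlier: by monotonicity every word of $\{u,v\}^\ast$ is firable from $\hat s$ back to some configuration $\geq\hat s$; since $\hat s$ is a limit of reachable configurations and each transition function is partial continuous with open domain, continuity guarantees that each such word is already realised from a genuine finite reachable configuration, and monotonicity lets the run be extended to $F$, so every word of $\{u,v\}^\ast$ occurs as a factor of a word of $L(\mathcal{S})$ (the completion sharing its finite traces with $\mathcal{S}$); as $uv\neq vu$, $L(\mathcal{S})$ is unbounded. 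To search effectively I would enumerate finite sequences of moves---firing a letter of $\Sigma$ or applying an acceleration $u^\omega$---from $s_0$; $\infty$-effectiveness lets me compute the resulting limit $\hat s$, and these are exactly the reachable configurations of the cover. For each computed $\hat s$ and each candidate pair $(u,v)$ the witness conditions are all decidable ($\leq$ and the transition functions being computable, $uv\neq vu$ syntactic, and reachability of $F$ reducing again to language non-emptiness), so the search halts precisely when a witness exists.

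The crux, and the step I expect to be the main obstacle, is proving \emph{completeness} of the second semi-algorithm: that unboundedness of $L(\mathcal{S})$ forces such a witness to exist, so that the \citet{bcfl} criterion is also necessary in this setting. Here I would exploit the well-structuredness of the completion, whose downward-closed cover admits a finite decomposition into ideals $\dc\hat s_1,\dots,\dc\hat s_m$. The intended argument is that if no $\hat s_i$ carries a pair of non-commuting return loops, then a Ramsey/pigeonhole analysis of runs---using the wqo to force repetitions of cover ideals along every long run---lets one cover all of $T(\mathcal{S})$, hence $L(\mathcal{S})$, by a single bounded expression built from pairwise-commuting loops, contradicting unboundedness. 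Making this extraction effective, and reconciling the limit configurations used in the witness with genuine finite runs, is where the continuity of the transition functions and the $\infty$-effectiveness hypothesis do the real work.
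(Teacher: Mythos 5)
Your first semi-algorithm is exactly the paper's: enumerate bounded expressions, complement into a DFA, take the synchronous product, and decide emptiness by backward coverability. Your second semi-algorithm, however, replaces the paper's witness (an \emph{increasing fork}, \autoref{def-incfork}: a state $s\rua{}$-reachable from $s_0$ with $s\rua{au}s_a\geq s$ and $s\ru{bv}s_b\geq s$ for $a\neq b$, where the $a$-branch is itself allowed to be an \emph{accelerated} run) by a strictly stronger one: a single accelerated-reachable $\hat s$ carrying two \emph{finite} non-commuting words $u,v$ that both return above $\hat s$. Soundness of your witness is fine, and in fact cleaner than the paper's: since $u$ and $v$ are finite, every word of $\{u,v\}^\ast$ is a genuine finite word firable from $\hat s$, and one application of the continuity argument (the paper's \autoref{lem:finite}) per word realises it as a factor of an actual trace, so \citet{bcfl}'s Lemma~5.3 applies directly; the paper instead has to combine \autoref{lem:finite} with a pumping argument on the DFA of a putative bounded expression, precisely because its fork's $a$-branch changes its finite realisation at each repetition.

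The genuine gap is completeness, and you have located it yourself without closing it. The paper's proof that unboundedness forces a witness (\autoref{lemma-exist-fork}, built on \autoref{lem-remove-letter} and \autoref{lem-remove-star}) is an explicit iterative construction: peel off prefixes $v_{i+1}u_{i+1}$ keeping the residual language $\remove{u_{i+1}}(v_{i+1}^{-1}L_i)$ unbounded, accelerate each detected loop, apply the wqo to the resulting sequence $(s_i)_i$ of limit states, and extract the fork from the first position where $u_i$ and $v_{i+1}u_{i+1}$ diverge. Crucially, the fork it produces has one branch of the form $azu_{i+1}^\omega\cdots u_j^\omega x$ --- it only reaches a configuration $\geq s$ \emph{through further accelerations}, and nothing in that construction yields a second \emph{finite} increasing return loop at a single state. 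So your characterisation is an unproven strengthening of \autoref{thrf}: it may well fail, and even if true it requires an argument. Your proposed route --- ``if no cover ideal carries non-commuting return loops, a Ramsey/pigeonhole analysis covers $T(\mathcal{S})$ by a bounded expression'' --- is not a proof: the cover decomposes into finitely many ideals, but runs of $\mathcal{S}$ do not factor through return loops at those ideals (a long run need never revisit a configuration comparable to an earlier one within the same ideal in a way that produces a firable finite loop), and assembling a single bounded expression from ``pairwise-commuting loops'' ignores the unboundedly many non-loop segments and the unbounded lengths of the loops themselves. This contrapositive direction is exactly where the paper's technical work lives (the removal operation $\remove{u}L$, the requirement $|v_{i+1}u_{i+1}|\geq|u_i|$, the double use of the wqo), and your proposal would need to either reprove \autoref{lemma-exist-fork} for your stronger witness or fall back to the paper's fork, whose $a$-branch must then be enumerated over accelerated words rather than finite ones.
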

An additional remark is that \autoref{cordec} holds more generally for
the boundedness of the \emph{language} $L(\mathcal S)$ of a WSTS
instead of its trace set $T(\mathcal S)$.  Indeed, the semi-algorithm
for boundedness would work just as well with $L(\mathcal S)$,
while the semi-algorithm for unboundedness can restrict its search for
a witness to~$\mathsf{Pre}^\ast(F)$.
 
\subsection{Trace Boundedness}
Trace boundedness is semi decidable with a
rather straightforward procedure for any WSTS $\mathcal{S}$ (neither
completeness nor determinism are necessary): enumerate the possible bounded
expressions $w_1^\ast\cdots w_n^\ast$ and check whether the trace set
$T(\mathcal{S})$ of the WSTS is included in their language.
This last operation can be performed by checking the emptiness of the
language of the WSTS obtained as the synchronous product
$\mathcal{S}\times\mathcal{A}$ of the original
system with a DFA $\mathcal{A}$ for the complement of the language of
$w_1^\ast\cdots w_n^\ast$.  If $L(\mathcal{S}\times\mathcal{A})$ is
empty, which is decidable thanks to the generic backwards algorithm for
WSTS, then we have found a bounded expression for $T(\mathcal{S})$.

\subsection{Trace Unboundedness}\label{sub:dec}
We detail the procedure for trace unboundedness of the trace set.  Our
construction relies on the existence of a witness of trace
unboundedness, which can be found after a finite time in a cd-WSTS by
exploring its states using accelerated sequences.
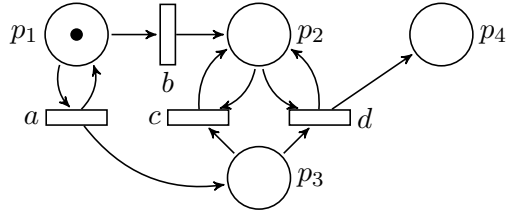
\begin{figure}[tb]
  \centering
  \begin{tikzpicture}[->,>=stealth',shorten >=1pt,initial text=,%
                    node distance=1.4cm,on grid,semithick,auto,
                    inner sep=2pt,every transition/.style={minimum width=8mm,minimum height=2mm}]
  \node[place,tokens=1,label=left:$p_1$](0){};
  \node[place,below right=1.9cm and 2.4cm of 0,label=right:$p_3$](2){};
  \node[transition,below=1.1cm of 0,label=left:$a$]{}
    edge[pre,bend left](0) edge[post,bend right](0) edge[post,bend right](2);
  \node[place,right=2.4cm of 0,label=right:$p_2$](1){};
  \node[place,right=2.4cm of 1,label=right:$p_4$](3){};
  \node[transition,right=1.2cm of 0,label=below:$b$,minimum width=2mm,minimum height=8mm]{}
    edge[pre](0) edge[post](1);
  \node[transition,below left=1.1cm and 0.8cm of 1,label=left:$c$]{}
    edge[pre](2) edge[pre,bend right](1) edge[post,bend left](1);
  \node[transition,below right=1.1cm and 0.8cm of 1,label=right:$d$]{}
    edge[pre](2) edge[pre,bend left](1) edge[post,bend right](1) edge[post](3);
  \end{tikzpicture}
  \caption{\label{fig:unb}The Petri net $\mathcal{N}'(1,0,0,0)$, with an unbounded trace set.}
\end{figure}

\paragraph{Overview}
Let us consider the Petri net $\mathcal{N}'$ with
initial marking $(1,0,0,0)$, depicted in \autoref{fig:unb}, with trace set
\begin{equation*}
  T(\mathcal{N}'(1,0,0,0))=a^\ast\cup\bigcup_{n\geq 0}a^nb\{c,d\}^{\leq n}\;.
\end{equation*}
Notice that the trace set of $\mathcal{N}'$ with initial marking
$(0,1,n,0)$ is bounded for each $n$: it is $\{c,d\}^{\leq n}$,
a finite language.  The trace unboundedness of
$\mathcal{N}'(1,0,0,0)$ originates in its ability to reach every $(0,1,n,0)$
marking after a sequence of $n$ transitions on $a$ followed by a
$b$ transition.

Consider now transitions $(1,0,0,0)\ru{a}(1,0,1,0)$ and
$(1,0,0,0)\ru{b}(0,1,0,0)$.  The two systems $\mathcal{N}'(1,0,1,0)$
and $\mathcal{N}'(0,1,0,0)$ are respectively trace unbounded and trace
bounded.  More generally, \autoref{lem-remove-letter} will show later
that, if $L\subseteq\Sigma^\ast$ is an unbounded language, then there
exists $a$ in $\Sigma$ such that $a^{-1}L$ is also unbounded.  By
repeated applications of, we can find words $w$ of any length $|w|=n$
such that $w^{-1}L$ is still unbounded: this is the case of $a^n$ in
our example.  This process continues to the infinite, but in a WSTS we
will eventually find two states $s_i\leq s_j$, met after $i<j$ steps
respectively.  Let $s_i\ru{u}s_j$; by monotonicity we can recognize
$u^\ast$ starting from $s_i$.  In a cd-WSTS, there is a
lub-accelerated state $s$ with $s_i\ru{u^\omega}s$ that represents the
effect of all these $u$ transitions;
here $(1,0,0,0)\ru{a^\omega}(1,0,\omega,0)$.  The interesting
point is that our lub-acceleration finds the correct residual trace set:
$T(\mathcal{N}'(1,0,\omega,0))=(a^\ast)^{-1}T(\mathcal{N}'(1,0,0,0))$.

Again, we can repeatedly remove accelerated strings from the prefixes of
our trace set and keep it unbounded.  However, due to the wqo, an
infinite succession of lub-accelerations allows us to nest some loops
after a finite number of steps.  Still with the same example, we reach
$(1,0,\omega,0)\ru{b}(0,1,\omega,0)$, and---thanks to the
lub-acceleration---the source of trace unboundedness is now
visible because both $(0,1,\omega,0)\ru{c}(0,1,\omega,0)$ and
$(0,1,\omega,0)\ru{d}(0,1,\omega,1)$ are increasing,
thus by monotonicity $T(\mathcal{N}'(0,1,\omega,0))=\{c,d\}^\ast$.  By
continuity, for each string $u$ in $\{c,d\}^\ast$, there exists $n$ in
$\mathbb{N}$ such that $a^{n}bu$ is an actual trace of
$\mathcal{N}'(1,0,0,0)$.

The same reasoning can be applied to the Petri net of
\autoref{fig:petri} with initial marking $(1,0,0,P,0)$ for $P\geq 2$.
As mentioned in \autoref{sec:prelim}, its trace set is unbounded, but
the trace set of $\mathcal{N}$ with initial marking
$(0,1,n,P,0)$ is bounded for each $n$, since it is a finite language.
We reach
$(1,0,0,P,0)\ru{g^\omega}(1,0,\omega,P,0)\ru{ci}(0,1,\omega,P-1,1)$
and see that both $(0,1,\omega,P-1,1)\ru{ei}(0,1,\omega,P-1,1)$ and
$(0,1,\omega,P-1,1)\ru{ieei}(0,1,\omega,P-1,1)$ are increasing,
thus by monotonicity $T(\mathcal{N}(0,1,\omega,P-1,1))$ contains
$\{ei,ieei\}^\ast$.  Here continuity comes into play to show that
these limit behaviors are reflected in the set of finite traces
of the system: in our example, for each string $u$ in $\{ei,ieei\}^\ast$,
there exists a finite $n$ in $\mathbb{N}$ such that $g^{n}ciu$ is an actual
trace of $\mathcal{N}(1,0,0,P,0)$.

\paragraph{Increasing Forks}
\begin{figure}[tb]
\centering
\begin{tikzpicture}[->,shorten >=1pt,initial text=,%
                    node distance=2cm,on grid,semithick,auto,
                    inner sep=2pt]
  \node(s0){$s_0$};
  \node[right=of s0](s){$s$};
  \node[above right=1cm and 2cm of s](sa){$s_a$};
  \node[below right=1cm and 2cm of s](sb){$s_b$};
  \path[every node/.style={font=\footnotesize}]
    (s0) edge[->>] (s)
    (s) edge[-,color=white] node[color=black]{$au$} (sa)
    (s) edge[->>,swap] node{$\leq$} (sa)
    (s) edge[-,color=white] node[color=black]{$bv$} (sb)
    (s) edge[swap] node{$\leq$} (sb);
\end{tikzpicture}
\caption{\label{fig:fork}An increasing fork witnesses trace unboundedness.}
\end{figure}
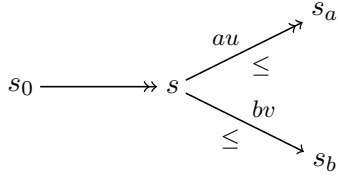
We call the previous witness of trace unboundedness an
\emph{increasing fork}, as depicted in schematic form in
\autoref{fig:fork}.  Let us first define accelerated runs and
languages for complete WSTS, where lub-accelerations are employed.
\begin{definition}
  Let $\mathcal{S}=\tup{S,s_0,\Sigma,\rightarrow,\leq,F}$ be a
  cd-WSTS.  An \emph{accelerated run} is a finite
  sequence $\sigma=s_0s_1s_2\cdots s_n$ in $S^\ast$ such that for all
  $i\geq 0$, either there exists $a$ in $\Sigma$ such that
\begin{align*}
  s_i&\ru{a}s_{i+1}\tag{single step}\\
  \intertext{or there exists $u$ in $\Sigma^+$ such that}
  s_i&\ru{u^\omega}s_{i+1}\;.\tag{accelerated step}
\end{align*}
  We denote the relation over $S$ defined by such an accelerated run by
  $s_0\rua{}s_n$.  An accelerated run is \emph{accepting} if $s_n$
  is in $F$.  The \emph{accelerated language} (resp.\ \emph{accelerated trace
  set}) $\Ltrans(\mathcal{S})$ (resp.\ $T_\mathsf{acc}(\mathcal{S})$) of
  $\mathcal{S}$ is the set of sequences that label some
  accepting accelerated run (resp.\ some accelerated run).%
\end{definition}%
\noindent
We denote by $\infwords$ the set of finite sequences mixing letters
$a$ from $\Sigma$ and accelerations $u^\omega$ where $u$ is a finite
sequence from $\Sigma^+$; in particular
\mbox{$\Ltrans(\mathcal{S})\subseteq\infwords$}.
\begin{definition}\label{def-incfork}
  A cd-WSTS
  $\mathcal{S}=\tup{S,s_0,\Sigma,\rightarrow,\leq}$ has an
  \emph{increasing fork} if there exist $a\neq b$ in $\Sigma$, 
  $u$ in $\infwords$, $v$ in $\Sigma^\ast$, and $s$, $s_a\geq s$,
  $s_b\geq s$ in $S$ such that $s_0\rua{} s$,  $s\rua{au}
  s_a$, and $s\ru{bv}s_b$.
\end{definition}

As shown in the following %
proposition, a semi-algorithm for
trace unboundedness in $\infty$-effective cd-WSTS then
consists in an exhaustive search for an increasing fork, by applying non
nested lub-accelerations whenever possible.
In fact, by choosing which acceleration sequences to employ in the search
for an increasing fork, we can turn this semi-algorithm into a full
algorithm; we will see this in more detail in \autoref{sub:ubound}.

\begin{proposition}\label{thrf}
  A cd-WSTS has an unbounded trace set if and only if it has an
  increasing fork.
\end{proposition}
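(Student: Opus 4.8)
The plan is to prove the two implications separately, with continuity of the transition functions and the well-quasi-ordering doing most of the work.

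For the direction ``increasing fork $\Rightarrow$ unbounded'', I would produce two non-commuting finite words whose combined iterations all occur as factors of genuine traces, so that the sufficient condition for unboundedness recalled in \autoref{sec:prelim} (\citet[Lemma~5.3]{bcfl}) applies. Concretely, the $b$-branch $s\ru{bv}s_b$ is an ordinary finite run with $s_b\geq s$, so by monotonicity $bv$ can be replayed from $s$ indefinitely; the $a$-branch $s\rua{au}s_a$ with $s_a\geq s$ behaves likewise, after continuity is used to replace the possibly transfinite accelerated steps inside $u\in\infwords$ by sufficiently long finite unfoldings. Setting $x$ to be a finite word witnessing the first branch and $y$ one witnessing the second, they start with $a$ and $b$ respectively and hence do not commute, while monotonicity shows that every word of $\{x,y\}^\ast$ labels a run from $s$ returning above $s$. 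It remains to pull these factors back to actual traces of $\mathcal{S}(s_0)$: since $s$ is reached by the accelerated run $s_0\rua{}s$, I would write $s$ as the least upper bound of a directed set $D$ of states genuinely reachable from $s_0$ by finite runs (pushing the finitely many lubs through the subsequent partial continuous functions), observe that reading a fixed finite word is a partial continuous function with \emph{open} domain, and conclude that any fixed $\rho\in\{x,y\}^\ast$, being readable from $s=\lub D$, is already readable from some $t\in D$. Thus $\rho$ is a factor of a real trace, and \citet[Lemma~5.3]{bcfl} yields unboundedness.

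For the converse, I would run the exhaustive accelerated exploration sketched in the overview while maintaining the invariant that the current state, reached by an accelerated run $s_0\rua{}t$, has an \emph{unbounded} residual trace set. The crucial correctness point, again from continuity, is that lub-acceleration computes exactly the residual trace set, i.e.\ $T(\mathcal{S}(s_{i+1}))=(u^\ast)^{-1}T(\mathcal{S}(s_i))$ whenever $s_i\ru{u^\omega}s_{i+1}$, so that accelerating a loop neither creates nor destroys unboundedness of the residual. Along any non-forking branch the well-quasi-ordering produces two visited states $s_i\leq s_j$, allowing a non-nested lub-acceleration of the intervening loop; since the explored states are taken among accelerated ones, the wqo bounds the number of accelerations, and a König-style argument forbids the search from descending forever without branching. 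The combinatorial heart is then to argue that an unbounded residual cannot be sustained along a single-letter path: unboundedness localises, after finitely many accelerations, to a state $s$ at which two \emph{distinct} first letters $a\neq b$ each lead—by increasing loops—back above $s$. Here I would use that boundedness is preserved under taking factors and finite unions, so that if all but one one-letter residual were bounded the whole residual could be folded into a bounded expression; hence two genuinely different continuations feeding the unboundedness must coexist, and after acceleration their loops become increasing, giving $s_a\geq s$ and $s_b\geq s$, which is precisely an increasing fork.

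I expect the main obstacle to be this converse, and within it the two continuity-driven claims: that lub-acceleration returns the \emph{exact} residual trace set, so that the invariant is meaningful, and that the accelerated search terminates by exposing a \emph{fork} rather than merely a repeated increasing state. The forward direction is comparatively routine once the open-domain machinery is in place; the delicate accounting—showing the distinct-first-letter branching genuinely appears, and later turning the resulting semi-algorithm into a terminating one—is what I would defer to \autoref{sub:ubound}.
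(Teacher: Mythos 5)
Both directions of your sketch contain a genuine gap. In the forward direction you reduce to two fixed non-commuting finite words $x$ and $y$, all of whose products are factors of genuine traces, and then invoke \citet[Lemma~5.3]{bcfl}. But in an increasing fork (\autoref{def-incfork}) the $a$-branch is labelled by $au$ with $u\in\infwords$, i.e.\ it may itself contain lub-accelerated steps, and $s$ is itself only reached by an accelerated run. Continuity gives you, for each \emph{fixed} finite continuation, \emph{some} finite unfolding of these accelerations realising it---but the unfolding exponents depend on how much of the word remains to be read, and the finite approximants need not return above $s$ (only their lub $s_a$ does). So there is in general no single finite word $y$ with $s\ru{y}t\geq s$, and no fixed pair $\{x,y\}$ whose whole monoid consists of factors of traces; what one actually obtains (this is \autoref{lem:finite}) are factors of the form $(bv)^{n_1}au_1(bv)^{n_2}au_2\cdots$ with \emph{varying} $u_i$, to which Lemma~5.3 does not apply directly. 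The paper instead assumes a DFA $\mathcal{A}$ for a putative bounded expression and runs a pigeonhole/pumping argument inside $\mathcal{A}$ on one long witness of this shape, exhibiting two fixed non-commuting loops of $\mathcal{A}$ and contradicting the boundedness of $L(\mathcal{A})$ (\autoref{lem:fork-implies-unbounded}).

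In the converse direction, the step ``if all but one one-letter residual were bounded the whole residual could be folded into a bounded expression, hence two distinct first letters feeding the unboundedness must coexist'' is a non sequitur: $L=\bigcup_a a\cdot(a^{-1}L)$ only yields that \emph{some} $a^{-1}L$ is unbounded (\autoref{lem-remove-letter}), and it is perfectly consistent for exactly one one-letter residual to be unbounded at every state met along the exploration---this is what happens in the running example $T(\mathcal{N}'(1,0,0,0))=a^\ast\cup\bigcup_n a^nb\{c,d\}^{\leq n}$, where before acceleration only the $a$-residual is ever unbounded. The fork is therefore not found by locating a state with two unbounded one-letter residuals. The paper's mechanism (\autoref{lem-remove-star} and \autoref{lemma-exist-fork}) is different: one branch comes from refiring the accelerated loop $u_i$ itself, which is increasing at $s_i$ and yields $s\ru{byx}s_b\geq s$, while the other comes from continuing the unbounded residual \emph{after removing} $u_i$, i.e.\ replacing $v^{-1}L$ by $\remove{u}(v^{-1}L)=(u^\ast)^{-1}(v^{-1}L)\backslash u\Sigma^\ast$; this removal, together with the length constraint $|v_{i+1}u_{i+1}|\geq|u_i|$ and a second application of the wqo to the sequence $(s_i)$, is exactly what forces the continuation $v_{i+1}u_{i+1}$ to differ from $u_i$ at some position, producing the two letters $a\neq b$ after a common prefix $x$. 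Without this removal idea your exploration has no reason ever to branch, and the claimed localisation of unboundedness to a two-letter fork is unsubstantiated.
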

The remainder of the section details the proof of
\autoref{thrf}.

\paragraph{An Increasing Fork Implies Unboundedness}
The following lemma shows that, thanks to continuity, what happens in
accelerated runs is mirrored in finite runs.  
\begin{lemma}\label{lem:finite}
  Let $\mathcal{S}$ be a cd-WSTS and $n\geq 0$.  If
  \begin{equation*}
    w_n=v_{n+1}u_n^\omega v_n\cdots u_1^\omega v_1\in T_\mathsf{acc}(\mathcal{S})
  \end{equation*}
  with the $u_i$ in $\Sigma^+$ and the $v_i$ in $\Sigma^\ast$, then %
  there exist $k_1,\ldots,k_n$ in $\nat$%
, such that 
  \begin{equation*}
    w'_n=v_{n+1}u_n^{k_n}v_n\cdots u_1^{k_1}v_{1}\in T(\mathcal{S})\;.
  \end{equation*}
\end{lemma}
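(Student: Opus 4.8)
The plan is to prove the statement by induction on the number $n$ of accelerations occurring in $w_n$, the crucial device being that continuity lets us trade the leading lub-acceleration for a finite power. For the base case $n=0$ the word $w_0=v_1$ is finite, so any accelerated run labelling it can contain no accelerated step (each accelerated step $\ru{u^\omega}$ contributes a subword of length $\omega$, which a finite word cannot accommodate); the run is therefore an ordinary finite run, giving $w'_0=v_1\in T(\mathcal{S})$ immediately.

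For the inductive step I would isolate the \emph{first} (leftmost) acceleration. Spelling out the underlying accelerated run and naming intermediate states by determinism, it reads $s_0\ru{v_{n+1}}p\ru{u_n^\omega}q$ and then continues from $q$ along an accelerated run labelled by the suffix $g=v_nu_{n-1}^\omega v_{n-1}\cdots u_1^\omega v_1$. By the definition of the lub-acceleration we have $p\in\dom u_n^\omega$, so $D=\{u_n^k(p)\mid k\in\nat\}$ is an ascending chain, hence directed, with $\lub(D)=q$. Since $g$ labels an accelerated run of $\mathcal{S}(q)$ containing only $n-1$ accelerations, the induction hypothesis applied to the cd-WSTS $\mathcal{S}(q)$ furnishes $k_1,\dots,k_{n-1}\in\nat$ such that the finite word $g'=v_nu_{n-1}^{k_{n-1}}v_{n-1}\cdots u_1^{k_1}v_1$ lies in $T(\mathcal{S}(q))$, i.e.\ $q\in\dom g'$.

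It remains to pull the acceleration $u_n^\omega$ back to a finite power, and here continuity is exactly what is needed. As a composition of the partial continuous single-step transition functions, $g'$ is itself partial continuous, so $\dom g'$ is open. Since $q=\lub(D)\in\dom g'$ and $D$ is directed, openness forces $D\cap\dom g'\neq\emptyset$, yielding some $k_n\in\nat$ with $u_n^{k_n}(p)\in\dom g'$. Then $s_0\ru{v_{n+1}}p\ru{u_n^{k_n}}u_n^{k_n}(p)\ru{g'}s'$ is a bona fide finite run for some $s'$, witnessing $w'_n=v_{n+1}u_n^{k_n}g'=v_{n+1}u_n^{k_n}v_nu_{n-1}^{k_{n-1}}\cdots u_1^{k_1}v_1\in T(\mathcal{S})$, which is the desired conclusion.

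The main obstacle is precisely this last replacement of the infinite lub-acceleration by a finite power: everything else is bookkeeping, but this step relies essentially on continuity, through the fact that the domain of the finite transition function $g'$ is open, so that a directed set whose least upper bound lands in it must already meet it. The only auxiliary fact to verify is that partial continuous functions are closed under composition (so that $g'$ indeed has an open domain), which is routine from the definitions and which I would simply record before the induction.
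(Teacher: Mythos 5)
Your proof is correct and follows essentially the same route as the paper's: induction on $n$, isolating the leftmost acceleration, applying the induction hypothesis from the lub state, and then using the openness of the domain of the composed finite transition function together with the directedness of $\{u_n^k(p)\mid k\in\nat\}$ to replace $u_n^\omega$ by a finite power $u_n^{k_n}$. The only (harmless) additions are your explicit remarks that partial continuous functions compose and that a finite word cannot label an accelerated step in the base case.
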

\begin{proof}
  We proceed by induction on $n$.  In the base case where $n=0$,
  $w_0=v_{1}$ belongs trivially to $T(\mathcal{S})$---this concludes
  the proof if we are considering words in $T(\mathcal{S})$.  For the
  induction part, let $s$ be a state such that
  \begin{equation*}
    s_0\rua{v_{n+1}u_n^\omega}s\rua{v_nu_{n-1}^\omega v_{n-1}\cdots
  u_1^\omega v_1}s_f\;,
  \end{equation*}
  i.e.\ $w_{n-1}=v_nu_{n-1}^\omega v_{n-1}\cdots
  u_1^\omega v_1$ is in $T_\mathsf{acc}(\mathcal{S}(s))$.  Therefore, using the
  induction hypothesis, we can find $k_1,\ldots,k_{n-1}$ in $\nat$ %
  such that
  \begin{equation*}
    w'_{n-1}=v_nu_{n-1}^{k_{n-1}}v_{n-1}\cdots u_1^{k_1}v_1\in T(\mathcal{S}(s))\;.
  \end{equation*}
  Because $\mathcal{S}$ is complete, $\ru{w'_{n-1}}$ is a partial
  continuous function, hence with an open domain $O$.  This
  domain $O$ contains in particular $s$, which by definition of
  $u_n^\omega$ is the lub of the {directed set} $\{s'\mid\exists
  m\in\mathbb{N}, s_0\ru{v_{n+1}u_n^m}s'\}$.  By definition of an
  open set, there exists an element $s'$ in $\{s'\mid\exists
  m\in\mathbb{N}, s_0\ru{v_{n+1}u_n^m}s'\}\cap O$, i.e.\ there exists
  $k_n$ in $\mathbb{N}$ s.t.\ \mbox{$s_0\ru{v_{n+1}u_n^{k_n}}s'$} and
  $s'$ can fire the transition sequence~$w'_{n-1}$.
\end{proof}

Continuity is crucial for the soundness of our procedure, as can be
better understood by considering the example of the WSTS
$\mathcal{S}'=\tup{\mathbb{N}\uplus\{\omega\},0,\{a,b\},{\rightarrow},{\leq}}$
with transitions
\begin{align*}
  \forall n\in\mathbb{N},n\ru{a}&\;n+1,&
  \omega\ru{a}&\;\omega,&
  \omega\ru{b}&\;\omega\;.
\end{align*}
We obtain a bounded set of finite traces $T(\mathcal{S}'(0))=a^\ast$,
but reach the configuration $\omega$ through lub-accelerations, and then
find an increasing fork with $T(\mathcal{S}'(\omega))=\{a,b\}^\ast$, an
unbounded language.  Observe that $\mathbb{N}$ is a directed set with
$\omega$ as lub, thus the domain of $\ru{b}$ should contain some
elements of $\mathbb{N}$ in order to be open: $\mathcal{S}'$ is not
a complete WSTS.

\begin{lemma}\label{lem:fork-implies-unbounded}
  Let $\mathcal{S}$ be a cd-WSTS.  If $\mathcal{S}$ has an
  increasing fork, then $T(\mathcal{S})$ is unbounded.
\end{lemma}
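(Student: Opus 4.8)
The plan is to reduce the claim to the sufficient condition for unboundedness recalled in \autoref{sec:prelim}, namely \citet[Lemma~5.3]{bcfl}: it is enough to exhibit two finite words $x,y\in\Sigma^+$ with $xy\neq yx$ such that every word of $\{x,y\}^\ast$ is a factor of some word of $T(\mathcal{S})$. The two branches of the fork are the obvious source for $x$ and $y$, and non-commutation is meant to come for free from the hypothesis $a\neq b$: if we manage to read along the $a$-branch a finite word beginning with $a$ and along the $bv$-branch a finite word beginning with $b$, then they differ on their first letter and hence cannot commute.

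The backbone is a simple iteration principle, proved by induction on the number of factors: if $x,y$ are finite loops through $s$, i.e.\ $s\ru{x}s_x$ and $s\ru{y}s_y$ with $s_x,s_y\geq s$, then monotonicity lets us re-fire either loop from any state above $s$ while staying above $s$, so every word of $\{x,y\}^\ast$ labels a run from $s$ and $\{x,y\}^\ast\subseteq T(\mathcal{S}(s))$. The branch $y=bv$ is already such a loop, since $s_b\geq s$. When $u$ is finite the branch $au$ is a loop as well (because $s_a\geq s$), so we may take $x=au$, $y=bv$, and this case is complete. When $u$ is transfinite I would instead exploit the fixed-point identity $w(w^\omega(r))=w^\omega(r)$, which holds in any cd-WSTS by continuity (the lub of the increasing chain $\{w^n(r)\}$ is fixed by $w$): the state reached just after an accelerated step $w^\omega$ carries a genuine finite self-loop labelled $w$. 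This is how one turns the acceleration hidden inside $au$ into an honest finite loop, to be used for $x$ in place of the transfinite branch.

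It then remains to transport $\{x,y\}^\ast$ back to the initial state. The hypothesis $s_0\rua{}s$ (extended, if need be, up to the fixed-point state carrying the self-loop) yields an accelerated run whose label $w_0\in\infwords$ I prefix to the runs above, giving $w_0\,z\in T_{\mathsf{acc}}(\mathcal{S})$ for every $z\in\{x,y\}^\ast$. Each such word has exactly the alternating shape required by \autoref{lem:finite}: all the accelerations sit inside $w_0$, whereas the suffix $z$ is finite and is copied verbatim. The lemma therefore returns a genuine finite trace of $\mathcal{S}$ in which $z$ occurs as a factor, so every word of $\{x,y\}^\ast$ is a factor of a word of $T(\mathcal{S})$, and \citet[Lemma~5.3]{bcfl} delivers the unboundedness of $T(\mathcal{S})$.

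The genuinely delicate step—and what I expect to be the main obstacle—is precisely the manufacture of the second finite loop out of the $au$-branch when $u$ is transfinite. Monotonicity does \emph{not} extend to accelerated steps: a state lying above $\dom u^\omega$ need not belong to $\dom u^\omega$, so one cannot simply replay an accelerated run from a larger state and iterate the accelerated branch directly. This is exactly why I would descend to the finite self-loops furnished by the fixed-point property before iterating, rather than iterate the accelerated run itself. Carrying this out uniformly—locating a single reachable (limit) state that carries two finite loops whose labels do not commute, for every shape an increasing fork may take—is the crux of the argument; the auxiliary system $\mathcal{S}'$ displayed after \autoref{lem:finite} confirms that continuity is indispensable here, since without it a fork may coexist with a bounded set of finite traces.
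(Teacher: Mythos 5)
Your reduction to \citet[Lemma~5.3]{bcfl} is sound in outline, and the easy parts are fine: the $bv$-branch is a genuine finite loop above $s$, the case of a finite $u$ does give two non-commuting finite loops $au,bv$ at $s$ with $\{au,bv\}^\ast\subseteq T(\mathcal{S}(s))$, and \autoref{lem:finite} does copy a finite suffix verbatim into a real trace. But the step you flag as the crux is a genuine gap, and the repair you sketch does not close it. The fixed-point identity $w(w^\omega(r))=w^\omega(r)$ is indeed valid in a cd-WSTS, but the self-loop it yields lives at the intermediate limit state $w^\omega(r)$ reached partway along the $au$-branch---not at $s$ and not at $s_a$. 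The $bv$-loop only guarantees a return above $s_b\geq s$, never above $w^\omega(r)$, and you cannot replay the accelerated prefix leading back to $w^\omega(r)$ from a larger state, precisely because $\dom w^\omega$ is not upward closed (as you yourself observe). So your two finite loops are anchored at different states and cannot be interleaved: from the last limit state you can only read words of the shape $w^m z (bv)^n$, and that set is itself bounded. There is a second, independent problem: a loop $w$ extracted from \emph{inside} $u$ need not begin with $a$, so non-commutation with $bv$ does not come ``for free'' from $a\neq b$; nothing prevents $w$ and $bv$ from being powers of a common word. In short, ``locating a single state carrying two non-commuting finite loops'' is not merely delicate---it is not achievable in general from an increasing fork with transfinite $u$, and the lemma must be proved without it.

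The paper sidesteps the manufacture of fixed loops in the system entirely by arguing by contradiction. Assuming $T(\mathcal{S})\subseteq L(\mathcal{A})$ for a DFA $\mathcal{A}$ with $N-1$ states recognizing a bounded expression, it pumps $N$ alternations of the two fork branches inside $T_{\mathsf{acc}}(\mathcal{S})$, obtaining $w(bv)^Nau(bv)^N\cdots au(bv)^N$, and only then applies \autoref{lem:finite}; the resulting finite trace has the form $w'(bv)^Nau_1(bv)^N\cdots au_{N-1}(bv)^N$, where the finite unfoldings $u_1,\dots,u_{N-1}$ of $u$ may all differ---this is exactly the phenomenon that defeats your search for a fixed pair $x,y$ in the system. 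Two applications of the pigeonhole principle inside $\mathcal{A}$ then extract two \emph{fixed} words, one a power of $bv$ and one containing an occurrence of $a$ flanked by $(bv)$-blocks, whose iterations are all factors of $L(\mathcal{A})$ and which fail to commute because $a\neq b$; \citet[Lemma~5.3]{bcfl} is applied to $L(\mathcal{A})$ rather than to $T(\mathcal{S})$, yielding the contradiction. If you want to salvage your direct approach, you would have to carry out an analogous pigeonhole argument, which in effect reconstructs the paper's proof.
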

\begin{proof}
  Suppose that $\mathcal{S}$ has an increasing fork with the same
  notations as in \autoref{def-incfork}, and let $w$ in
  $\infwords$ be such that $s_0\rua{w}s$.  By monotonicity,
  we can fire from $s$ the accelerated transitions of $au$ and the
  transitions of $bv$ in any order and any number of time, hence
  \begin{equation*}
    w \{au,bv\}^\ast\subseteq T_\mathsf{acc}(\mathcal{S})\;.
  \end{equation*}

  Suppose now that $T(\mathcal{S})$ is bounded, i.e.\ that there exists
  $w_1,\dots,w_n$ such that $T(\mathcal{S})\subseteq w_1^\ast\cdots
  w_n^\ast$.  Then, there exists a DFA
  $\mathcal{A}=\tup{Q,q_0,\Sigma,\delta,F}$ such that
  $L(\mathcal{A})=w_1^\ast\cdots w_n^\ast$ and thus
  $T(\mathcal{S})\subseteq L(\mathcal{A})$.  Set $N=|Q|+1$.  We have in
  particular
  \begin{equation*}
    w (bv)^Nau(bv)^Nau\cdots au(bv)^N\in T_\mathsf{acc}(\mathcal{S})
  \end{equation*}
  with $N$ repetitions of the $(bv)^N$ factor.  By
  \autoref{lem:finite}, we can find some adequate finite sequences
  $w',u_1,\dots,u_{N-1}$ in $\Sigma^\ast$ such that
  \begin{equation*}
    w'(bv)^Nau_1(bv)^Nau_2\cdots au_{N-1}(bv)^N\in T(\mathcal{S})\;.
  \end{equation*}

  Because $T(\mathcal{S})\subseteq L(\mathcal{A})$, this word is also
  accepted by $\mathcal{A}$, and we can find an accepting run for it.
  Since $N=|Q|+1$, for each of the $N$ occurrences of the $(bv)^N$
  factor, there exists a state $q_i$ in $Q$ such that
  $\delta(q_i,(bv)^{k_i})=q_i$ for some $k_i>0$.  Thus the accepting run
  in $\mathcal{A}$ is of form
  \begin{align*}
  q_0&\ru{w'(bv)^{N-k_1-k'_1}}q_1\ru{(bv)^{k_1}}q_1\ru{(bv)^{k'_1}au_1(bv)^{N-k_2-k'_2}}q_2,\\q_2&\ru{(bv)^{k_2}}q_2\ru{(bv)^{k'_2}au_2\cdots
    au_{N-1}(bv)^{N-k_N-k'_N}}q_N,\\q_N&\ru{(bv)^{k_N}}q_N\ru{(bv)^{k'_N}}q_f\in F
  \end{align*}
  for some integers $k'_i\geq 0$.  Again, since $N=|Q|+1$, there exist
  $1\leq i<j\leq N$ such that $q_i=q_j$, hence 
  \begin{equation*}
    \delta(q_i,(bv)^{k'_i}au_i\cdots au_{j-1}(bv)^{N-k_j-k'_j})=q_i\;.
  \end{equation*}
  This implies that $\{(bv)^{k_i},(bv)^{k'_i}au_i\cdots
  au_{j-1}(bv)^{N-k_j-k'_j}\}^\ast$ is contained in the set of factors of
  $L(\mathcal{A})$ with
  \begin{equation*}
    (bv)^{k_i+k'_i}au_i\cdots au_{j-1}(bv)^{N-k_j-k'_j}\neq
    (bv)^{k'_i}au_i\cdots au_{j-1}(bv)^{N-k_j-k'_j+k_i}
  \end{equation*}
  since $a\neq b$, thus $L(\mathcal{A})$ is an unbounded language
  \citep[\lemmaautorefname~5.3]{bcfl}, a contradiction.
\end{proof}

\paragraph{Unboundedness Implies an Increasing Fork}
We follow the arguments presented on the example of
\autoref{fig:petri}, and prove that an increasing fork can always be
found in an unbounded cd-WSTS.
\begin{figure}[tb]
  \centering
  \begin{tikzpicture}[->,shorten >=1pt,initial text=,%
      node distance=1.8cm,on grid,semithick,auto,
      inner sep=2pt]
  \node(s0){$s_0$};
  \node[right=1cm of s0](si){$s_i$};
  \node[right=1.2cm of si](s){$s$};
  \node[above right=0.9cm and 2cm of s](si1){$s_i$};
  \node[below right=0.9cm and 2cm of s](sp){$s_{i+1}$};
  \node[right=1.2cm of si1](sa){$s_b$};
  \node[right=2.2cm of sp](sj){$s_j$};
  \node[right=1.2cm of sj](sb){$s_a$};
  \path[every node/.style={font=\footnotesize}]
    (s0) edge[->>] node{} (si)
    (si) edge node{$x$} (s)
    (s)  edge node{$by$} (si1)
    (s)  edge[->>,swap] node{$azu_{i+1}^\omega$} (sp)
    (si1)edge node{$x$} (sa)
    (sp) edge[->>] node{$v_{i+2}\cdots u_j^\omega$} (sj)
    (sj) edge[swap] node{$x$} (sb)
    (si) edge[dashed,bend left=25] node{$=$} (si1)
    (s)  edge[dashed, bend right=10] node{$=$} (sa)
    (si) edge[swap,dashed, bend right=30] node{$\leq$} (sj)
    (s)  edge[dashed, bend left=12] node{$\leq$} (sb);
\end{tikzpicture}
\caption{\label{fig:thrf}The construction of an increasing fork in the
  proof of \autoref{lemma-exist-fork}.}
\end{figure}
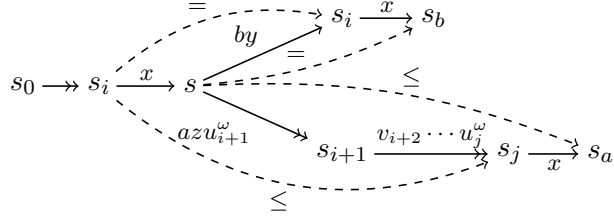

\begin{lemma}\label{lem-remove-letter}
  Let $L\subseteq\Sigma^\ast$ be an unbounded language. There exists
  $a$ in $\Sigma$ such that $a^{-1}L$ is also unbounded.
\end{lemma}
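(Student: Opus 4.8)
The plan is to argue by contraposition, exploiting the fact that $\Sigma$ is finite together with the closure properties of bounded languages granted in \autoref{sec:prelim} (closure under finite union, concatenation, and taking subsets). The crucial observation is the first-letter decomposition of $L$: every nonempty word $w\in L$ begins with some letter $a\in\Sigma$, so it can be written $w=av$ with $v\in a^{-1}L$ by definition of the left quotient. Hence
\begin{equation*}
  L\subseteq\{\varepsilon\}\cup\bigcup_{a\in\Sigma}a\cdot(a^{-1}L)\;.
\end{equation*}

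First I would record this inclusion, which is immediate from the definition $a^{-1}L=\{v\in\Sigma^\ast\mid av\in L\}$. Then I would suppose, for contradiction, that $a^{-1}L$ is bounded for \emph{every} $a$ in $\Sigma$. Since the singleton $\{a\}$ is bounded, each concatenation $a\cdot(a^{-1}L)$ is bounded, and because $\Sigma$ is finite, the finite union $\{\varepsilon\}\cup\bigcup_{a\in\Sigma}a\cdot(a^{-1}L)$ is again bounded. As $L$ is a subset of this language, $L$ itself would be bounded, contradicting the hypothesis.

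Therefore at least one letter $a\in\Sigma$ must be such that $a^{-1}L$ is unbounded, which is the claim. There is essentially no serious obstacle here: the only point requiring care is that all three closure operations invoked—concatenation with a singleton, finite union over the finite alphabet, and passage to a subset—are precisely among those listed as preserving boundedness in the preliminaries, so each step is justified without further argument. The finiteness of $\Sigma$ is what makes the union finite and hence the argument go through; it is worth flagging explicitly that the same reasoning would fail over an infinite alphabet.
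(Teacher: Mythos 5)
Your proof is correct and is essentially the paper's own argument: decompose $L$ by first letter into $\bigcup_{a\in\Sigma}a\cdot(a^{-1}L)$ and invoke closure of bounded languages under concatenation and finite union to derive a contradiction. Your version is in fact marginally more careful than the paper's, which states the decomposition as an equality and silently ignores the empty word, whereas you handle $\varepsilon$ explicitly via closure under subsets.
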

\begin{proof}
  Observe that $L = \bigcup_{a \in \Sigma}a\cdot(a^{-1}L)$. If every
  $a^{-1}L$ were bounded, since bounded languages are closed by finite
  union and concatenation, $L$ would also be bounded.
\end{proof}

\begin{definition}
  Let be $L\subseteq\Sigma^\ast$ and $w\in\Sigma^+$.  The
  \emph{removal} of $w$ from $L$ is the language $\remove w L =
  ((w^\ast)^{-1}L)\setminus w\Sigma^\ast$.
\end{definition}

\begin{lemma}\label{lem-remove-star}
  If a cd-WSTS $\mathcal{S}$ has an unbounded trace set
  $T(\mathcal{S})$ in $\Sigma^\ast$, and $L$ is an unbounded language
  with $L\subseteq T(\mathcal{S})$ then there are two words $v$ in
  $\Sigma^\ast$ and $u$ in $\Sigma^+$ such that $vu^\omega \in
  T_\mathsf{acc}(\mathcal{S}), vu \in \prefix{L}$ and
  $\remove{u}(v^{-1}L)$ is also unbounded.
\end{lemma}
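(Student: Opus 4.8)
The plan is to follow the informal argument sketched on the running examples: peel letters off the front of $L$ while keeping it unbounded, and use the wqo on states to locate a loop that can be lub-accelerated.

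First I would iterate \autoref{lem-remove-letter}. Since $L$ is unbounded, some letter $a_1$ makes $a_1^{-1}L$ unbounded; applying the lemma to $a_1^{-1}L$ yields $a_2$ with $(a_1a_2)^{-1}L$ unbounded, and so on. This builds an infinite word $a_1a_2\cdots$ all of whose prefixes $w_m=a_1\cdots a_m$ satisfy $w_m^{-1}L$ unbounded, hence nonempty. As $L\subseteq T(\mathcal{S})$ and trace sets are prefix-closed, each $w_m$ lies in $T(\mathcal{S})$, and by determinism reaches a unique state $s^{(m)}$ with $s_0\ru{w_m}s^{(m)}$. The wqo then provides $i<j$ with $s^{(i)}\leq s^{(j)}$ in the infinite sequence $s^{(0)}s^{(1)}\cdots$. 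I set $v=a_1\cdots a_i$ and $u=a_{i+1}\cdots a_j\in\Sigma^+$, so that $s^{(i)}\ru{u}s^{(j)}$ with $s^{(i)}\leq u(s^{(i)})$; this places $s^{(i)}$ in $\dom u^\omega$, whence $s_0\rua{vu^\omega}s$ for some $s$ and $vu^\omega\in T_\mathsf{acc}(\mathcal{S})$. Since $vu=w_j$ and $w_j^{-1}L$ is nonempty, $vu\in\prefix{L}$ as well, settling the first two conclusions.

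The main obstacle is the third conclusion, unboundedness of $\remove{u}(v^{-1}L)$, the difficulty being that removal strips an entire $u^\ast$-prefix rather than a single copy of $u$. Writing $M=v^{-1}L$, I would first observe that $u^{-1}M=(vu)^{-1}L=w_j^{-1}L$ is unbounded, so $(u^\ast)^{-1}M\supseteq u^{-1}M$ is unbounded too. The key structural point is that any $y\in(u^\ast)^{-1}M$ factors as $y=u^kx$ with $k$ chosen maximal, which forces $x\notin u\Sigma^\ast$ and $x\in(u^\ast)^{-1}M$, i.e.\ $x\in\remove{u}M$; hence
\[
  (u^\ast)^{-1}M\subseteq u^\ast\cdot\remove{u}M\;.
\]
Were $\remove{u}M$ bounded by some expression $w_1^\ast\cdots w_p^\ast$, the display would confine $(u^\ast)^{-1}M$ inside $u^\ast w_1^\ast\cdots w_p^\ast$ and thus make it bounded, contradicting the previous sentence. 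Therefore $\remove{u}(v^{-1}L)$ is unbounded. I expect the only delicate checks to be the maximal-factorisation inclusion above and the observation that prepending $u^\ast$ to a bounded expression keeps it bounded; both are elementary, the latter being one of the closure properties of bounded languages recalled in \autoref{sec:prelim}.
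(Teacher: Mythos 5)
Your proof follows essentially the same route as the paper's: iterate \autoref{lem-remove-letter} to build the infinite sequence $(a_i)_i$, invoke the wqo on the reached states to pick $i<j$ with $s^{(i)}\leq s^{(j)}$, set $v=a_1\cdots a_i$ and $u=a_{i+1}\cdots a_j$, and then deduce unboundedness of $\remove{u}(v^{-1}L)$ from closure of bounded languages under concatenation and subsets. Your maximal-factorisation inclusion $(u^\ast)^{-1}M\subseteq u^\ast\cdot\remove{u}M$ is in fact a slightly more careful rendering of the paper's asserted identity $u^\ast\remove{u}(v^{-1}L)=u^\ast(v^{-1}L)$ (of which only one inclusion is really needed), so the argument is correct and matches the original.
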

\begin{proof}
  By \autoref{lem-remove-letter} we can find a sequence
  $(a_i)_{i>0} \in \Sigma^\omega$ such that for all $n$ in $\nat$,
  $(a_1\cdots a_n)^{-1} L$ is unbounded.  Let $(s_i)_{i\geq 0}$ be the
  corresponding sequence of configurations in $S^\omega$, such that
  $s_i\ru{a_{i+1}}s_{i+1}$.  Because $(S,\leq)$ is a wqo, there exist $i<j$
  such that $s_i \leq s_j$.  We set $v = a_1\cdots a_i$ and $u =
  a_{i+1}\cdots a_j$, which gives us $v\cdot u^\omega \in
  T_\mathsf{acc}(\mathcal{S})$.  Remark that $v^{-1}L$ is unbounded,
  and, since $u^*\remove{u}(v^{-1}L) = u^*(v^{-1}L)$,
  $\remove{u}(v^{-1}L)$ is unbounded too.
\end{proof}

Note that it is also possible to ask that $|vu| \geq n$ for any given
$n$, which we do in the proof of the following lemma.

\begin{lemma}\label{lemma-exist-fork}
  If a cd-WSTS has an unbounded
  trace set, then it has an increasing fork.
\end{lemma}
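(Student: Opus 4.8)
The plan is to build, by iterating \autoref{lem-remove-star}, an infinite accelerated run $s_0\rua{}t_1\rua{}t_2\rua{}\cdots$ that keeps its residual trace set unbounded, and then to extract an increasing fork from a wqo repetition, following the schema of \autoref{fig:thrf}. Starting from $L_0=T(\mathcal{S})$, at stage $k$ I sit at a state $t_k$ with an unbounded $L_k\subseteq T(\mathcal{S}(t_k))$; \autoref{lem-remove-star} hands me $v_k,u_k$ with $v_ku_k^\omega\in T_\mathsf{acc}(\mathcal{S}(t_k))$, $v_ku_k\in\prefix{L_k}$ and $L_{k+1}=\remove{u_k}(v_k^{-1}L_k)$ unbounded, and I take the accelerated step $t_k\rua{v_ku_k^\omega}t_{k+1}$. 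A monotonicity argument keeps the induction alive: each word of $L_{k+1}$ is fired from some $u_k^m(v_k(t_k))\leq t_{k+1}$, hence from $t_{k+1}$, so $L_{k+1}\subseteq T(\mathcal{S}(t_{k+1}))$.

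Two properties of this run do the work. By continuity the accelerated state $t_{k+1}=\lub\{u_k^m(v_k(t_k))\}$ is a fixpoint of $u_k$ (its open, hence upward-closed, domain contains $t_{k+1}$, and $u_k(t_{k+1})=\lub\{u_k^{m+1}(v_k(t_k))\}=t_{k+1}$), giving a finite increasing self-loop $t_{k+1}\ru{u_k}t_{k+1}$. Moreover, by definition of $\remove{u_k}$, no word of $L_{k+1}$ starts with $u_k$. Applying the wqo to $t_1,t_2,\dots$ I get $i<j$ with $t_i\leq t_j$; the connecting accelerated run $t_i\rua{W}t_j$, with $W=v_iu_i^\omega\cdots v_{j-1}u_{j-1}^\omega$ starting with $v_iu_i$, is increasing and is a candidate accelerated branch, while the self-loop $t_i\ru{u_{i-1}}t_i$ of the previous stage is a candidate finite branch.

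The one real difficulty is that \autoref{def-incfork} requires the two branches to start with distinct letters, which these two loops need not. Here I use the refinement noted after \autoref{lem-remove-star}, demanding $|v_ku_k|\geq|u_{k-1}|$ at every stage. Then $v_iu_i\in\prefix{L_i}$ has length at least $|u_{i-1}|$ and, since no word of $L_i$ starts with $u_{i-1}$, $v_iu_i$ does not start with $u_{i-1}$ either. So I can write $u_{i-1}=P\,b\,\tau$ and $v_iu_i=P\,a\,\rho$ with $P,\tau,\rho\in\Sigma^\ast$ and $a\neq b$ in $\Sigma$, and I relocate the fork to $q=P(t_i)$, still reached by an accelerated run $s_0\rua{}q$.

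At $q$ the branches separate on their first letter. The finite branch is the rotated self-loop $q\ru{b\tau P}q$, which returns to $q$ and begins with $b$. Writing $W=P\,a\,W'$ with $W'\in\infwords$, the accelerated branch is $q\rua{a\,W'\,P}q'$, which begins with $a$; cutting $W$ after $P$ stays an accelerated run (if the cut falls inside the loop $u_i$ one finishes the partial copy and re-accelerates, the lub being unchanged), and monotonicity lifts $t_i\ru{P}q$ along $t_i\leq t_j$ to $t_j\ru{P}q'\geq q$, so $q\leq q'=s_a$. As $a\neq b$, this is the increasing fork of \autoref{def-incfork}.
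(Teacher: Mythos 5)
Your proof is correct and follows essentially the same route as the paper's: the same iteration of \autoref{lem-remove-star} with the length side-condition $|v_ku_k|\geq|u_{k-1}|$, the same wqo extraction of $t_i\leq t_j$, and the same divergence between the self-loop produced by the last acceleration and the outgoing prefix (your $u_{i-1}$ vs $v_iu_i$ is the paper's $u_i$ vs $v_{i+1}u_{i+1}$ up to an index shift), with the fork relocated to the end of the longest common prefix exactly as in \autoref{fig:thrf}. You are in fact somewhat more explicit than the paper on two points it leaves implicit---that $t_{k+1}$ is a fixpoint of $u_k$ by continuity, and that a cut falling inside the accelerated loop can be repaired by finishing the partial copy and re-accelerating.
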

\begin{proof}
  We define simultaneously three infinite sequences, $(v_i,u_i)_{i>0}$ of
  pairs of words in $\Sigma^\ast\times\Sigma^+$, $(L_i)_{i\geq 0}$ of
  unbounded languages, and $(s_i)_{i\geq 0}$ of initial
  configurations: let $L_0\eqdef T(\mathcal{S})$ and $s_0$ the initial
  configuration of $\mathcal S$, and
  \begin{itemize}
  \item $v_{i+1},u_{i+1}$ are chosen using \autoref{lem-remove-star}
    s.t.\ $v_{i+1}u_{i+1}^\omega$ is in
    $T_\mathsf{acc}(\mathcal{S}(s_i))$, $v_{i+1}u_{i+1}$ is in $\prefix
    {L_i}$, $|v_{i+1}\cdot u_{i+1}| \geq |u_{i}|$ if $i>0$, and
    $\remove{u_{i+1}}(v^{-1}_{i+1}L_i)$ is unbounded;
  \item $s_i \rua{v_{i+1}u^\omega_{i+1}} s_{i+1}$;
  \item $L_{i+1}\eqdef\remove{u_{i+1}}(v^{-1}_{i+1}L_i)$.
  \end{itemize}
  Since $\remove{u_{i+1}}(v^{-1}_{i+1}L_i) \subseteq
  T(\mathcal{S}(s_{i+1}))$, we can effectively
  iterate the construction by the last point above.

  Due to the wqo, there exist $i<j$ such that $s_i \leq s_j$. By
  construction $u_{i}$ is not a prefix of $v_{i+1}u_{i+1}$ and
  $|v_{i+1}u_{i+1}| \geq |u_{i}|$, so there exist $a\neq b$ in $\Sigma$ and
  a longest common prefix $x$ in $\Sigma^\ast$ such that $u_{i} = xby$
  and $v_{i+1}u_{i+1} = xaz$ for some $y,z$ in $\Sigma^\ast$.

  We exhibit an increasing fork by selecting $s,s_a,s_b$ such that
  (see \autoref{fig:thrf}):
  \begin{align*}
  s_{i} \ru{x} s && s \rua{azu_{i+1}^\omega
  v_{i+2}u_{i+2}^\omega\cdots v_ju_j^\omega x} s_a && s \ru{byx} s_b\;.\tag*{\qedhere}
  \end{align*}
\end{proof}

We will refine the arguments of \autoref{lemma-exist-fork} in
\autoref{sub:ubound}.  In particular, n%
ote that a strategy where the
$v_{i+1}u_{i+1}$ sequences are the shortest possible defines a means
to perform an \emph{exhaustive} search for this particular brand of
increasing forks, this at no loss of generality as far as trace boundedness
is concerned.  Thus our semi-algorithm is actually an algorithm.%

\section{Undecidable Cases}\label{sec:undec}
This section establishes that the decidability of the trace
boundedness property for cd-WSTS disappears if we consider more
general systems or a more general property.  Unsurprisingly, trace
boundedness is undecidable on general systems like 2-counter Minsky
machines (\autoref{sub:nwsts}).  It also becomes undecidable if we
relax the determinism condition, as shown by considering the case of
labeled reset Petri nets (\autoref{sub:nondet}).  We conclude by
proving that post$^\ast$ flattability is undecidable for deterministic
WSTS (\autoref{sub:posta}).  Note that completeness is irrelevant in
all the following reductions.

\subsection{General Systems}\label{sub:nwsts}
We demonstrate that the trace boundedness problem is
undecidable for deterministic Minsky machines, by reduction
from their halting problem.  We could rely on Rice's Theorem, but find
it more enlightening to present a direct proof that turns a Minsky
machine $\mathcal{M}$ into a new one $\mathcal{M}'$, which halts if
and only if $\mathcal{M}$ halts.  The new machine has a bounded trace
set if it halts, and an unbounded trace set %
otherwise.

Let us first recall that a deterministic \emph{Minsky machine} is a
tuple $\mathcal{M}=\tup{Q,\delta,C,q_0}$ where $Q$ is a finite set of
labels, $\delta$ a finite set of actions, $C$ a
finite set of counters that take their values in $\mathbb{N}$, and
$q_0\in Q$ an initial label.  A label $q$ identifies a unique action
in $\delta$, which is of one of the following three forms:
\begin{align*}
    q:&\;\mathtt{if}\;c=0\;\mathtt{goto}\:q'\;\mathtt{else}\;c\text-\text-;\;\mathtt{goto}\:q''\\
    q:&\;c\text{++};\;\mathtt{goto}\:q'\\
    q:&\;\mathtt{halt}
\end{align*}
where $q'$ and $q''$ are labels and $c$ is a counter.  A configuration
of $\mathcal{M}$ is a pair $(q,m)$ with $q$ a label in $Q$ and
$m$ a marking in $\mathbb{N}^C$, and leads to a single next configuration
$(q',m')$ by applying the action labeled by $q$---which should be
self-explaining---if different from $\mathtt{halt}$.  A run of
$\mathcal{M}$ starts with configuration $(q_0,\vec{0})$ and halts if
it reaches a configuration that labels a $\mathtt{halt}$ action.  We
define the corresponding LTS semantics by $(q,m)\ru{q}(q',m')$ if
$(q,m)$ and $(q',m')$ are two successive configurations of
$\mathcal{M}$; note that there is at most one possible transition from
any $(q,m)$ configuration, thus this LTS is deterministic.  It is
undecidable whether a 2-counter Minsky machine halts \citep{minsky}.

We also need a small technical lemma that relates the size of bounded
expressions with the size of some special words.
\begin{definition}
  The \emph{size} of a bounded expression $w_1^\ast\cdots w_n^\ast$ is
  $\sum_{i=1}^n|w_i|$. 
\end{definition}
\begin{lemma}\label{lem:lowb}
  Let $v_m\in(\Sigma\uplus\Delta)^\ast$ be a word of form
  $u_1x_1u_2x_2u_3\dots u_mx_m$ with
  $m\in\mathbb{N}$, $u_i\in\Sigma^+$, $x_i\in\Delta^+$ and
  $|x_i|<|x_{i+1}|$ for all $i$.  If there exist
  $w_1,\dots,w_n$ in $(\Sigma\uplus\Delta)^\ast$ such that
  $v_m\in w_1^\ast\cdots w_n^\ast$, then $\sum_{i=1}^n|w_i|\geq m$. 
\end{lemma}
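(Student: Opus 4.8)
The plan is to fix a witnessing factorization of $v_m$ and exhibit an injection from the $m$ blocks $x_1,\dots,x_m$ into the set of \emph{phase slots} of the words $w_1,\dots,w_n$, whose cardinality is exactly $\sum_{i=1}^n|w_i|$. First I would discard empty factors (replacing any $w_i=\varepsilon$ changes neither $w_1^\ast\cdots w_n^\ast$ nor $\sum_i|w_i|$), so that without loss of generality each $w_i$ lies in $(\Sigma\uplus\Delta)^+$, and pick exponents $k_1,\dots,k_n\geq 0$ with $v_m=w_1^{k_1}\cdots w_n^{k_n}$. This factorization cuts $v_m$ into consecutive \emph{super-blocks}, the $j$-th carrying the occurrence of $w_j^{k_j}$; the content of super-block $j$ is periodic of period $p_j=|w_j|$, so every position inside it has a well-defined \emph{phase} in $\{0,\dots,p_j-1\}$. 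I then define the slot of a $\Delta$-block $x_i$ to be the pair $(j,r)$ where the first letter of $x_i$ lies in super-block $j$ at phase $r$; there are exactly $\sum_i|w_i|$ possible slots.

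The core of the argument is to show this slot assignment is injective, as the lemma then follows from $m\leq\sum_i|w_i|$. Suppose two blocks $x_i$ and $x_{i'}$ with $i<i'$ share a slot $(j,r)$. Then both start inside the same super-block $j$ with the same phase, so their starting positions differ by $cp_j$ for some $c\geq 1$; in particular $x_i$ starts strictly to the left of $x_{i'}$ within super-block $j$, which forces $x_i$ to be \emph{internal} to that super-block (it is followed, still inside the super-block, by a $\Sigma$-letter and then eventually by $x_{i'}$). Using the $p_j$-periodicity to shift $x_i$ rightwards by $cp_j$, I would show that the maximal $\Delta$-run starting at the first position of $x_{i'}$ has length exactly $|x_i|$: periodicity copies the $|x_i|$ consecutive $\Delta$-letters of $x_i$, together with the $\Sigma$-letter immediately after it, onto the positions starting at $x_{i'}$. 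But $x_{i'}$ is itself a maximal $\Delta$-block, hence of length $|x_{i'}|$, whence $|x_i|=|x_{i'}|$, contradicting the strict growth $|x_i|<|x_{i'}|$ for $i<i'$.

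The only delicate point, and the one I would treat with care, is the boundary effect: the shift of $x_i$ by $cp_j$ must stay within the periodic window $[0,k_jp_j)$ of super-block $j$ for the periodicity identity to hold at every relevant offset. This is exactly guaranteed by the collision hypothesis: since $x_{i'}$ lies to the right of $x_i$ in the same super-block and $|x_i|<|x_{i'}|$, the position $(\text{start of }x_{i'})+|x_i|$ sits strictly inside $x_{i'}$, hence strictly inside the super-block, so the $|x_i|+1$ offsets I need to copy never leave the periodic range. With injectivity established, $m\leq\sum_{i=1}^n|w_i|$ follows immediately.
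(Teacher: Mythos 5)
Your strategy is genuinely different from the paper's (which counts alphabet alternations: $\mathsf{alt}(v_m)=2m$ while $\mathsf{alt}(w_i^{j_i})\leq 2|w_i|$, the latter because $j_i\leq 2$ whenever $w_i$ mixes both alphabets), but it has a real gap, and it sits exactly at the step you singled out as the ``only delicate point.'' From ``$(\text{start of }x_{i'})+|x_i|$ sits strictly inside $x_{i'}$'' you infer ``hence strictly inside the super-block.'' That inference is unjustified: the slot of $x_{i'}$ only records where its \emph{first} letter falls, and $x_{i'}$ may straddle the boundary between super-blocks $j$ and $j+1$. When it does, the offset $|x_i|$ leaves the periodic window $[0,k_jp_j)$, the letter at position $(\text{start of }x_{i'})+|x_i|$ is not governed by the period of $w_j$, and your contradiction evaporates.

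Worse, the injectivity claim itself is false. Take $\Sigma=\{\sigma\}$, $\Delta=\{\delta\}$, and $v_2=\sigma\delta\sigma\delta\delta=u_1x_1u_2x_2$ with $x_1=\delta$, $x_2=\delta\delta$, together with the factorization $v_2=w_1^2w_2^1$ where $w_1=\sigma\delta$ and $w_2=\delta$. The super-block of $w_1^2$ occupies positions $0$ through $3$; $x_1$ starts at position $1$ and $x_2$ at position $3$, both at phase $1$, so both receive the slot $(1,1)$. (The letter at position $4=3+|x_1|$ belongs to the second super-block and is a $\delta$, not the $\sigma$ that periodicity inside the first super-block would predict.) The lemma's conclusion $2\leq|w_1|+|w_2|=3$ of course still holds, but not by injectivity of your map. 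To rescue the approach you would have to analyse these boundary collisions separately---bounding how many blocks can share a slot and compensating with slots left unused---which is additional work your proposal does not contain; the paper's alternation count sidesteps the issue entirely.
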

\newcommand{\alt}{\mathsf{alt}}
\begin{proof}
  We consider for this proof the number of alphabet alternations $\alt(w)$ of
  a word $w$ in $(\Sigma\uplus\Delta)^\ast$, which we define using the
  unique decomposition of $w$ as $y_1\cdots y_{\alt(w)}$ where each $y_i$
  factor is non empty and in an alphabet different from that of its 
  successor.  For instance, $\alt(v_m)$ is $2m$.  We relate the number of
  alternations produced by words $w_i$ of a bounded expression for $v_m$
  with their lengths.  More precisely, we show that, if
  \begin{equation*}
    v_m=w_1^{j_1}\cdots w_n^{j_n}\;,
  \end{equation*}
  then for all $1\leq i\leq n$
  \begin{equation}\label{eq:alt}
    \alt(w_i^{j_i})\leq 2|w_i|\;.
  \end{equation}
  Clearly, \eqref{eq:alt} holds if $|w_i|=0$ or $j_i=0$.  If a word $w_i$ is in
  $\Sigma^+$ or $\Delta^+$, then $\alt(w_i^j)=1$ for all $j>0$ and
  \eqref{eq:alt} holds again.  Otherwise, the word $w_i$ contains at
  least one alternation, and then $j_i\leq 2$:
  otherwise there would be two maximal $x$ factors (in $\Delta^+$) in
  $v_m$ with the same length.  As each alternation inside
  $w_i$ requires at least one more symbol, we verify \eqref{eq:alt}.
  Therefore,
  \begin{equation*}
    2m = \alt(w_1^{j_1}\cdots w_n^{j_n})\leq \sum_{i=1}^n\alt(w_i^{j_i})\leq
    2\sum_{i=1}^n|w_i|\:.\tag*{\qedhere}
  \end{equation*}
\end{proof}
\begin{proposition}\label{propminsky}
  Trace boundedness is undecidable for 2-counter Minsky machines.
\end{proposition}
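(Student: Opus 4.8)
The plan is to reduce from the halting problem for deterministic $2$-counter Minsky machines, which is undecidable \citep{minsky}. Given such a machine $\mathcal{M}=\tup{Q,\delta,C,q_0}$, I would build a deterministic Minsky machine $\mathcal{M}'$ that halts if and only if $\mathcal{M}$ halts, and whose trace set is \emph{finite} (hence bounded) when it halts but \emph{unbounded} when it does not. The whole difficulty is to \emph{force} unboundedness on every non-halting run, irrespective of whether $\mathcal{M}$'s counters stay bounded or diverge: a bare simulation will not do, since a non-halting $\mathcal{M}$ may cycle through finitely many configurations and thus have a bounded (eventually periodic) trace. The tool that makes the forcing pay off is \autoref{lem:lowb}: it suffices to guarantee that, when $\mathcal{M}$ runs forever, the trace of $\mathcal{M}'$ contains, for every $m$, a word $v_m=u_1x_1\cdots u_mx_m$ with the $u_i\in\Sigma^+$, the $x_i\in\Delta^+$ and $|x_i|<|x_{i+1}|$.

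Concretely, I would split the alphabet as $\Sigma\uplus\Delta$ with $\Delta=\{a\}$ a fresh symbol reserved for ``dumping'', and let every genuine simulation transition, as well as the bookkeeping ones, emit symbols of $\Sigma$. The machine $\mathcal{M}'$ keeps a \emph{clock} counter $t$, incremented once per simulated step, and runs the loop: simulate one step of $\mathcal{M}$ (emitting its label); if $\mathcal{M}$ has reached a $\mathtt{halt}$ instruction, halt; otherwise increment $t$, emit a marker symbol of $\Sigma$, and \emph{dump} the current value of $t$ by emitting that many copies of $a$ while transferring $t$ to an auxiliary counter $t'$, then restore $t$ from $t'$ (emitting $\Sigma$-symbols) and iterate. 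After the $i$-th simulated step the emitted $\Delta$-block has length exactly $i$, so consecutive blocks strictly increase, and every block is flanked by at least one $\Sigma$-symbol.

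Correctness is then immediate in both directions. If $\mathcal{M}$ halts, so does $\mathcal{M}'$ after finitely many steps; its unique (deterministic) run is finite, so $T(\mathcal{M}')$ is a finite and therefore bounded language. If $\mathcal{M}$ does not halt, $\mathcal{M}'$ loops forever and its trace, being the set of prefixes of the single infinite run, contains the word $v_m$ described above for every $m$; were $T(\mathcal{M}')\subseteq w_1^\ast\cdots w_n^\ast$ bounded, \autoref{lem:lowb} would give $\sum_{i=1}^n|w_i|\geq m$ for all $m$, which is impossible for fixed $n$ and fixed $w_i$. Hence $T(\mathcal{M}')$ is bounded iff $\mathcal{M}$ halts, and a decision procedure for trace boundedness would decide halting.

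The one point that needs care is that $\mathcal{M}'$ as described uses four counters ($\mathcal{M}$'s two, plus $t$ and $t'$), whereas the statement is about \emph{2-counter} machines. I would discharge this with the standard Gödel encoding of a multi-counter machine into a $2$-counter one, taking care to label every micro-step belonging to a dump phase with $a$ and every other micro-step with a symbol of $\Sigma$. The dump of a value $v$ then becomes a single \emph{contiguous} run of $a$'s whose length is a strictly increasing function of $v$, so the block structure required by \autoref{lem:lowb} survives the encoding (the blocks need only be strictly increasing, not of length exactly $i$), while halting and finiteness of the trace are manifestly preserved. I expect this interplay---designing the dump gadget so that unboundedness is forced exactly on non-halting runs, and arranging the two-alphabet block structure so that it remains robust under the reduction to two counters---to be the main obstacle; the counter programming itself is routine.
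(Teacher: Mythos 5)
Your proposal is correct and follows essentially the same route as the paper: a halting reduction where each simulated step is interleaved with a strictly longer block over a second alphabet (the paper realizes your ``clock and dump'' via two extra counters shuttled back and forth, emitting $(q'q'')^i$ blocks), unboundedness of the non-halting trace then following from \autoref{lem:lowb}, and a final G\"odel encoding down to two counters. The only cosmetic difference is that you take $\Delta=\{a\}$ a single fresh symbol, whereas under the paper's LTS semantics each instruction emits its own label, so $\Delta$ should instead be the set of labels of the dump-phase instructions---\autoref{lem:lowb} applies verbatim either way.
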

\begin{proof}
  We reduce from the halting problem for a 2-counter Minsky machine
  $\mathcal{M}$ with initial counters at zero.  We construct a 4-counter
  Minksy machine $\mathcal{M}'$ such that $T(\mathcal{M}')$ is bounded
  if and only if $\mathcal{M}$ halts. 

  The machine $\mathcal{M}'$ adds two extra counters $c_3$ and $c_4$,
  initially set to zero, and new labels and actions to $\mathcal{M}$.
  These are used to insert longer and longer sequences of transitions at each
  step of the original machine: each label $q$ gives rise to the
  creation of five new labels $q',q'',q^\dagger,q^\ddagger,q^\flat$ that
  identify the following actions
  \begin{align*}
    q':&\;\mathtt{if}\;c_3=0\;\mathtt{goto}\:q^\dagger\;\mathtt{else}\;c_3\text-\text-;\;\mathtt{goto}\:q''\\
    q'':&\;c_4\text{++};\;\mathtt{goto}\:q'\\
    q^\dagger:&\;\mathtt{if}\;c_4=0\;\mathtt{goto}\:q^\flat\;\mathtt{else}\;c_4\text-\text-;\;\mathtt{goto}\:q^\ddagger\\
    q^\ddagger:&\;c_3\text{++};\;\mathtt{goto}\:q^\dagger\\
    q^\flat:&\;c_3\text{++};\;\mathtt{goto}\:q
  \end{align*}
  and each subinstruction $\mathtt{goto}\:q$ in the original actions is
  replaced by $\mathtt{goto}\:q'$.  The machine
  $\mathcal{M'}$ halts iff $\mathcal{M}$ halts.  If it 
  halts, then its trace set $T(\mathcal{M}')$ is a singleton $\{w\}$, and
  thus is bounded.  If it does not halt, then its trace set is the set of
  finite prefixes of an infinite trace of form
  \begin{align*}
    q_0(q'_1q''_1)^0q'_1u_1q_1(q'_2q''_2)^1&q'_2u_2q_2(q'_3q''_3)^2q'_3u_3\\&\cdots
    q_i(q'_{i+1}q''_{i+1})^iq'_{i+1}u_{i+1}q_{i+1}\cdots
  \end{align*}
  where $q_0q_1q_2\cdots q_iq_{i+1}\cdots$ is the corresponding trace of
  the execution of $\mathcal{M}$, and the $u_j$ are sequences in
  $\{q_j^\dagger,q_j^\ddagger,q_j^\flat\}^\ast$.  By
  \autoref{lem:lowb}, no expression $w_1^\ast\cdots w_n^\ast$ of
  finite size can be such that $T(\mathcal{M}')\subseteq w_1^\ast\cdots
  w_n^\ast$. 

  We then conclude thanks to the (classical) encoding of our 4-counter
  machine $\mathcal{M}'$ into a 2-counter machine $\mathcal{M}''$ using
  G\"odel numbers \citep{minsky}: indeed, the encoding preserves the
  trace set (un-)boundedness of~$\mathcal{M}'$.
\end{proof}

\subsection{Nondeterministic WSTS}\label{sub:nondet}
Regarding nondeterministic WSTS with uniformly bounded branching, we reduce state
boundedness for reset Petri nets, which is
undecidable~\citep[Theorem 13]{lrpn}, to trace boundedness for labeled reset
Petri nets.  From a reset Petri net we construct a labeled reset Petri net
similar to that of \autoref{fig:unb}, which hides the computation details
thanks to a relabeling of the transitions.  The new net consumes
tokens using two concurrent, differently labeled transitions, so that
the trace set can attest to state unboundedness.

Let us first recall that a marked \emph{Petri net} is a tuple
$\mathcal{N}=\tup{P,\Theta,f,m_0}$ where $P$ and $\Theta$ are finite sets of
places and transitions, $f$ a flow function from $(P\times
\Theta)\cup(\Theta\times P)$ to $\mathbb{N}$, and $m_0$ an
initial marking in $\mathbb{N}^P$.  The set of \emph{markings}
$\mathbb{N}^P$ is ordered component-wise by $m\leq m'$ iff $\forall p\in
P$, $m(p)\leq m'(p)$, and has the zero marking $\vec{0}$ as least
element, such that $\forall p\in P$, $\vec{0}(p) \eqdef 0$.  A transition
$t\in \Theta$ can be fired in a marking $m$ if $f(p,t)\geq m(p)$ for all
$p\in P$, and reaches a new marking $m'$ defined by $m'(p) \eqdef m(p) -
f(p,t) + f(t,p)$ for all $p\in P$.

A \emph{labeled} Petri net (without $\varepsilon$ labels) further
associates a labeling letter-to-letter homomorphism $\sigma:
\Theta\rightarrow\Sigma$, and can be seen as a finite branching WSTS
$\tup{\mathbb{N}^P,m_0,\Sigma,{\rightarrow},{\leq}}$ where
$m\ru{\sigma(t)}m'$ if the transition $t$ can be fired in $m$ and
reaches $m'$.  Determinism of such a system is decidable in
\textsc{ExpSpace}~\citep{yen}.  An important class of deterministic
Petri nets is defined by setting $\Sigma=\Theta$ and
$\sigma=\text{id}_\Theta$, thereby obtaining the so-called \emph{free
  labeled} Petri nets.

A \emph{reset} Petri net $\mathcal{N}=\tup{P,\Theta,R,f,m_0}$ is
a Petri net $\tup{P,\Theta,f,m_0}$ with a set $R\subseteq
P\times\Theta$ of reset arcs.  The marking $m'$ reached after a
transition $t$ from some marking $m$ is now defined for all $p$ in $P$
by
\begin{equation*}
m'(p)\eqdef\begin{cases}f(t,p)&\text{if }(p,t)\in R\\m(p) - f(p,t) +
f(t,p)&\text{otherwise.}\end{cases}
\end{equation*}

\begin{proposition}\label{propnondet}
  Trace boundedness is undecidable for labeled reset Petri nets.
\end{proposition}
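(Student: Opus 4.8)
The plan is to reduce from \emph{state boundedness} of reset Petri nets, which is undecidable \citep[Theorem~13]{lrpn}: from a reset Petri net $\mathcal{N}=\langle P,\Theta,R,f,m_0\rangle$ I would build a labeled reset Petri net $\mathcal{N}'$ whose trace set is bounded iff the reachability set $\mathsf{Post}^\ast_{\mathcal{N}}(m_0)$ is finite. The guiding intuition is exactly that of \autoref{fig:unb}: a place whose content can grow without bound is drained by two differently labeled transitions, and the resulting $\{c,d\}$-factors make the trace language unbounded. The first observation I would record is that, since $P$ is finite, $\mathsf{Post}^\ast_{\mathcal{N}}(m_0)$ is infinite iff the total number of tokens $\sum_{p\in P}m(p)$ is unbounded over reachable markings $m$; hence it suffices to expose this total through the trace set.

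For the construction, $\mathcal{N}'$ would keep all places and transitions of $\mathcal{N}$ but relabel every transition of $\Theta$ to a single letter $a$, thereby hiding the computation. I would add two control places $\mathsf{run}$ (marked initially) and $\mathsf{clean}$, a fresh drain place $r$, and the following transitions: every original transition loops on $\mathsf{run}$; a transition $b$ (labeled $b$) moves the token from $\mathsf{run}$ to $\mathsf{clean}$; for each place $p\in P$ a move transition $\mathsf{mv}_p$ (labeled $a$) that loops on $\mathsf{clean}$, consumes one token from $p$, and produces one in $r$; and the two concurrent transitions $c$ and $d$ (labeled $c$ and $d$) that loop on $\mathsf{clean}$ and each consume one token from $r$. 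The control places enforce a clean phase separation: a run first simulates $\mathcal{N}$ (trace in $a^\ast$), optionally fires the single $b$, then performs moves and drains (trace in $\{a,c,d\}^\ast$), so $T(\mathcal{N}')\subseteq a^\ast\cup a^\ast b\{a,c,d\}^\ast$.

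For correctness I would argue both directions. If $\mathsf{Post}^\ast_{\mathcal{N}}(m_0)$ is finite, say every reachable marking has total at most $M$, then at the instant $b$ is fired at most $M$ tokens sit in $P$; since no original transition can fire after $b$, the number of moves is at most $M$, and each drain consumes a token previously moved into $r$, so the clean phase has length at most $2M$. Hence $T(\mathcal{N}')\subseteq a^\ast\cup a^\ast b\{a,c,d\}^{\le 2M}$, a finite union of bounded languages of the form $a^\ast b z$ with $z$ fixed, and so $T(\mathcal{N}')$ is bounded. Conversely, if the total is unbounded, then for every $N$ I reach a marking of total $\ge N$, fire $b$, empty $P$ into $r$ via moves until $r=N$, and fire any word of $\{c,d\}^{\le N}$; thus every word of $\{c,d\}^\ast$ is a factor of some trace, and taking $u=c$, $v=d$ with $uv\neq vu$ the criterion \citep[Lemma~5.3]{bcfl} shows $T(\mathcal{N}')$ is unbounded.

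The main obstacle I anticipate is that reset arcs make the per-transition change of the total token count non-constant, so one cannot maintain a synchronous running ``total'' place during the simulation as one could for plain Petri nets. This is precisely why I defer the counting to a dedicated clean phase, in which no resets occur and tokens are shifted into $r$ one at a time, and why the control places $\mathsf{run}/\mathsf{clean}$ are needed to forbid interleaving further simulation with draining (which is what keeps the bounded case bounded). The remaining verifications---closure of bounded languages under finite union and concatenation, and that $\mathcal{N}'$ is a genuine labeled reset net with a letter-to-letter labeling---are routine, and, as in the other reductions of this section, completeness plays no role.
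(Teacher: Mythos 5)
Your proposal is correct and follows essentially the same route as the paper: a reduction from state boundedness of reset Petri nets via a phase-switching control token and two differently labeled draining transitions whose $\{c,d\}$-suffixes witness unboundedness. The only difference is cosmetic---the paper's transitions $t^c_p,t^d_p$ consume directly from each place $p$ of $\mathcal{N}$ rather than routing tokens through an intermediate drain place $r$, so the extra $a$-labeled move transitions (and the worry about maintaining a running total) are unnecessary, though harmless.
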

\begin{proof}
  Let $\mathcal{N}=\tup{P,\Theta,R,f,m_0}$ be a reset Petri net.  We
  construct a $\sigma$-labeled reset Petri net $\mathcal{N}'$ which is
  bounded if and only if $\mathcal{N}$ is state bounded, thereby
  reducing the undecidable problem of state boundedness in reset Petri
  nets~\cite{boundedRPN}.

\begin{figure}[bt]
\centering
\begin{tikzpicture}[->,>=stealth',shorten >=1pt,initial text=,%
                    node distance=1.4cm,on grid,semithick,auto,
                    inner sep=2pt,every transition/.style={minimum
                    width=5mm,minimum height=10mm}]
  \node[draw,rounded corners,text width=4cm, text height=0.6cm,inner
                    sep=8pt,text badly
                    centered,color=black!80,dotted,fill=black!15]
                    (box) at (1.5,2.87) {$\mathcal{N}$};
  \node[place,tokens=1,label=below:$p_+$](pp){};
  \node[transition,above=2.87cm of pp,label=left:$a$](t){}
    edge[pre,bend right](pp) edge[post,bend left](pp);
  \node[place,right=2.8cm of t,label=right:$p$](p){};
  \node[place,right=2.8cm of pp,label=below:$p_-$](pm){};
  \node[transition,right=of pp,label=below:$b$](tm){$t_-$}
    edge[pre](pp) edge[post](pm);
  \node[transition,above left=1.4cm and 1cm of pm,label=left:$c$](tb){$t_p^c$}
    edge[pre](p) edge[pre,bend left](pm) edge[post,bend right](pm);
  \node[transition,above right=1.4cm and 1cm of pm,label=right:$d$](tc){$t_p^d$}
    edge[pre](p) edge[pre,bend right](pm) edge[post,bend left](pm);
\end{tikzpicture}
\caption{\label{fig:rpn}The labeled reset Petri net $\mathcal{N}'$ of the proof
of \autoref{propnondet}.}
\end{figure}
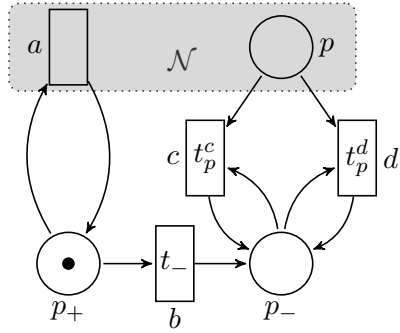
  We construct $\mathcal{N}'$ from $\mathcal{N}$ by adding two new
  places $p_+$ and $p_-$, two sets of new transitions $t^c_p$ and
  $t^d_p$ for each $p$ in $P$, where each $t^a_p$ for $a$ in $\{c,d\}$
  consumes one token from $p_-$ and from $p$ and puts one back in $p_-$,
  and one new transition $t_-$ that takes one token from $p_+$ and puts
  it in $p_-$.  All the transitions of $\mathcal{N}$ are modified to
  take one token from $p_+$ and put it back.  Finally, we set
  $m_0(p_+)=1$ and $m_0(p_-)=0$ in the new initial marking.  The
  labeling homomorphism $\sigma$ from
  $\Theta\uplus\{t_-\}\uplus\{t^a_p\mid a\in\{c,d\},p\in P\}$ to
  $\{a,b,c,d\}$ is defined by $\sigma(t)=a$ for all
  $t\in\Theta$, $\sigma(t_-)=b$, $\sigma(t^c_p)=c$ and $\sigma(t^d_p)=d$ for
  all $p$ in $P$.  See \autoref{fig:rpn} for a pictorial representation
  of $\mathcal{N}'$.  Its behavior is to simulate $\mathcal{N}$ while a
  token is in $p_+$ with $a^\ast$ for trace, and to switch
  nondeterministically to a consuming behavior when transferring this
  token to $p_-$ through $t_-$.  Then, $\mathcal{N}'$ consumes tokens
  from the places of $\mathcal{N}$ and produces strings in
  $\{c,d\}^\ast$ through the $t^c_p$ and $t^d_p$ transitions.

  If $\mathcal{N}$ is not state bounded, then $\sum_{p\in P}m(p)$ for
  reachable markings $m$ of $\mathcal{N}'$ is not bounded either.  Thus
  an arbitrary number of $t^c_p$ and $t^d_p$ transitions can be fired,
  resulting in a trace set containing any string in $\{c,d\}^\ast$ as
  suffix for $\mathcal{N}'$, which entails that it is not trace bounded.
  Conversely, if $\mathcal{N}$ is trace bounded, then $\sum_{p\in P}m(p)$ is
  bounded by some constant $n$ for all the reachable markings $m$ of
  $\mathcal{N}'$, hence $T(\mathcal{N}')$ is included in the set of
  prefixes of $a^\ast b\{c,d\}^n$, a bounded language. 
\end{proof}

\subsection{Trace vs.\ Post$^\ast$ Flattability}\label{sub:posta}
The decidability of trace boundedness calls for the investigation of the
decidability of less restrictive properties.  Two natural candidates
are post$^\ast$ flattability, which was proven undecidable for
Minsky machines by \citet{flataccel}, and cover flattability, which is
already known to be undecidable for cd-WSTS~\citep{cwsts2}.

We show that post$^\ast$ flattability is still undecidable for cd-WSTS.
To this end, we reduce again from state boundedness, this time in lossy
channel systems~\cite{lrpn}, to post$^\ast$ flattability in an
unlabeled \emph{functional} lossy channel system, a deterministic
variant introduced by \citet{cwsts1}.  Somewhat analogously to
\autoref{propnondet}, the idea is to consume the channel
contents on one end while adding an unbounded sequence to its other
end, so that the set of reachable configurations reveals state
unboundedness.

A \emph{lossy channel system} (LCS) is a WSTS
$\mathcal{C}=\tup{Q\times M^\ast,(q_0,\varepsilon),\{!,?\}\times
  M,{\rightarrow},{\preceq}}$ where $Q$ is a finite set of states,
$q_0\in Q$ the initial state, $M$ a finite set of messages,
$(q,w)\preceq(q',w')$ if $q=q'$ and $w\preceq w'$---the scattered
subword relation---, and where the transition relation is defined from
a finite relation $\delta\subseteq Q\times\{!,?\}\times M\times Q$
with
\begin{align*}
(q,w)&\ru{!a}(q',w')& \text{ if }(q,!,a,q')\in\delta\text{ and }\exists
  w''\in M^\ast,\\&& w''\preceq w\text{ and }w'\preceq w''a\\
(q,w)&\ru{?a}(q',w')& \text{ if }(q,?,a,q')\in\delta\text{ and }\exists
  w''\in M^\ast,\\&& aw''\preceq w\text{ and }w'\preceq w''\;.
\end{align*}
One can easily extend this definition to accommodate for a finite set of
channels and no-op transitions.

A \emph{functional lossy channel system}~\cite{cwsts1} is defined in the
same way except for the transition relation, which is now a partial
function:
\begin{align*}
(q,w)&\ru{!a}(q',wa)&\text{if }(q,!,a,q')\in\delta\\
(q,uaw)&\ru{?a}(q',w)&\!\!\!\!\!\!\!\text{if }(q,?,a,q')\in\delta\text{ and }u\in(M\setminus\{a\})^\ast\!.
\end{align*}
A functional LCS thus loses its channel contents lazily.
There are some immediate relations between a LCS $\mathcal{C}$ and its
corresponding functional LCS $\mathcal{C}'$, i.e.\ for the same $Q$, $M$,
and $\delta$:
$\mathsf{Post}^\ast_{\mathcal{C}'}((q_0,\varepsilon))\subseteq\mathsf{Post}^\ast_{\mathcal{C}}((q_0,\varepsilon))$,
$\mathsf{Cover}_{\mathcal{C}'}((q_0,\varepsilon))=\mathsf{Cover}_{\mathcal{C}}((q_0,\varepsilon))$,
and $T(\mathcal{C}')=T(\mathcal{C})$.

Note that the following proposition is \emph{not} a trivial
consequence of the undecidability of cover-flattability in LCS
\citep{cwsts2}, since in the case of functional LCS the
$\mathsf{Cover}$ and $\mathsf{Post}^\ast$ sets do not coincide.

\begin{proposition}\label{propreachlcs}
  Post$^\ast$ flattability is undecidable for functional lossy channel
  systems.
\end{proposition}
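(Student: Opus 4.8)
The plan is to reduce state boundedness for lossy channel systems, which is undecidable by \citet{lrpn}, to post$^\ast$ flattability for functional LCS. Given an LCS $\mathcal{C}=\tup{Q,q_0,M,\delta}$, I would build a functional LCS $\mathcal{C}'$ operating over a finite set of channels (using the routine multi-channel, no-op extension mentioned above) in two phases: a \emph{simulation} phase replaying the transitions of $\mathcal{C}$ on a main channel, and a \emph{draining} phase, entered through one dedicated transition, that repeatedly removes a message from the head of the main channel while appending to an output channel a separator $\$$ followed by a block $\#^i$ of a fresh marker whose length $i$ strictly increases with the round number $i$. This realizes the ``consume the contents on one end while adding an unbounded sequence on the other end'' idea: an auxiliary counter channel, maintained reliably by a rotate-and-increment loop (the lazy-loss semantics of functional LCS never discards anything once we use pairwise distinct separator symbols, so the transition function stays deterministic per label), records $i$ so that the $i$-th block has exactly $i$ markers. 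Draining a main channel of length $\ell$ then yields on the output channel the increasing-block word $\$\#^1\$\#^2\cdots\$\#^\ell$, whose control sequence has precisely the shape $u_1x_1u_2x_2\cdots u_\ell x_\ell$ of \autoref{lem:lowb}, with the marker-emitting factors $x_i\in\{\#\}^+$ of strictly increasing length and the remaining bookkeeping factors $u_i$ nonempty.

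The easy direction is almost immediate. If $\mathcal{C}$ is state bounded then its reachable channel contents have bounded length, and since the functional simulation reaches a subset of $\mathsf{Cover}_{\mathcal{C}}$ on the main channel, the main channel of $\mathcal{C}'$ stays bounded on every run; draining only shortens it and bounds the counter and output channels in turn, so $\mathsf{Post}^\ast_{\mathcal{C}'}$ is finite. A finite reachability set is trivially post$^\ast$ flattable: take the flattening whose DFA accepts the finite (hence bounded) language consisting of one fireable witnessing sequence per reachable configuration, which reaches exactly $\mathsf{Post}^\ast_{\mathcal{C}'}$.

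For the converse, suppose $\mathcal{C}$ is state unbounded, so the functional simulation reaches main-channel contents of unbounded length and hence, for every $\ell$, a drained configuration $c_\ell\in\mathsf{Post}^\ast_{\mathcal{C}'}$ whose output channel holds $\$\#^1\$\#^2\cdots\$\#^\ell$. The crux, which I would isolate as a claim, is that $c_\ell$ is reached \emph{only} by control words of the increasing-block shape above: by per-label determinism of the functional transition function together with the counter discipline, the $i$-th marker block can be emitted only after exactly $i$ draining rounds, so every control word reaching $c_\ell$ carries the factors $x_i=\#^i$ in order. Assuming toward a contradiction a flattening $w_1^\ast\cdots w_n^\ast$ reaching all of $\mathsf{Post}^\ast_{\mathcal{C}'}$, each $c_\ell$ is reached by a word of this shape lying in $w_1^\ast\cdots w_n^\ast$, whence \autoref{lem:lowb} forces $\sum_{i=1}^n|w_i|\geq\ell$; as $\ell$ is unbounded, no fixed bounded expression reaches every $c_\ell$, so $\mathcal{C}'$ is not post$^\ast$ flattable. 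Combined with the easy direction, $\mathcal{C}'$ is post$^\ast$ flattable iff $\mathcal{C}$ is state bounded, and undecidability follows.

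The main obstacle is the hard direction, and within it two points demand care. First, the witnessing configurations $c_\ell$ must genuinely lie in $\mathsf{Post}^\ast_{\mathcal{C}'}$ and not merely in $\mathsf{Cover}_{\mathcal{C}'}=\mathsf{Cover}_{\mathcal{C}}$; this is exactly where the lazy, separator-controlled loss semantics of functional LCS is indispensable, since a careless lossy encoding would only place the $c_\ell$ in the cover (cf.\ the remark preceding the statement that the result is \emph{not} a consequence of cover-flattability undecidability). Second, the claim that every control word reaching $c_\ell$ exhibits strictly increasing marker blocks must be argued from the per-label determinism of the functional semantics and from a counter mechanism whose $i$-th emission is gated on $i$ completed rounds. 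Once both are secured, the combinatorial work is done by \autoref{lem:lowb}, precisely as in the Minsky reduction of \autoref{propminsky} and in the spirit of \autoref{propnondet}.
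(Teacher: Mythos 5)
Your reduction is sound and starts from the same source problem as the paper (state boundedness of lossy channel systems, via the ``consume one end, produce on the other'' idea), but the core combinatorial argument is genuinely different. The paper's gadget is much leaner: it adds only two control states $q_!,q_?$ and two messages $c,d$, alternates one read of an original message with one free write of $c$ or $d$, and observes that reaching the configuration whose channel holds a given $w\in\{c,d\}^\ast$ forces every trace to perform the writes of $w$ in exactly that order (functional writes are append-only and never lost); unboundedness of the traces needed to reach \emph{all} such configurations then follows from the two-non-commuting-words criterion \citep[Lemma~5.3]{bcfl}, since $cd\neq dc$. You instead manufacture a single deterministic family of configurations $c_\ell$ carrying strictly increasing marker blocks and conclude with \autoref{lem:lowb}, at the price of a counter gadget and several auxiliary channels; this buys a quantitative size bound $\sum_i|w_i|\geq\ell$ and mirrors the Minsky reduction of Proposition~\ref{propminsky}, whereas the paper's binary-choice gadget needs no counter at all. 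One remark on your ``crux'' claim: the justification via per-label determinism and the counter discipline is both more than you need and slightly misstated---functional reads \emph{do} discard everything preceding the first occurrence of the read letter, so the counter must be protected by a unary-plus-end-marker encoding---but the claim itself follows immediately from the fact that writes to the output channel are append-only and lossless, so the output content of $c_\ell$ already determines the subsequence of separator and marker writes in every trace reaching it. Your easy direction (a finite $\mathsf{Post}^\ast$ set is trivially flattable via a DFA with finite language) coincides with the paper's loop-removal argument.
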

\begin{proof}Let us consider a LCS $\mathcal{C}=\tup{Q\times
M^\ast,(q_0,\varepsilon),\{!,?\}\times M,{\rightarrow},{\preceq}}$ and its
associated functional system $\mathcal{C}'$.  We construct a new
functional LCS $\mathcal{C}''$ which is post$^\ast$ flattable if and
only if $\mathcal{C}$ is state bounded, thereby reducing from the
undecidable state boundedness problem for lossy channel
systems~\cite{boundedRPN}.  Let us first remark that $\mathcal{C}$ is
state bounded if and only if $\mathcal{C}'$ is state bounded, if and
only if there is a maximal length $n$ to the channel content $w$ in any
reachable configuration
$(q,w)\in\mathsf{post}^\ast_{\mathcal{C}'}((q_0,\varepsilon))$.

We construct $\mathcal{C}''$ by adding two new states $q_?$ and $q_!$ to
$Q$, two new messages $c$ and $d$ to $M$, and a set of new
transitions to~$\delta$:
\begin{align*}
  &\{(q,?,a,q_!)\mid a\in M,q\in Q\}\\\cup\;&\{(q_!,!,a,q_?)\mid a\in\{c,d\}\}\\\cup\;&\{(q_?,?,a,q_!)\mid a\in M\}\;.
\end{align*}

If $\mathcal{C}'$ is state bounded, the writing transitions from $q_!$ can
only be fired up to $n$ times since they are interspersed with reading
transitions from $q_?$, hence $\mathcal{C}''$ has its channel
content lengths bounded by $n$.  Therefore, $\mathcal{C}''$ is
equivalent to a DFA with $(Q\uplus\{q_!,q_?\})\times
(M\uplus\{c,d\})^{\leq n}$ as state set and $\{!,?\}\times
(M\uplus\{c,d\})$ as alphabet.  By removing all the loops via a
depth-first traversal from the initial configuration
$(q_0,\varepsilon)$, we obtain a DFA $\mathcal{A}$ with a finite---and
thus bounded---language, but with the same set of reachable states.
Hence $\mathcal{C}''$ is post$^\ast$ flattable using~$\mathcal{A}$.

Conversely, if $\mathcal{C}'$ is not state bounded, then an arbitrarily
long channel content can be obtained in $\mathcal{C}''$, before
performing a transition to $q_!$ and producing an arbitrarily long
sequence in $\{c,d\}^\ast$ in the channel of $\mathcal{C}''$, witnessing
an unbounded trace suffix.  Observe that, due to the functional semantics,
$\mathcal{C}''$ has no means to remove these symbols, thus it has to put
them in the channel in the proper order, by firing the transitions from
$q_!$ in the same order.  Therefore no DFA with a bounded language can
be synchronized with $\mathcal{C}''$ and still allow all these
configurations to be reached: $\mathcal{C}''$ is not post$^\ast$
flattable.
\end{proof}

\section{Complexity of Trace Boundedness}\label{sec:cmplx}
Well-structured transition systems are a highly abstract class of
systems, for which no complexity upper bounds can be given in general.
Nevertheless, it is possible to provide precise bounds for several
concrete classes of WSTS, and even to employ generic proof techniques
to this end.  \autoref{tab:rrcmplx} sums up our complexity results,
using the \emph{fast-growing} complexity classes of \citep{schmitz13}.
\begin{table}[t]
  \caption{\label{tab:rrcmplx}Summary of complexity results for trace
    boundedness.} 
  \centering{\small
  \begin{tabular}{ccc}
    \toprule
    Petri nets & Affine counter systems & Functional LCS\\
    \midrule
    \textsc{ExpSpace}-complete & \textsc{Ack}-complete &
    \textsc{HAck}-complete\\
    \bottomrule
  \end{tabular}}
\end{table}

\subsection{Fast Growing Hierarchy}
Our complexity bounds are often adequately expressed in terms of a
family of fast growing functions, namely the generators
$(F_\alpha)_\alpha$ of the \emph{Fast Growing
  Hierarchy}~\citep{fastgr}, which form a hierarchy of ordinal-indexed
functions $\mathbb{N}\to\mathbb{N}$.  The first non primitive-recursive
function of the hierarchy is obtained for $\alpha=\omega$,
$F_\omega(n)=F_{n+1}(n)$ being a variant of the Ackermann function, and
eventually majorizes any primitive-recursive function.
Similarly, the first non multiply-recursive function is defined by
$\alpha=\omega^\omega$ and eventually majorizes any multiply-recursive
function.

Following \citep{schmitz13}, we define $\mathbf{F}_\alpha$
as the class of problems decidable using resources bounded by
$O(F_\alpha(p(n)))$ for instance size $n$ and some reasonable function
$p$ (formally, $p$ in $\bigcup_{\beta<\alpha}\mathscr{F}_{\beta}$
using the \emph{extended Grzegorczyk hierarchy}~\citep{fastgr}).
Since $F_3$ is already non elementary, the traditional distinctions
between space and time, or between deterministic computations and
nondeterministic ones, are irrelevant.  This gives rise to the
\emph{Ackermannian} complexity class
$\text{\textsc{Ack}}\eqdef\mathbf{F}_\omega$ and the
\emph{hyper-Ackermannian} complexity class $\text{\textsc{HAck}}\eqdef\mathbf{F}_{\omega^\omega}$.

\subsection{Lower Bounds}
Let us describe a generic recipe for establishing lower bounds: Given
a system $\mathcal{S}$ that simulates a space-bounded Turing machine
$\mathcal{M}$, hence with a finite number of different configurations
$n_c$, assemble a new system $\mathcal{S}'$ that first non
deterministically computes some $N$ up to $n_c$ (this is also known as
a ``weak'' computer for $n_c$), then simulates the runs of
$\mathcal{S}$ but decreases some counter holding $N$ at each
transition.  Thus $\mathcal{S}'$ terminates and has a bounded trace
set, but still simulates $\mathcal{M}$.  Now, add two loops on two
different symbols $a$ and $b$ from the configurations that simulate
the halting state of $\mathcal{M}$, and therefore obtain a system
which is trace bounded if and only if $\mathcal{M}$ does not halt.
Put differently, we reduce the control-state reachability problem in
terminating systems to the trace boundedness problem.

We instantiate this recipe in the cases of Petri nets in
\autoref{sub:exp}, using \citet{lipton76}'s results, for
reset Petri nets (and thus affine counter systems) in
\autoref{sub:rst} using \citet{acklcs}'s results, and for lossy
channel systems in \autoref{sub:lcs}, using \citet{CS-lics08}'s
results.  Although the complexity for Petri nets is quite
significantly lower than for the other classes of systems, we also
derive a non primitive-recursive lower bound on the \emph{size} of a
bounded expression for a trace bounded Petri net (\autoref{sub:prim}).  

\subsubsection{\textsc{ExpSpace}-Hardness for Petri Nets}\label{sub:exp}
Let us first observe that, since \citet{kmtree}-like constructions
always terminate in Petri nets, the search for an increasing fork is
an algorithm (instead of a semi-algorithm).  However, the complexity
of this algorithm is not primitive-recursive~\citep{cardoza}.

Meanwhile, we extend the \textsc{ExpSpace}-hardness result of
\citet{lipton76} for the Petri net coverability problem to the trace
boundedness problem.  As shown in~\citep{Schmitz11} using an extension
of the techniques of \citet{rackoff}, trace boundedness is in
\textsc{ExpSpace} for Petri nets, thus trace boundedness is
\textsc{ExpSpace}-complete for Petri nets.

\begin{proposition}\label{propexppn}
  Deciding the trace boundedness of a deterministic Petri net is
  \textsc{ExpSpace}-hard.
\end{proposition}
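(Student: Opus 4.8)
The plan is to adapt the \textsc{ExpSpace}-hardness proof of \citet{cardoza} for Petri net coverability, following the generic recipe sketched above. Recall that Lipton constructs, from a deterministic machine $\mathcal{M}$ with at most $n_c\leq 2^{2^{cn}}$ distinct configurations (e.g.\ a counter machine whose counters are bounded by $2^{2^n}$), a polynomial-size Petri net $\mathcal{S}$ that faithfully simulates $\mathcal{M}$: a distinguished place $p_{\mathsf{halt}}$ becomes coverable if and only if $\mathcal{M}$ halts. The crux of Lipton's construction, which I would import wholesale, is a family of polynomial-size gadgets that \emph{weakly compute} the doubly-exponential value $n_c$ into an auxiliary place despite the impossibility of zero-tests in Petri nets; this is precisely where the exponential space lower bound is concentrated, and I would not reprove it.

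To obtain a \emph{deterministic} net in the sense of this paper I would work throughout with \emph{free labeled} nets, setting $\Sigma=\Theta$ and $\sigma=\mathrm{id}_\Theta$. Any Petri net is then deterministic as a WSTS, since each $\ru{t}$ is a partial function even when several transitions are simultaneously enabled, so the classical nondeterminism (and ``cheating'') of Lipton's net is harmless. From $\mathcal{S}$ I would assemble $\mathcal{S}'$ as follows: first run the weak-computation gadget to load a budget place with up to $n_c$ tokens; then simulate $\mathcal{S}$, letting every simulation transition additionally consume one budget token; finally attach two self-loop transitions with distinct labels $a\neq b$, each enabled exactly when $p_{\mathsf{halt}}$ is marked and consuming no budget.

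For correctness I would establish the equivalence ``$\mathcal{M}$ halts $\iff T(\mathcal{S}')$ is unbounded''. If $\mathcal{M}$ halts, its run visits fewer than $n_c$ configurations, so some run of $\mathcal{S}'$ loads enough budget, reaches $p_{\mathsf{halt}}$, and then fires $a$ and $b$ in every order and any number of times; hence $w\{a,b\}^\ast\subseteq T(\mathcal{S}')$ for the trace $w$ reaching $p_{\mathsf{halt}}$, and the trace set is unbounded by the criterion of \citep[\lemmaautorefname~5.3]{bcfl} (exactly as in \autoref{lem:fork-implies-unbounded}), since $ab\neq ba$. Conversely, if $\mathcal{M}$ does not halt, then by faithfulness no run covers $p_{\mathsf{halt}}$, so the $a,b$ loops never fire; every run then consists of the weak-computation phase followed by at most $n_c$ budget-limited simulation steps, so all runs are finite of uniformly bounded length, $T(\mathcal{S}')$ is finite, and hence bounded. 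As the halting problem for these machines is \textsc{ExpSpace}-hard and \textsc{ExpSpace} is closed under complement, boundedness is \textsc{ExpSpace}-hard.

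The main obstacle lies entirely in the imported ingredient: faithfully reproducing Lipton's weak computation of $n_c$ by a polynomial-size net, and checking the two properties the reduction needs. First, that cheating runs, which can only undershoot counters, never spuriously cover $p_{\mathsf{halt}}$, so that the negative direction holds. Second, that both the gadget and the budgeted simulation terminate on every run, guaranteeing the uniform length bound that makes the negative-case trace set finite. Matching the budget exactly to the configuration count $n_c$—large enough for halting runs, yet forcing non-halting runs to exhaust it before any looping can manifest—is the one delicate point; everything downstream (determinism via free labeling, the $\{a,b\}^\ast$ unboundedness argument, and finiteness in the negative case) is routine.
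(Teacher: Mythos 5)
Your proposal is correct and follows essentially the same route as the paper's proof: both adapt Lipton's coverability construction by weakly computing the configuration count $n_c$ into a budget place, taxing each simulation step by one budget token so that non-covering runs have uniformly bounded (hence finite) trace sets, using free labeling for determinism, and attaching two distinctly labeled loops triggered by coverability of the target marking to produce $\{a,b\}^\ast$-suffixes. Your statement of the polarity ($\mathcal{M}$ halts iff the trace set is \emph{unbounded}) is the right one, and your appeal to closure of \textsc{ExpSpace} under complement correctly finishes the reduction.
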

\begin{figure*}[t!]
\centering\hspace*{-1.7cm}
\begin{tikzpicture}[->,>=stealth',shorten >=1pt,initial text=,%
                    node distance=1.4cm,on grid,semithick,auto,
                    inner sep=2pt,every transition/.style={minimum width=8mm,minimum height=2mm}]
  \node[draw,rounded corners,text width=2cm, text height=3cm,inner
                    sep=10pt,text badly
                    centered,fill=black!15,draw=black!80,dotted]
                    (box) at (9.8,1.4) {$\mathcal{N}$};
  \node[place](P){$|P|$};
  \node[place,tokens=1,above right=2.8cm and 1.4cm of P](start){};
  \node[place,right=of P](end){};
  \node[transition,above right=1.4cm and 0.7cm of P](t1){}
    edge[pre](P)
    edge[pre](end)
    edge[post](start);
  \node[place,right=of end](n){$n$};
  \node[place,right=of start](save){};
  \node[transition,above right=1.4cm and 0.7cm of end](t2){}
    edge[pre](save)
    edge[pre,bend right](end)
    edge[post](n)
    edge[post,bend left](end);
  \node[place,right=of n](stop){};
  \node[place,right=of save](run){};
  \node[transition,above right=1.4cm and 0.7cm of n](t3){}
    edge[pre](n)
    edge[pre](stop)
    edge[post](save)
    edge[post](run);
  \node[transition,below=1.1cm of n,minimum width=2mm,minimum
                    height=8mm] (t4){}
    edge[pre,bend right](stop)
    edge[post,bend left](end);
  \node[transition,above=1.1cm of save,minimum width=2mm,minimum
                    height=8mm] (t5){}
    edge[pre,bend right](start)
    edge[post,bend left](run);
  \node[transition,above right=1.4cm and 0.7cm of stop](t6){}
    edge[pre](run)
    edge[post](stop);
  \node[place,right=of t6](mult){};
  \node[place,right=of mult,tokens=1,label=below:$p_t$](pt){};
  \node[transition,above=of mult](t7){}
    edge[pre,bend right](run)
    edge[pre](pt)
    edge[post,bend left](run)
    edge[post,swap]node{4}(mult);
  \node[transition,below=of mult](t8){}
    edge[pre](mult)
    edge[pre,bend right](stop)
    edge[post](pt)
    edge[post,bend left](stop);
  \node[transition,right=2.8cm of t7](t9){}
    edge[pre](pt);
  \node[transition,below=0.9cm of t9](t10){}
    edge[pre](pt);
  \node[below=0.9cm of t10](t11){\vdots};
  \node[transition,below=1cm of t11](t12){}
    edge[pre](pt);
  \node[place,right=of t9](p1){};
  \node[place,right=of t10](p2){};
  \node[right=of t11](p3){\vdots};
  \node[place,right=of t12](p4){};
  \node[right=4.2cm of pt](m){$m$};
  \path (p1) edge (m) (p2) edge (m) (p4) edge (m);
  \node[place,right=1.9cm of m,label=right:$p_h$](ph){};
  \node[transition,right=0.6cm of m,minimum width=2mm,minimum
                    height=8mm](tm){}
     edge[pre](m)
     edge[post](ph);
  \node[transition,above=of ph,label=above:$a$](ta){}
     edge[pre,bend left](ph) edge[post,bend right](ph);
  \node[transition,below=of ph,label=below:$b$](tb){}
     edge[pre,bend right](ph) edge[post,bend left](ph);
\end{tikzpicture}
\caption{\label{fig:lipton}The Petri net $\mathcal{N}'$ of the proof
of \autoref{propexppn}.}
\end{figure*}
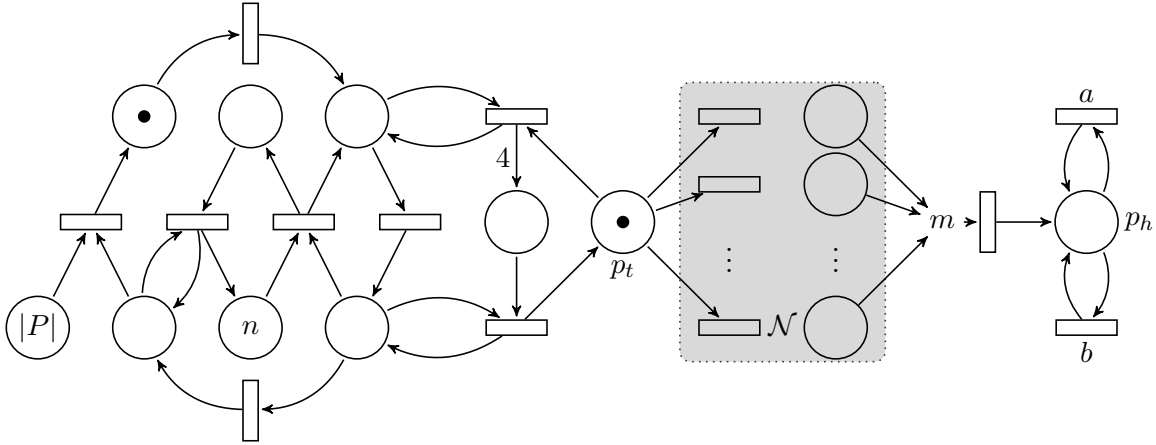
\begin{proof}%
  The \textsc{ExpSpace} hardness of deciding whether a
  Petri net has a bounded trace set can be shown by adapting a well-known
  construction by \citet{lipton76}---see also the description given by
  \citet{esparza}---for the \textsc{ExpSpace}-hardness of the
  coverability problem in Petri nets.  We refer the reader to their
  construction of an $O(n^2)$-sized $2^{2^n}$-bounded Petri net
  $\mathcal{N}$ that weakly simulates a $2^n$-space bounded Turing
  machine $\mathcal{M}$, such that a marking greater than some marking
  $m$ can be reached in $\mathcal{N}$ if and only if $\mathcal{M}$
  halts. 

  We construct a new free labeled Petri net $\mathcal{N}'$ from
  $\mathcal{N}=\tup{P,\Theta,f,m_0}$ and the marking $m$.  Since the places
  in $\mathcal{N}$ are bounded by $2^{2^n}$, only $n_c\eqdef2^{2^{n^{|P|}}}$
  different configurations are reachable from $m_0$ in $\mathcal{N}$,
  therefore we can limit the length of all the computations in $\mathcal{N}$
  to $n_c$ and still obtain the same reachability set. 

  We initially plug a subnet that ``weakly'' computes some $N\leq n_c$
  in a new place $p_t$ (displayed in the left part of
  \autoref{fig:lipton}), in less than $kn_c$ steps for some constant
  $k$.  This subnet only uses a constant size and an initial
  submarking of size $O(|P|+n)$.  We then simulate $\mathcal{N}$ but
  modify its transitions to consume one token from $p_t$ each time.
  Finally, a new transition that consumes $m$ from the subnet for
  $\mathcal{N}$ adds one token in another new place $p_h$ that allows
  two new different transitions $a$ and $b$ to be fired at will; see
  \autoref{fig:lipton}.
  
  A run of $\mathcal{N}'$ either reaches $p_h$ and can then have any
  string in $\{a,b\}^\ast$ as a suffix, or is of length bounded by
  $(k+1)n_c$.  Hence, $T(\mathcal{N}')$ is trace unbounded if and only
  if a run of $\mathcal{N}$ reaches some $m'\geq m$, if and only if
  the $2^n$-space bounded Turing machine $\mathcal{M}$ halts, which
  proves the \textsc{ExpSpace}-hardness of deciding the trace
  boundedness of a Petri net.
\end{proof}

\subsubsection{\textsc{Ack}-Hardness for Affine Counter Systems}\label{sub:rst}
\Citet{acklcs} shows that reset Petri nets (and thus affine counter
systems) can simulate Minsky
machines with counters bounded by
$F_k(x)$
for some finite $k$ and $x$.  Thus we can encode a
$F_\omega(n)$ space-bounded Turing machine using a $2^{F_\omega(n)}$-bounded
Minsky machine.  Since
\begin{equation*}
2^{F_\omega(n)}=2^{F_{n+1}(n)} \leq
F_2\!\left(F_{n+1}(n)\right)\leq F^2_{n+2}\!\left(n\right)\leq F_{n+3}\!\left(n+1\right)\,,
\end{equation*}
we can simulate this Minsky machine with a polynomial-sized reset
Petri net, and we get:
\begin{proposition}\label{prop:npr-acs}
  Trace boundedness of reset Petri nets is not primitive-recursive,
  more precisely it is hard for \textsc{Ack}.
\end{proposition}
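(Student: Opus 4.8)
The plan is to instantiate the generic lower-bound recipe described at the start of \autoref{sec:cmplx}, reducing from an arbitrary problem in the class $F_\omega$. By the identification made above, any such problem is decided by a Turing machine $\mathcal{M}$ running in space $F_\omega(p(n))$ for some polynomial $p$ and instance size $n$, and its acceptance problem is therefore $F_\omega$-hard (indeed complete). First I would bound the number of distinct configurations of $\mathcal{M}$ by $n_c = 2^{F_\omega(p(n))}$ (adjusting $p$ to absorb the polynomial state- and head-position factors, and taking a binary tape alphabet), so that an accepting computation reaches a halting configuration within $n_c$ steps, while any longer computation is merely looping.

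The heart of the reduction is to realize the budget $n_c$ inside a reset Petri net of size polynomial in $n$. Here I would invoke \citet{acklcs}: reset Petri nets can simulate Minsky machines whose counters are bounded by $F_k(x)$, using a net of size polynomial in $k$ and $x$. The inequality chain displayed just above shows $2^{F_\omega(p(n))} = 2^{F_{p(n)}(p(n))} \leq F_{p(n)+3}(p(n)+1)$, so taking $k = p(n)+3$ and $x = p(n)+1$, both polynomial in $n$, yields a polynomial-sized net that weakly computes $n_c$ into a fresh budget place $p_t$ and can then simulate an $n_c$-bounded Minsky machine encoding $\mathcal{M}$.

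With this gadget in hand I would assemble the net $\mathcal{N}'$ exactly as in the recipe and in the proof of \autoref{propexppn}: a polynomial subnet first weakly computes $n_c$ in $p_t$; the simulation of $\mathcal{M}$ then consumes one token of $p_t$ per step, so that $\mathcal{N}'$ terminates and, absent the final gadget, has a finite and hence bounded trace set; finally, a transition guarded by the simulated halting configuration enables two loops on distinct letters $a\neq b$. A run that reaches the halting gadget may be followed by any word of $\{a,b\}^\ast$, which is unbounded by \citep[\lemmaautorefname~5.3]{bcfl}, and this occurs iff a halting configuration is reachable within budget $n_c$, iff $\mathcal{M}$ accepts. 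Since at the scale of $F_\omega$ the distinctions between space and time and between deterministic and nondeterministic computation vanish, the class is closed under complement, so the sign of the equivalence (un)boundedness versus acceptance is immaterial for hardness; the reduction is logspace, establishing $F_\omega$-hardness. As $F_\omega$ eventually majorizes every primitive-recursive function, non-primitive-recursiveness follows.

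The step I expect to be the main obstacle is the second paragraph: keeping the net polynomial while forcing a budget as large as $2^{F_\omega(p(n))}$. This is precisely where the \citet{acklcs} simulation and the explicit majorization $2^{F_\omega(p(n))} \leq F_{p(n)+3}(p(n)+1)$ are essential, since a counter represented naively would be far too large. A secondary and more routine point is the faithfulness of the reset-net simulation, namely ensuring that the lossy weak computation of $n_c$ together with the budgeted simulation never spuriously reaches the halting gadget when $\mathcal{M}$ does not accept; this is inherited from the standard Lipton-style constructions already used in \autoref{propexppn}.
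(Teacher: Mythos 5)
Your proposal is correct and follows essentially the same route as the paper: it instantiates the generic lower-bound recipe (weakly compute the budget $n_c$, simulate while decrementing it, then branch on two distinct letters from the halting configurations), using \citet{acklcs}'s simulation of $F_k$-bounded counters by polynomial-sized reset Petri nets together with the majorization $2^{F_\omega(p(n))}\leq F_{p(n)+3}(p(n)+1)$. The paper's own proof sketch is merely terser, pointing directly at \citet{acklcs}'s Theorem~7.1 construction for termination hardness and replacing the single outgoing transition on the halting label by two differently labeled ones.
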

\begin{proof}[Proof sketch]
  The construction is almost exactly the same as for the proof of
  \citet{acklcs}'s Theorem~7.1 of hardness of termination.  One simply
  has to replace extended instructions using reset transitions as
  explained in \citet{acklcs}'s Section~6, and to replace the single
  outgoing transition on $\ell_\omega$ by two different transitions,
  therefore yielding an unbounded trace set.
\end{proof}

\subsubsection{\textsc{HAck}-Hardness for Lossy Channel Systems}\label{sub:lcs}
\Citet{CS-lics08} show that LCS can weakly compute any
multiply-recursive function, and manage to simulate perfect
channel systems (i.e.\ Turing machines) of size bounded by such
functions, thereby obtaining a non multiply-recursive lower bound for
LCS reachability.  We prove that the same bound holds for
trace boundedness.

\newcommand{\egdef}{\stackrel{\mbox{\begin{scriptsize}def\end{scriptsize}}}{=}}
\newcommand{\obracew}[2]{{\overset{#2}{\overbrace{#1}}}}
\newcommand{\Lim}{{\mathit{Lim}}}
\newcommand{\FF}{{\mathfrak{F}}}
\newcommand{\Nat}{{\mathbb{N}}} %

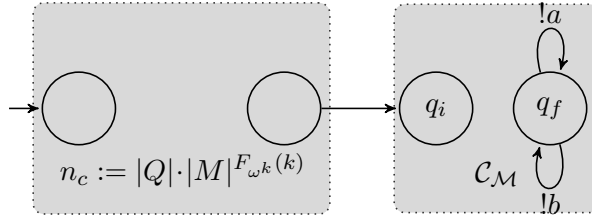
\begin{figure}[tb]
\centering
\begin{tikzpicture}[->,>=stealth',shorten >=1pt,initial text=,%
                    node distance=2cm,on grid,semithick,auto,
                    inner sep=2pt]
  \node[draw,rounded corners,text width=3.2cm, text height=2cm,inner
                    sep=10pt,text centered,fill=black!15,draw=black!80,dotted] at (2.35,0){$n_c:=|Q|\cdot|M|^{F_{\omega^k}(k)}$};
  \node[draw,rounded corners,text width=2cm, text height=2cm,inner
                    sep=10pt,text
                    centered,fill=black!15,draw=black!80,dotted] at (6.5,0){$\mathcal{C}_\mathcal{M}$};
  \node(s0){};
  \node[state,right=1cm of s0](s1){};
  \node[state,right=2.7cm of s1](s2){};
  \node[state,right=2cm of s2](s3){$q_i$};
  \node[state,right=1.5cm of s3](s4){$q_f$};
  \path
  (s0) edge node{} (s1)
  (s2) edge node{} (s3)
  (s4) edge[loop above] node{$!a$} ()
  (s4) edge[loop below] node{$!b$} ();
  
\end{tikzpicture}
\caption{\label{fig:LCShard} The lossy channel system
                    $\mathcal{C}_\mathcal{M}'$ for the proof of
                    Proposition~\ref{propnmrlcs}.}
\end{figure}
\begin{proposition}\label{propnmrlcs}%
  Trace boundedness of functional lossy channel systems is not
  multiply-recursive, more precisely it is hard for
  \textsc{HAck}.
\end{proposition}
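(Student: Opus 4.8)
The plan is to instantiate the generic lower-bound recipe of \autoref{sec:cmplx}, this time reducing from the control-state reachability problem for perfect channel systems of size bounded by an $F_{\omega^\omega}$-valued function, which is hard for $F_{\omega^\omega}$ by \citet{CS-lics08}. The diagonalisation $F_{\omega^\omega}(k)=F_{\omega^k}(k)$ is what makes this level reachable: I would import from \citet{CS-lics08} both their gadget weakly computing $F_{\omega^k}(k)$ with an LCS of size polynomial in $k$, and their lossy simulation $\mathcal{C}_\mathcal{M}$ of a perfect channel system $\mathcal{M}$ with state set $Q$, message alphabet $M$, and channel length bounded by $F_{\omega^k}(k)$, whose lossless runs faithfully mirror $\mathcal{M}$.

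Following \autoref{fig:LCShard}, the system $\mathcal{C}_\mathcal{M}'$ would first weakly compute the configuration count $n_c=|Q|\cdot|M|^{F_{\omega^k}(k)}$, keep it as the length of an auxiliary channel serving as a counter, then run $\mathcal{C}_\mathcal{M}$ while consuming one counter unit per simulated step, and finally offer two writing self-loops $!a$ and $!b$ at the final control state $q_f$. Since $n_c$ bounds the number of distinct configurations, a run can reach $q_f$ iff it does so within $n_c$ steps, so the counter never severs an accepting run yet forces every other run to stop after $O(n_c)$ steps. The analysis then follows the same dichotomy as \autoref{propexppn}: if $\mathcal{M}$ halts, some run reaches $q_f$ and the loops contribute $\{!a,!b\}^\ast$ as a trace suffix, which is unbounded because $!a\,!b\neq!b\,!a$ and every word of $\{!a,!b\}^\ast$ is then a factor of a trace \citep[Lemma~5.3]{bcfl}; if $\mathcal{M}$ does not halt, $q_f$ is never reached and the counter caps all runs, leaving a finite, hence bounded, trace set. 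Thus trace boundedness decides reachability of $q_f$, completing the reduction, and it transfers verbatim from general to \emph{functional} LCS because the two share the same trace set ($T(\mathcal{C}')=T(\mathcal{C})$) and the pure-write loops $!a,!b$ behave identically under the functional semantics.

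The size bookkeeping is routine: $\mathcal{C}_\mathcal{M}'$ stays polynomial in $k$ while effectively simulating an $F_{\omega^\omega}(k)$-bounded machine, which yields hardness for the whole class $F_{\omega^\omega}$. The step I expect to be the real obstacle is importing the \citet{CS-lics08} machinery correctly under the simultaneous \emph{functional} and \emph{lossy} constraints: I must argue that the weak computation of $n_c$ admits a run producing a value large enough not to truncate a genuine accepting simulation, while no amount of message loss can spuriously reach $q_f$ when $\mathcal{M}$ diverges, nor turn the terminating computation-and-simulation phase (absent the loops) into an unbounded trace set. This faithfulness-under-lossiness argument is the usual delicate core of LCS hardness proofs; once it is in place, the commutation argument for $!a,!b$ and the counter accounting are immediate.
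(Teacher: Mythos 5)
Your construction is essentially identical to the paper's proof: weakly compute $n_c=|Q|\cdot|M|^{F_{\omega^k}(k)}$ in an auxiliary unary channel, run the \citet{CS-lics08} simulation $\mathcal{C}_\mathcal{M}$ while decrementing this budget at each step, add the two write-loops $!a,!b$ at $q_f$, and observe that the whole construction survives the functional lossy semantics. The faithfulness-under-lossiness concern you flag is exactly what the paper discharges by citing the \citet{CS-lics08} machinery, so your proposal matches the paper's argument in both structure and detail.
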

\begin{proof}
  \Citet{CS-lics08} show that it is possible to perfectly simulate a
  Turing machine $\mathcal{M}$ with input $x$ and $k=|x|$ that works
  in space bounded by $F_{\omega^\omega}(k)=F_{\omega^{k+1}}(k)$, with an
  LCS $\mathcal{C}_\mathcal{M}$ of size polynomial in $k$ and
  $|\mathcal{M}|$, such that a state $q_f$ of
  $\mathcal{C}_{\mathcal{M}}$ is reachable if and only if
  $\mathcal{M}$ halts.  Furthermore, the number of distinct
  configurations $n_c=|Q|\cdot|M|^{F_{\omega^{k+1}}(k)}$ of
  $\mathcal{C}_\mathcal{M}$ can also be weakly computed in unary with
  an LCS of polynomial size, $Q$ being the set of states of
  $\mathcal{C}_\mathcal{M}$ and $M$ its message alphabet.

  Combining those two systems, we construct $\mathcal{C}_\mathcal{M}'$
  that
  \begin{enumerate}
  \item first ``weakly'' computes some $N\leq n_c$ (in a separate channel
    with a unary alphabet), and then
  \item executes $\mathcal{C}_\mathcal{M}$ while decrementing $N$ at
  each transition step, 
  \item is able to loop on two added transitions $q_f\ru{!a}q_f$ and
  $q_f\ru{!b}q_f$, which do not decrement $N$, giving rise to an
  unbounded trace.
  \end{enumerate}
  In a nutshell, all the runs of $\mathcal{C}_\mathcal{M}'$ that do not
  visit $q_f$ are terminating, being of length bounded by $n_c$.
  Consequently, $\mathcal{C}_\mathcal{M}'$ is unbounded if and only
  if $q_f$ is reachable, if and only if it was also reachable in
  $\mathcal{C}_\mathcal{M}$, if and only if $\mathcal{M}$ halts.

  We conclude the proof by remarking that both the weak computation of
  $n_c$ and the perfect simulation of $\mathcal{M}$ keep working with
  the functional lossy semantics.
\end{proof}

\subsubsection{Non Primitive-Recursive Size of a Bounded Expression for Petri Nets}\label{sub:prim}
We derive a non primitive-recursive lower bound on the computation of
the words $w_1,\dots,w_n$, already in the case of Petri nets.
Indeed, the size of a covering tree can be
non primitive-recursive compared to the size of the Petri net
\cite[who attribute the idea to Hack]{cardoza}.  Using the same insight,
we demonstrate that the words $w_1,\dots,w_n$ themselves can be of
non primitive-recursive size.  This complexity is thus inherent to the
computation of the~$w_i$'s.

\begin{proposition}\label{propnp}
  There exists a free labeled Petri net $\mathcal{N}$ with a bounded
  trace set $T(\mathcal{N})$ but such that for any words
  $w_1,\dots,w_n$, if $T(\mathcal{N})\subseteq w_1^\ast\cdots
  w_n^\ast$, then the size $\sum_{i=1}^n|w_i|$ is not
  primitive-recursive in the size of~$\mathcal{N}$.
\end{proposition}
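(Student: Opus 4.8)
The plan is to reduce everything to \autoref{lem:lowb}: it suffices to exhibit a free labeled Petri net $\mathcal{N}$ of size $O(n)$ whose trace set $T(\mathcal{N})$ is bounded and which admits a single run whose trace has exactly the shape required by that lemma, namely $u_1x_1\cdots u_mx_m$ with $u_i\in\Sigma^+$, $x_i\in\Delta^+$, strictly increasing block lengths $|x_i|<|x_{i+1}|$, and $m$ non primitive recursive in $n$. Indeed, any bounded expression $w_1^\ast\cdots w_n^\ast$ containing $T(\mathcal{N})$ contains in particular this one trace, so \autoref{lem:lowb} forces its size $\sum_i|w_i|$ to be at least $m$, which is the conclusion. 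For the alphabet split I would single out one transition $t_b$ and set $\Delta=\{t_b\}$ and $\Sigma=\Theta\setminus\{t_b\}$; a trace is then automatically of the required form as soon as $t_b$ is fired only in consecutive maximal blocks of strictly increasing length, the remaining bookkeeping transitions forming the $u_i$'s.

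The net is built by grafting an \emph{output stage} onto a weak counter, reusing the weak computation technique of Lipton already employed in \autoref{propexppn}, but replacing the doubly exponential weak counter by the non primitive recursive blow-up attributed to Hack in \cite{cardoza}. This provides an $O(n)$-sized subnet that weakly reaches a value $N$, non primitive recursive in $n$, in a budget place $p_t$, while firing only transitions in $\Sigma$. The output stage then runs, for $i=1,2,\dots$, a loop that fires some marker transitions (all in $\Sigma$), raises an auxiliary counter $c$ to the value $i$, fires $t_b$ exactly $i$ times to emit the pure block $t_b^i$, and finally restores and increments $c$ for the next round; each round draws one token from $p_t$, so at most $N$ rounds occur. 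On the \emph{honest} run the counter $c$ is handled exactly, so the emitted trace is $u_1t_b^1u_2t_b^2\cdots u_mt_b^m$ with each $u_i\in\Sigma^+$ and $m$ of the order of $N$, hence non primitive recursive; since weak computation guarantees a run attaining the target value, this honest run exists and its trace lies in $T(\mathcal{N})$.

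Boundedness of the whole trace set comes essentially for free. Every transition of the output stage consumes from $p_t$ or from the bounded auxiliary counter $c\le N$, and $p_t$ is initialised to the finite value $N$, so all runs are finite and range over the fixed finite alphabet $\Theta$; thus $T(\mathcal{N})$ is finite and a fortiori bounded. Combining the two halves, $T(\mathcal{N})$ is bounded yet contains a word of the form of \autoref{lem:lowb} with $m$ non primitive recursive in $|\mathcal{N}|=O(n)$, which is exactly what is claimed.

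The main obstacle is the output stage: a plain Petri net cannot zero test, so I cannot emit the clean block $t_b^i$ and then stop on the nose, nor restore $c$ and detect its emptiness. The resolution is precisely the weakly correct zero test of \cite{cardoza}, confined to the budget $N$ and sound as long as all values stay within that budget. Deviating runs may over- or undershoot a block, but being length bounded by a polynomial in $N$ they only add finitely many words and never break boundedness, while the honest run realises the exact strictly increasing blocks and thereby places the critical word of \autoref{lem:lowb} inside $T(\mathcal{N})$.
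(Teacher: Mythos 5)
Your proposal is correct in outline and rests on the same two pillars as the paper's proof: a polynomial-size Petri net that weakly computes a non primitive recursive value (the Hack/Jantzen weak Ackermann computation), and Lemma~\ref{lem:lowb} applied to a single trace containing non-primitive-recursively many $\Delta$-blocks of strictly increasing length. Where you genuinely differ is in how that witness trace is produced. The paper reads it off the weak computer itself: in \citet{j87}'s net (\autoref{fig:Am}) the innermost subnets $A'_0$ and $A'_1$ already emit, along the maximal run, a suffix $ab^{k_0}ab^{k_1}\cdots ab^{k_p}$ with $k_{i+1}=2k_i+1$, so no extra machinery is needed, at the price of a closing projection-onto-$\{a,b\}$ argument to transfer the lower bound from the projected word back to the original bounded expression. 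You instead graft an explicit output stage emitting blocks $t_b^1,t_b^2,\dots,t_b^m$, which buys you a direct application of Lemma~\ref{lem:lowb} with no projection (the prefix of the honest trace ending after the last $t_b$-block lies in the prefix-closed set $T(\mathcal{N})$ and has exactly the required shape), at the price of extra gadgetry whose correctness you must argue. Two points in that gadgetry deserve care. First, your appeal to Lipton's ``weakly correct zero test'' from \citet{cardoza} is the wrong tool: that gadget is tailored to counters bounded by $2^{2^n}$ and does not scale to a non primitive recursive budget at polynomial size. Fortunately you need no zero test at all: since only the \emph{existence} of the honest run matters, plain nondeterministic transfer loops (drain $c$ into $c'$ while firing $t_b$, stopping nondeterministically) suffice, exactly as your final paragraph already observes---deviating runs merely add finitely many harmless traces. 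Second, your finiteness argument covers only the output stage; you also need the weak-computation stage itself to terminate, which does hold for the standard constructions but should be stated, since it is what makes $T(\mathcal{N})$ finite and hence bounded.
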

\begin{proof}
We consider for this proof a Petri net that ``weakly'' computes a non
primitive-recursive function $A:\mathbb{N}\rightarrow\mathbb{N}$.  The
particular example displayed in \autoref{fig:Am} is taken from a survey by
\citet{j87}, where $A$ is defined for all $m$ and $n$~by
\begin{align*}
  A(n)&\eqdef A'_n(2) & A'_0(n) &\eqdef 2n+1 \\ A'_{m+1}(0)&\eqdef 1 & A'_{m+1}(n+1) &\eqdef A'_m(A'_{m+1}(n))\;.
\end{align*}%
\begin{figure*}[tb]
\centering
\begin{tikzpicture}[->,>=stealth',shorten >=1pt,initial text=,%
                    node distance=1.4cm,on grid,semithick,auto,
                    inner sep=2pt,every transition/.style={minimum width=2mm,minimum height=8mm}]
  \node[draw,rounded corners,text width=3.8cm, text height=2.8cm,inner
                    sep=10pt,text badly centered,fill=black!15,draw=black!80,dotted] (box)
       at (2.8,-1) {$A'_0$};
  \node[place,label=above:in$_0$] (in){};
  \node[place,below right=1.4cm and 1.4cm of in] (p0) {};
  \node[place,right=3cm of in] (p1) {};
  \node[place,right=3cm of p0] (p2) {};
  \node[place,label=below:out$_0$,right=2.6cm of p1] (out) {};
  \node[place,label=below:on$_0$,below=2.3cm of p0] (on) {};
  \node[place,label=below:off$_0$,below=2.3cm of p2] (off) {};
  \node[transition,right=of in] (t0){}
    edge[pre] (in)
    edge[pre,bend right] (p0)
    edge[post] node[auto]{2} (p1)
    edge[post,bend left] (p0);
  \node[transition,right=of p0] (t1){}
    edge[pre] (p0)
    edge[post,bend right](p1)
    edge[post] (p2);
  \node[fill,transition,right=of p1,label=above:$b$] (t2){}
    edge[pre] (p1)
    edge[pre,bend right] (p2)
    edge[post] (out)
    edge[post,bend left] (p2);
  \node[fill,transition,below=1.1cm of p0,label=left:$a$,minimum width=8mm,minimum height=2mm] (t3){}
    edge[pre](on)
    edge[post](p0);
  \node[transition,below=1.1cm of p2,minimum width=8mm,minimum height=2mm] (t4){}
    edge[post](off)
    edge[pre](p2);
\end{tikzpicture}\quad
\begin{tikzpicture}[->,>=stealth',shorten >=1pt,initial text=,%
                    node distance=1.4cm,on grid,semithick,auto,
                    inner sep=2pt,every transition/.style={minimum width=2mm,minimum height=8mm}]
  \node[draw,rounded corners,text width=7.2cm, text height=3.3cm,inner
    sep=10pt,text badly ragged,fill=black!15,draw=black!80,dotted]
    (box) at (3.45,-.75) {$A'_{m+1}$};
  \node[place,label=above:in$_m$] (in){};
  \node[place,below right=1.4cm and 1.4cm of in,label=left:on$_m$] (p0) {};
  \node[place,right=3cm of p0,label=below right:off$_m$] (p2) {};
  \node[place,right=3cm of p2,label=below:\mbox{$\;\;\:\text{out}_m=\text{out}_{m+1}$}]
     (out) {};
  \node[transition,rounded corners,text width=2.8cm,text
    height=.45cm,inner sep=5pt,right=2.9cm of in,text badly
    centered,fill=black!25,draw=black!80,dotted](An){$A'_m$}
    edge[pre] (in)
    edge[pre] (p0)
    edge[post,bend left](out)
    edge[post] (p2);
  \node[transition,right=of p2]{}
    edge[pre](out)
    edge[pre,bend left](p2)
    edge[post,bend right](p2)
    edge[post,bend right=65](in);

  \node[place,label=below:on$_{m+1}$,below=2.3cm of p0] (on) {};
  \node[place,label=below:off$_{m+1}$,below=2.3cm of p2] (off) {};
  \node[transition,below=1.1cm of p0,minimum width=8mm,minimum
    height=2mm]{}
    edge[pre](on)
    edge[post](p0);
  \node[transition,below=1.1cm of p2,minimum width=8mm,minimum
    height=2mm]{}
    edge[pre](p2)
    edge[post](off);

  \node[place,label=above:in$_{m+1}$,below left=1.4 and 1 of
  in](inn){};
  \node[transition, right=of p0]{}
    edge[pre](p2)
    edge[pre,bend left=30](inn)
    edge[post](p0);
\end{tikzpicture}
\caption{\label{fig:Am}A Petri net that ``weakly'' computes $A'_{m+1}$ \citep{j87}.}
\end{figure*}
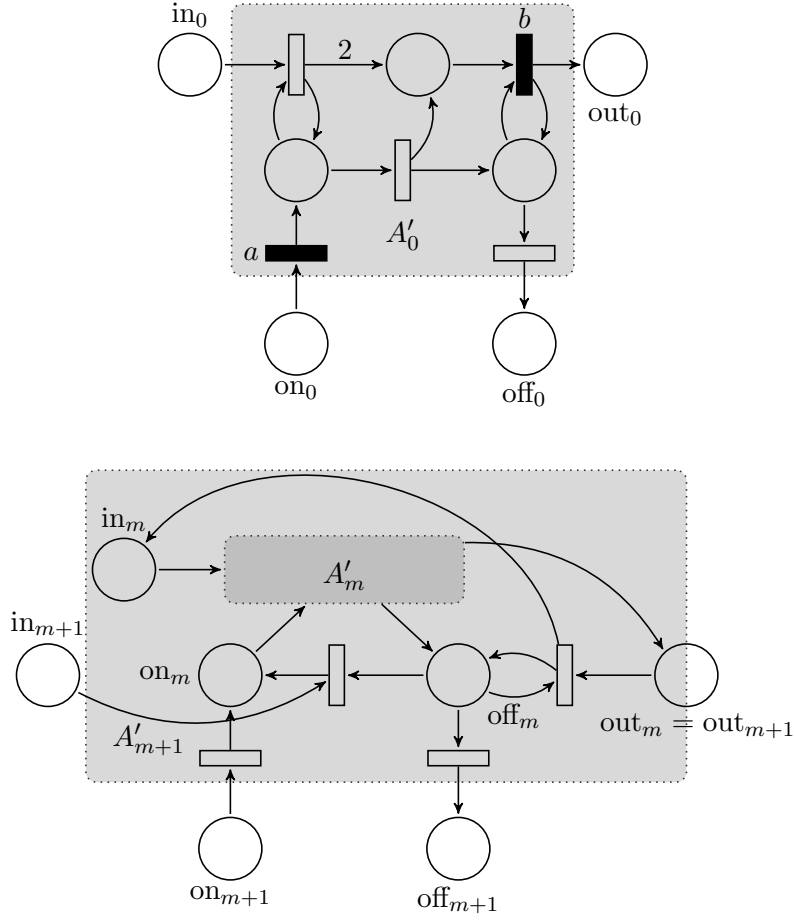%
The marked Petri net $\mathcal{N}$ for $A(n)$ is of linear size in $n$
and its trace set $L$ is finite, and therefore bounded, but contains words
of non primitive-recursive length compared to~$n$.

Although it might seem intuitively clear that we need a collection of
words $w_1,\dots,w_n$ of non primitive-recursive size in order to
capture this trace set, the proof is slightly more involved.  Observe for
instance that the finite trace set $\{a^p\}$ where $p$ is an arbitrary
number is included in the bounded expression $a^\ast$ of size
$|a|=1$.  Thus there is no general upper bound to the ratio between the size
$\sum_{w\in L}|w|$ of a finite trace set $L$ and the size of the minimal
collection of words that proves that $L$ is bounded.

Let us consider the maximal run in the Petri net for $A(n)$.  We focus
on the two black transitions labeled $a$ and $b$ in \autoref{fig:Am}, and more
precisely on the suffix of the run where we compute
$A'_n(2)=A'_1(p)$ with
\begin{equation*}
  p=A'_2(A'_3(\dots (A'_n(1) - 1)\dots)-1)\;.
\end{equation*}
This computation takes place in the subnet for $A'_0$ and $A'_1$ solely,
and this suffix is of form $v=ab^{k_0}ab^{k_1}\cdots ab^{k_p}$
with $k_0=1$, $k_{i+1}=2k_i+1$, and $k_p=A'_1(p)=A'_n(2)$.  By
\autoref{lem:lowb} any bounded expression such that $v\in w_1^\ast\cdots
  w_n^\ast$ has size $\sum_{i=1}^n|w_i|>p$.

We conclude by noting (1)~that $p$ is already the image of $n$ by a non
primitive-recursive function, and (2)~that $v$ is the suffix of the
projection $u$ of a word in $T(\mathcal{N})$ on the alphabet
$\{a,b\}$: hence, if a bounded expression of primitive-recursive size with
$T(\mathcal{N})\subseteq w_1^\ast\cdots w_n^\ast$ existed, then
the projections $w'_i$ of the $w_i$ on $\{a,b\}$ would be such that
$|w'_i|\leq|w_i|$ and $u\in {w'}_1^\ast\cdots {w'}_n^\ast$, and would
yield an expression of primitive-recursive size for~$v$.
\end{proof}

In the case of Petri nets we are in a situation comparable to
that of context-free languages: trace boundedness is decidable with a
sensibly smaller complexity than the complexity of the size of the
corresponding bounded expression (see \citet{growthrate} for a
\textsc{PTime} algorithm for deciding trace boundedness of a context-free
grammar, and \citet{noteslre} for an example of an expression
exponentially larger than the grammar).

\subsection{Upper Bounds}\label{sub:ubound}
We provide another recipe, this time for proving upper bounds for
trace-boundedness in cd-WSTS, relying on existing \emph{length
  function theorems} on wqos, which prove upper bounds on the length of
\emph{controlled bad sequences}.

\paragraph{Controlled Good and Bad Sequences}
Let $(S,\leq)$ be a quasi order.  A sequence $s_0\cdots s_\ell$ in
$S^\ast$ is \emph{$r$-good} if there exist $0\leq i_0<i_1<\cdots<i_r\leq \ell$ with
$s_{i_j}\leq s_{i_{j+1}}$ for all $0\leq j<r$, and is \emph{$r$-bad}
otherwise.  In the case $r=1$, we say more simply that the sequence is
\emph{good} (resp.\ \emph{bad}).  The wqo condition thus
ensures that any infinite sequence is good.

Given a \emph{norm} function $\norm{.}\,S\to\mathbb{N}$ with $S_{\leq
  n}\eqdef\{s\in S\mid \norm{s}\leq n\}$ finite for every $n$, a
\emph{control} function $g{:}\,\mathbb{N}\to\mathbb{N}$, $g$ monotone
s.t.\ $g(x)>x$ for all $x$, and an \emph{initial norm} $n$ in
$\mathbb{N}$, a sequence $s_0\cdots s_\ell$ is \emph{controlled} by
$(\norm{.},g,n)$ if, for all $i$, $\norm{s_i}\leq g^i(n)$ the $i$th
iteration of $g$; in particular, $\norm{s_0}\leq n$ initially.

A cd-WSTS $\tup{S,s_0,\Sigma,{\rightarrow},{\leq}}$ is
\emph{(strongly) controlled} by $(\norm{.},g,n)$ if
\begin{enumerate}
\item $\norm{s_0}\leq n$,
\item for any single step $s\ru{a}s'$,
  $\norm{s'}\leq g(\norm{s})$, and 
\item for any accelerated step $s\rua{u^\omega}s'$, $\norm{s'}\leq
  g^{|u|}(\norm{s})$.
\end{enumerate}

Using these notions, and by a careful analysis of the proof of
\autoref{thrf}, we exhibit in \autoref{sub:good} a witness of
trace unboundedness under the form of a good $(\norm{.},g^2,n)$-controlled
sequence $s_0\cdots s_\ell$ of $S^\ast$ in a $(\norm{.},g,n)$-controlled
WSTS.  There is therefore a longest bad prefix to this witness, which
is still controlled.

The particular way of generating this sequence yields an algorithm,
since as a consequence of the wqo, the depth of exploration in the
search for this witness of trace unboundedness is finite, and we can
therefore replace the two semi-algorithms of \autoref{sec:dec} by a
single algorithm that performs an exhaustive search up to this depth.
Furthermore, we can apply \emph{length function theorems} to obtain
upper bounds on the maximal length of bad controlled sequences, and
thus on this depth; this is how the upper bounds of
\autoref{tab:rrcmplx} are obtained (see \autoref{sub:acs_ubound} and
\autoref{sub:lcs_ubound}).

\subsubsection{Extracting a Controlled Good Sequence}\label{sub:good}
Let us assume we are given a trace unbounded $(\norm{.},g,n)$-controlled
cd-WSTS $\mathcal{S}$, and let us consider the three infinite sequences
defined in the proof of \autoref{lemma-exist-fork}, namely
$(v_i,u_i)_{i>0}$ of pairs of words in $\Sigma^\ast\times\Sigma^+$,
$(L_i)_{i\geq 0}$ of unbounded languages, and $(s_i)_{i\geq 0}$ of
states starting with the initial state $s_0$.  By construction,
$(s_i)_{i\geq 0}$ is good; however, this sequence is not controlled by
a ``reasonable'' function in terms of $g$, because we use the wqo
argument at each step (when we employ \autoref{lem-remove-star} to
construct $s_{i+1}$ from $s_i$), hence the motivation for refining
this first sequence.  A solution is to also consider some of the
intermediate configurations along the transition sequence
$v_{i+1}u_{i+1}$ starting in $s_i$, so that the index of each state in
the new sequence better reflects how the state was obtained.

\begin{lemma}\label{lem:ubound}
  Let $\mathcal{S}=\tup{S,s_0,\Sigma,{\rightarrow},{\leq}}$ be a
  $(\norm{.},g,n)$-controlled cd-WSTS.  Then we can construct a
  specific $(\norm{.},g^2,n)$-controlled sequence which is good if and only if
  $\mathcal{S}$ is trace unbounded. 
\end{lemma}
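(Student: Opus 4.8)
The plan is to revisit the three infinite sequences $(v_i,u_i)_{i>0}$, $(L_i)_{i\geq 0}$ and $(s_i)_{i\geq 0}$ produced in the proof of \autoref{lemma-exist-fork}, and to interpolate new states between consecutive $s_i$ so that each position of the resulting sequence accounts for only a bounded amount of $\rho$-growth. The sequence $(s_i)_{i\geq 0}$ is already good by the wqo, but it is badly controlled: passing from $s_i$ to $s_{i+1}$ consumes the whole word $v_{i+1}u_{i+1}^\omega$, so $\rho(s_{i+1})$ may be as large as $g^{|v_{i+1}|+|u_{i+1}|}(\rho(s_i))$ while the index has advanced by a single unit. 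First I would therefore expand each such step into the individual letters it reads.

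Concretely, for every $i$ I would insert, between $s_i$ and $s_{i+1}$, the intermediate configurations obtained by firing $v_{i+1}u_{i+1}$ one letter at a time from $s_i$---reaching $t_i$ after $v_{i+1}$ and $r_i=u_{i+1}(t_i)$ after $u_{i+1}$---and only then append the lub-accelerated state $s_{i+1}$, now obtained as $r_i\rua{u_{i+1}^\omega}s_{i+1}$. This last step is legitimate because $r_i\leq u_{i+1}(r_i)$ and the lub of $\{u_{i+1}^n(r_i)\}$ coincides with that of $\{u_{i+1}^n(t_i)\}$, so $s_{i+1}$ is unchanged. The position $N_i$ of $s_i$ in the refined sequence is then $N_i=\sum_{j\leq i}(|v_j|+|u_j|+1)$, which genuinely reflects the number of symbols consumed so far.

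The verification of $(\rho,g^2,t)$-control I would carry out by induction on the position, and this is where the factor $g^2$---rather than $g$---becomes essential, so I expect it to be the delicate point. Each newly inserted single step is covered by one application of $g$, hence a fortiori by the $g^2$ budget. The accelerated state $s_{i+1}$ is the subtle case: the block $u_{i+1}$ has been traversed \emph{twice}, once concretely in reaching $r_i$ (costing $g^{|u_{i+1}|}$ by the single-step clause) and once as a limit in reaching $s_{i+1}$ from $r_i$ (costing a further $g^{|u_{i+1}|}$ by the accelerated-step clause), so that $\rho(s_{i+1})\leq g^{2|u_{i+1}|}(\rho(t_i))$ although the position has advanced only by $|u_{i+1}|+1$. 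A pure $g$-bound would demand $|u_{i+1}|\leq 1$ and fails; passing to $g^2$ exactly doubles the per-position budget and absorbs this double counting, since $|v_{i+1}|+2|u_{i+1}|\leq 2(|v_{i+1}|+|u_{i+1}|+1)$, and likewise for the interpolated states $r_i$.

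Finally I would establish the biconditional. If $\mathcal{S}$ is trace unbounded then \autoref{lem-remove-star} lets the construction proceed indefinitely (each $L_{i+1}$ stays unbounded), the refined sequence is infinite, and it is good by the wqo; moreover a good pair can be taken among the accelerated states $s_i$, which is precisely the configuration feeding the increasing fork of \autoref{lemma-exist-fork} and hence unboundedness via \autoref{lem:fork-implies-unbounded}. Conversely, if $\mathcal{S}$ is bounded then $L_0=T(\mathcal{S})$ is already bounded, \autoref{lem-remove-star} cannot be applied even once, and the construction stalls at the one-state sequence $s_0$, which is bad. Because the sequence we build is $(\rho,g^2,t)$-controlled, a length-function (miniaturization) bound on bad controlled sequences caps the depth of this search, which is exactly what upgrades the semi-algorithm of \autoref{sec:dec} to an algorithm.
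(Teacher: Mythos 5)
Your control computation is sound---and in fact, by keeping the full concrete traversal of $u_{i+1}$ in the sequence, you get the cumulative bound $K_i\leq 2N_i$ without needing the strengthened requirement $|v_{i+1}|\geq|u_i|$ that the paper's proof imposes. But the characterization half of the lemma has a genuine gap, and it is caused by exactly those extra states. Since $t_i$ lies in $\dom u_{i+1}^\omega$, the very definition of lub-acceleration forces $t_i\leq u_{i+1}(t_i)=r_i\leq s_{i+1}$, and you place all three of $t_i$, $r_i$, $s_{i+1}$ in your sequence. Hence your sequence is good as soon as the first block exists, for a reason that has nothing to do with unboundedness (the same increasing pair occurs along the trace set $a^\ast$ of a single self-loop). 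Your biconditional then survives only through the degenerate observation that the construction stalls at $s_0$ when $T(\mathcal{S})$ is bounded---which is true but unusable, because the construction needs an unboundedness oracle at every step (the conditions on $L_i$ and $\remove{u_{i+1}}(v_{i+1}^{-1}L_i)$) and so cannot itself be run as a decision procedure. What the lemma must actually deliver, for the upper bounds of \autoref{sub:acs_ubound} and \autoref{sub:lcs_ubound}, is that \emph{any} good pair occurring in the controlled sequence yields an increasing fork, so that the miniaturization bound on bad $(\rho,g^2,t)$-controlled sequences caps the depth at which a fork-witness must appear. Your aside that ``a good pair can be taken among the accelerated states $s_i$'' does not repair this: goodness of the full sequence does not place a good pair among the $s_i$ within the miniaturized prefix---the first good pair of your sequence is always one of the trivial pairs $t_i\leq r_i$.

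The paper avoids this by pruning the sequence and constraining the construction: it drops the accelerated states $s_i$ (for $i>0$) and the endpoint of each concrete block (the index set $J_i$ stops at $|v_{i+1}u_{i+1}|-1$ and starts at $|u_i|$), and it adds the minimality condition (5) that $v_{i+1}u_{i+1}$ be the shortest admissible choice, which makes every block $(s_{i,j})_{j\in J_i}$ internally bad. Consequently any good pair $s_{i,j}\leq s_{i',j'}$ must straddle two blocks $i<i'$, and Claim~\ref{cl:fork} converts such a pair into an increasing fork using the first position where $u_{i+1}$ and $v_{i+2}u_{i+2}$ diverge. Both the pruning and the minimality condition, and the fork-extraction argument they enable, are absent from your proposal; without them the statement you prove matches the lemma only in a vacuous reading and cannot support its intended use.
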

\begin{proof}
As in the proof of \autoref{lemma-exist-fork}, we construct
inductively on $i$ the following three infinite sequences
$(v_i,u_i)_{i>0}$, $(L_i)_{i\geq 0}$ starting with
$L_0\eqdef T(\mathcal{S})$, and $(s_i)_{i\geq 0}$ starting with the initial
state $s_0$ of $\mathcal{S}$, such that
\begin{itemize}
\item $v_{i+1},u_{i+1}$ are chosen using \autoref{lem-remove-star}
  such that
  \begin{enumerate}
  \item $v_{i+1}u_{i+1}^\omega$ is in $T_\mathsf{acc}(\mathcal{S}(s_i))$,
  \item $v_{i+1}u_{i+1}$ is in $\prefix {L_i}$,
  \item $|v_{i+1}|\geq |u_{i}|$ if $i>0$ (and thus
    $|v_{i+1}u_{i+1}|\geq|u_i|$ as in the proof of
    \autoref{lemma-exist-fork}),
  \item $\remove{u_{i+1}}(v^{-1}_{i+1}L_i)$ is unbounded, and
  \item there do not exist two successive strict prefixes $p,p'$ of 
    $v_{i+1}u_{i+1}$ such that $|p|\geq |u_i|$ and
    $s_i\ru{p}s'_i\ru{p'}s''_i$ with $s'_i\leq s''_i$,
    i.e.\ $v_{i+1}u_{i+1}$ is the shortest choice for
    \autoref{lem-remove-star} and (1--4) above;
  \end{enumerate}
\item $s_i \rua{v_{i+1}u^\omega_{i+1}} s_{i+1}$;
\item $L_{i+1} \eqdef\remove{u_{i+1}}(v^{-1}_{i+1}L_i)$.
\end{itemize}
We define another sequence of states $(s_{i,j})_{i\geq
0,j\in J_i}$ by $s_i\ru{p_{i,j}}s_{i,j}$ with $p_{i,j}$ the prefix of length
$j$ of $v_{i+1}u_{i+1}$, where
\begin{align*}
  J_0&\eqdef\{0,\dots,|v_1u_1|-1\}\text{ and}\\
  J_i&\eqdef\{|u_i|,\dots,|v_{i+1}u_{i+1}|-1\}\text{ for $i>0$.}
\end{align*}
Because $|u_i|>0$ for each $i>0$, none of the $(s_i)_{i>0}$ appears in
the sequence $(s_{i,j})_{i\geq 0,j\in J_i}$.  Note that condition~(5)
on the choice of $v_{i+1}u_{i+1}$ ensures that, for each $i\geq 0$,
each factor $(s_{i,j})_{j\in J_i}$ is a bad sequence.

This infinite sequence of states $(s_{i,j})_{i\geq 0,j\in J_i}$ can be
constructed whenever we are given a trace unbounded cd-WSTS, and is
necessarily \emph{good} due to the wqo.  Our aim will be later to
bound the length of its longest bad prefix.  In order to do so, we
need to control this sequence:

\begin{claim}\label{cl:control}
  The sequence $(s_{i,j})_{i\geq 0,j\in J_i}$ is controlled by
  $(\norm{.},g^2,n)$.
\end{claim}
\begin{proof}
  Since $\mathcal{S}$ is $(\norm{.},g,n)$-controlled, we can control the
  accelerated transition sequence that led to a given $s_{i,j}$: first
  reach $s_i$, and then apply $j$ single step transitions.  Formally,
  put for all $i\geq 0$
  \begin{align*}
    k_0&\eqdef 0,&k_{i+1}&\eqdef k_i+|v_{i+1}|+|u_{i+1}|\;,
  \end{align*}
  where $|v_{i+1}|$ accounts for the single steps and $|u_{i+1}|$ for
  the accelerated step in $s_i\rua{v_{i+1}u^\omega_{i+1}}s_{i+1}$;
  then we have for all $i\geq 0$ and $j\in J_i$
  \begin{equation*}
    \norm{s_{i,j}}\leq g^{k_i + j}(n)\;.
  \end{equation*}

  We need to relate this norm with the index of each $s_{i,j}$ in
  the $(s_{i,j})_{i\geq 0,j\in J_i}$ sequence.  We define accordingly
  for all $i\geq 0$ and $j\in J_i$
  \begin{align*}
    \ell_{0,\min J_0}&\eqdef 0,&\ell_{i,j+1}&\eqdef \ell_{i,j}+1,&\ell_{i+1,\min
      J_{i+1}}&\eqdef\ell_{i,\min J_i}+|J_i|\;.
  \end{align*}
  In order to prove our claim, namely that
  \begin{equation*}
    \norm{s_{i,j}}\leq (g^2)^{\ell_{i,j}}(n)\;,
  \end{equation*}
  we show by induction on $(i,j)$ ordered lexicographically that 
  \begin{equation*}
    k_i+j \leq 2\cdot \ell_{i,j}\;.
  \end{equation*}
  The base case for $i=0$ and $j=\min J_0=0$ is immediate, since
  $k_i+j=0=2\cdot\ell_{0,0}$.  For the induction step on $j$,
  $k_i+j+1\leq 2\cdot\ell_{i,j}+1\leq 2\cdot\ell_{i,j+1}$, and for the
  induction step on~$i$,
  \begin{align*}
    k_{i+1}+\min J_{i+1}
    &=k_{i+1}+|u_{i+1}|\tag{by def.\ of $J_{i+1}$}\\
    &=k_i+2|u_{i+1}|+|v_{i+1}|\tag{by def.\ of $k_{i+1}$}\\
    &=k_i+|u_i|+2|u_{i+1}|+|v_{i+1}|-|u_i|\\
    &=k_i+\min J_i+2|u_{i+1}|+|v_{i+1}|-|u_i|\tag{by def.\ of $J_i$}\\
    &\leq 2\cdot\ell_{i,\min J_i}+2|u_{i+1}|+|v_{i+1}|-|u_i|\tag{by ind.\ hyp.}\\
    &\leq 2\cdot\ell_{i,\min
      J_i}+2|u_{i+1}|+2|v_{i+1}|-2|u_i|\tag{since $|v_{i+1}|\geq |u_i|$}\\
    &=2\cdot\ell_{i+1,\min J_{i+1}}\;.\tag{by def.\ of $\ell_{i+1,\min
        J_{i+1}}$}
  \end{align*}

  Thus by monotonicity of $g$,
  \begin{equation*}
    \norm{s_{i,j}}\leq g^{k_i+j}(n)\leq g^{2\cdot\ell_{i,j}}(n)\;.\qedhere
  \end{equation*}
\end{proof}

We also need to show that such a good sequence is a \emph{witness} for
trace unboundedness, which we obtain thanks to
\autoref{lem:fork-implies-unbounded} and the following claim:
\begin{claim}\label{cl:fork}
  If the sequence $(s_{i,j})_{i\geq 0,j\in J_i}$ is good, then
  $\mathcal{S}$ has an increasing fork.
\end{claim}
\begin{proof}
  Let $s_{i,j}$ and $s_{i',j'}$ be two elements of the sequence
  witnessing goodness, such that $s_{i,j}$ occurs before $s_{i',j'}$ and
  $s_{i,j}\leq s_{i',j'}$.
  Due to the constraints put on the choices of $v_{i+1}$ and
  $u_{i+1}$ for each $i$, we know that $i<i'$.  Similarly to the proof
  of \autoref{lemma-exist-fork}, there exists a longest common
  prefix $x$ in $\Sigma^\ast$ and two symbols $a\neq b$ in $\Sigma$
  such that $v_{i+2}u_{i+2}=xaz$ and $u_{i+1}=xby$ for some $y$ and
  $z$ in $\Sigma^\ast$.  Let us further call $p'_{i,j}$ the suffix of
  $v_{i+1}u_{i+1}$ such that $v_{i+1}u_{i+1}=p_{i,j}p'_{i,j}$,
  hence we get a fork by selecting $s$, $s_a$, and $s_b$
  with
\begin{figure}[tb]
  \hspace*{-3em}
  \begin{tikzpicture}[->,shorten >=1pt,initial text=,%
      node distance=1.8cm,on grid,semithick,auto,
      inner sep=2pt]
  \node(s0){$s_0$};
  \node[right=1cm of s0](si){$s_i$};
  \node[right=1.2cm of si](sij){$s_{i,j}$};
  \node[right=1.2cm of sij](sss){$s''_i$};
  \node[right=1.3cm of sss](sip){$s_{i+1}$};
  \node[right=1.2cm of sip](s){$s$};
  \node[above right=0.9cm and 2cm of s](si1){$s_{i+1}$};
  \node[below right=0.9cm and 2cm of s](sp){$s_{i+2}$};
  \node[right=1.2cm of si1](sa){$s_b$};
  \node[right=2.3cm of sp](sj){$s_{i'}$};
  \node[right=1.2cm of sj](sijp){$s_{i',j'}$};
  \node[right=2cm of sijp](sb){$s_a$};
  \path[every node/.style={font=\footnotesize}]
    (s0) edge[->>] node{} (si)
    (si) edge node{$p_{i,j}$} (sij)
    (sij) edge node{$p'_{i,j}$} (sss)
    (sss) edge[->>] node{$u_{i+1}^\omega$} (sip)
    (sip) edge node{$x$} (s)
    (s)  edge node{$by$} (si1)
    (s)  edge[->>,swap] node{$azu_{i+2}^\omega$} (sp)
    (si1)edge node{$x$} (sa)
    (sp) edge[->>] node{$v_{i+3}\cdots u_{i'}^\omega$} (sj)
    (sj) edge node{$p_{i',j'}$} (sijp)
    (sijp) edge[->>,swap] node{$p'_{i,j}u_{i+1}^\omega x$} (sb)
    (sip) edge[dashed,bend left=25] node{$=$} (si1)
    (s)  edge[dashed, bend right=10] node{$=$} (sa)
    (sij) edge[dashed, bend right=30] node{$\leq$} (sijp)
    (s)  edge[dashed, bend left=12] node{$\leq$} (sb);
\end{tikzpicture}
\caption{\label{fig:forklcs}The construction of an increasing fork in the
  proof of \autoref{cl:fork}.}
\end{figure}
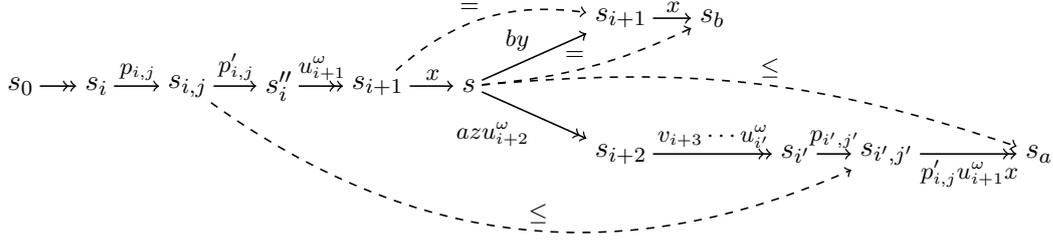
  \begin{align*}
    s_{i,j}\rua{p'_{i,j}u_{i+1}^\omega x}s&&s\rua{azu_{i+2}^\omega\cdots
      v_{i'}u_{i'}^\omega
      p_{i',j'}}s_{i',j'}\rua{p'_{i,j}u_{i+1}^\omega x}s_a&&s\ru{byx}s_b\;.
  \end{align*}
  Note that because $|x|<|u_{i+1}|$ and $i<i'$, $s_{i',j'}$ is
  necessarily met after $s$ and the construction is correct.  See also
  \autoref{fig:forklcs}.
\end{proof}
This concludes the proof of the lemma: $\mathcal{S}$ is trace
unbounded if and only if $(s_{i,j})_{i\geq 0,j\in J_i}$ is good.
\end{proof}

In the following, we essentially bound the complexity of trace
boundedness using bounds on the length of the $(s_{i,j})_{i\geq 0,j\in
  J_i}$ sequence.  This is correct modulo a few assumptions on the
concrete systems we consider, and because the fast growing upper bounds we
obtain dwarfen any additional complexity sources.  For instance, a
natural assumption would be for the size of representation of an
element $s$ of $S$ to be less than $\norm{s}$, but actually any
primitive-recursive function of $\norm{s}$ would still yield the same
upper bounds!

\subsubsection{$\mathbf{F}_\omega$ Upper Bound for Affine Counter
  Systems}\label{sub:acs_ubound}
We match the \textsc{Ack} lower bound of \autoref{prop:npr-acs}
for affine counter systems, thus establishing that trace boundedness is
\textsc{Ack}-complete.  We employ the machinery of
Claims~\ref{cl:control} and~\ref{cl:fork}, and proceed by showing that
\begin{enumerate}
\item\label{item:acs_control} complete affine counter systems are controlled,
  and that 
\item\label{item:n_k_ubound} one can provide an upper bound on the
  length of bad sequences in $(\mathbb{N}\uplus\{\omega\})^k$.
\end{enumerate}

\paragraph{Controlling Complete Affine Counter Systems}
Recall that an \emph{affine counter system} (ACS) $\tup{L,\vec x_0}$
is a finite set $L$ of affine transition functions of form $f(\vec
x)=\vec A\vec x+\vec b$, with $\vec A$ a matrix in
$\mathbb{N}^{k\times k}$ and $\vec b$ a vector in $\mathbb{Z}^k$,
along with an initial configuration $\vec x_0$ in $\mathbb{N}^k$. A
transition $f$ is firable in configuration $\vec x$ of $\mathbb{N}^k$
if $f(\vec x)\geq \vec 0$, and leads to a new configuration $f(\vec
x)$.

Define the \emph{norm} $\norm{\vec x}$ of a configuration in
$(\mathbb{N}\uplus\{\omega\})^k$ as the infinity norm among finite
values $\norm{\vec x}\eqdef\max(\{0\}\cup\{\vec x[j]\neq\omega\mid
1\leq j\leq k\})$.  Also set $m_1$ as the maximal coefficient
\begin{align*}
 m_1&\eqdef\max_{f(\vec x)=\vec A\vec x+\vec b\in L,1\leq i,j\leq k}\vec A[i,j]
\intertext{and $m_2$ as the maximal constant}
m_2&\eqdef\max_{f(\vec x)=\vec A\vec x+\vec b\in L,1\leq i\leq k}\vec b[i]\;.
\end{align*}

In case of a single step transition using some function $f$ in $L$,
one has
\begin{align*}
\norm{f(\vec x)}&\leq k\cdot{}m_1\cdot\norm{x}+m_2\;,
\end{align*}
while in case of an accelerated transition sequence, one has the
following:
\addtocounter{theorem}{1}
\setcounter{claim}{0}
\begin{claim}\label{cl:affine}
  Let $u=f_n\circ\cdots\circ f_1$ be a transition
  sequence in $L^+$ with $u(\vec x)\geq \vec x$.
  Then $\norm{(u^\omega(\vec x))}\leq (k\cdot m_1)^{n\cdot
    k}\cdot(\norm{\vec x}+n\cdot k\cdot m_2)$. 
\end{claim}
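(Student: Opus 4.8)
The plan is to show that the finitely-valued coordinates of $u^\omega(X)$ already stabilize after at most $k$ applications of the composed map $u$, and then to bound those coordinates by iterating the single-step weight inequality $nk$ times.

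First I would set $Y=u^\omega(X)=\lub\{u^m(X)\mid m\in\mathbb{N}\}$, the coordinatewise supremum of the increasing sequence $X\le u(X)\le u^2(X)\le\cdots$ (increasing by monotonicity together with $u(X)\ge X$). Write $u(Z)=AZ+B$ for the affine map computed by $u$, and let $F\subseteq\{1,\dots,k\}$ be the set of coordinates on which $Y$ is finite, so that $\rho(Y)$ is the largest among the values $Y[j]$ for $j\in F$ (with $\max\emptyset=0$). The first key point is that $F$ is \emph{closed} under $u$: if $j\in F$ and $A[j,i]\ge 1$ for some $i\notin F$, then $u^{m+1}(X)[j]\ge A[j,i]\,u^m(X)[i]$ grows without bound, since $u^m(X)[i]$ is unbounded for $i\notin F$; this would force $Y[j]=\omega$, a contradiction. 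Hence $A[j,i]=0$ for all $j\in F$, $i\notin F$, so the restriction $T(Z)=(AZ+B)|_F$ of $u$ to the coordinates in $F$ depends only on $Z|_F$, and $u^m(X)|_F=T^m(X|_F)$ with $Y|_F=\sup_m T^m(X|_F)$ finite.

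Next I would bound the \emph{stabilization time} of this finite subsystem. The increments $D_m=T^{m+1}(X|_F)-T^m(X|_F)$ satisfy $D_m=A_{FF}^m D_0\ge 0$, where $A_{FF}=(A[j,i])_{j,i\in F}$ and $D_0=T(X|_F)-X|_F\ge 0$. Since $Y|_F$ is finite, $\sum_m D_m<\infty$; I claim $D_m=0$ for every $m\ge|F|$. Otherwise some entry of $A_{FF}^m D_0$ is positive for an $m\ge|F|$, i.e.\ there is a positive-weight walk of length $m$ in the graph of $A_{FF}$ from $\mathrm{supp}(D_0)$ to some $j\in F$; as $m\ge|F|$ this walk repeats a vertex, producing a cycle that is reachable from $\mathrm{supp}(D_0)$ and reaches $j$. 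Iterating this cycle yields positive increments of coordinate $j$ for infinitely many indices, so $Y[j]=\omega$, contradicting $j\in F$. Therefore $Y|_F=T^{|F|}(X|_F)=u^k(X)|_F$ (using $|F|\le k$ and that the sequence is stationary from step $|F|$ onward), and in particular $\rho(Y)\le\rho(u^k(X))$.

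It then remains to bound $\rho(u^k(X))$. Since $u=f_n\circ\cdots\circ f_1$, the iterate $u^k$ is a composition of $nk$ single steps drawn from $L$, each obeying the single-step inequality $\rho(f(Z))\le k\,m_1\,\rho(Z)+m_2$ established just before the claim (valid even when $Z$ has $\omega$-coordinates, as those are discarded from $\rho$). Writing $g(x)=k\,m_1\,x+m_2$ and iterating $nk$ times gives
\begin{equation*}
\rho(u^k(X))\le g^{nk}(\rho(X))=(k m_1)^{nk}\rho(X)+\Big(\sum_{i=0}^{nk-1}(k m_1)^i\Big)m_2\;.
\end{equation*}
For $km_1\ge 1$ each of the $nk$ geometric terms is at most $(km_1)^{nk}$, so the sum is at most $nk\,(km_1)^{nk}$, yielding $\rho(u^\omega(X))\le(km_1)^{nk}(\rho(X)+nk\,m_2)$ as claimed. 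The main obstacle is the stabilization step of the previous paragraph: one must argue both that the finite coordinates form a $u$-closed subsystem and that this subsystem reaches its fixed point within $|F|\le k$ iterations. The cycle-in-the-graph argument is precisely what forces the exponent $nk$, rather than the a priori unbounded number of iterations needed to reach the lub. The final arithmetic is routine, the only caveat being the degenerate case $m_1=0$ (all matrices zero), in which $km_1\ge 1$ fails; this case is harmless and does not arise in any non-trivial affine counter system, where $m_1\ge 1$.
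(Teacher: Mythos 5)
Your proof is correct and follows essentially the same route as the paper's: both arguments show that the finite coordinates of $u^\omega(X)$ stabilize after at most $k$ iterations of $u$ by interpreting the powers of the composed matrix $A$ as path weights in a graph and pumping a repeated vertex, and then bound $\rho(u^k(X))$. Your packaging differs only cosmetically --- you isolate the $u$-closed set $F$ of finite coordinates and iterate the single-step inequality $nk$ times, where the paper argues coordinate-by-coordinate and expands $A^k X+\sum_{\ell<k}A^\ell b$ directly --- and your remark about the degenerate case $m_1=0$ applies equally to the paper's own bound.
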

\begin{proof}
  We first proceed by proving that $k$ iterations of $u$ are enough in
  order to compute the finite values in $u^\omega(\vec x)$.

  Let us set $u(\vec x) = \vec A\vec x + \vec b$ and $\vec d_n\eqdef
  u^{n+1}(\vec x) - u^n(\vec x)$ for all $n$.  Since $u(\vec x) \geq
  \vec x$, for any coordinate $1\leq j\leq k$, the limit
  $\lim_{n\rightarrow \omega} u^n(\vec x)[j]$ exists, and is finite if
  and only if $\vec x[j]<\omega$ and there exists $m$ such that for
  all $n\geq m$, $\vec d_n[j] = 0$.  As $\vec d_{n+1} = u^{n+2}(\vec
  x) - u^{n+1}(\vec x) = \vec A \cdot u^{n+1}(\vec x) + \vec b - (\vec
  A \cdot u^n(\vec x) + \vec b) = \vec A \cdot (u^{n+1}(\vec
  x)-u^n(\vec x)) = \vec A \cdot \vec d_n$, we have $\vec d_n = \vec
  A^n\cdot \vec d_0$.

  If we consider $\vec A$ as the adjacency matrix of a weighted graph
  with $k$ vertices, its $\vec A^n[i,j]$ entry is the sum of the
  weights of all the paths $\psi=\psi_0\,\psi_1\cdots\psi_n$ of length
  $n$ through the matrix, which start from $\psi_0=i$ and end in
  $\psi_n=j$, i.e.
  \begin{align*}
  \vec A^n[i,j] &= \sum_{\psi\in \{i\} \times [1,k]^{n-1} \times \{j\}}
  \prod_{\ell\in [0,n-1]} \vec A[\psi_\ell,\psi_{\ell+1}]\\
  \vec d_n[j]&=\sum_{\psi\in\times [1,k]^{n} \times \{j\}} \left(\vec d_0[\psi_0]\cdot
  \prod_{\ell\in [0,n-1]} \vec A[\psi_\ell,\psi_{\ell+1}]\right)\,.
  \end{align*}

  Since $u(\vec x)\geq\vec x$ and $\vec A$ contains non negative
  integers from $\nat$, $\vec d_n[j] = \vec 0$ iff each of the above
  products is null, iff there is no path of length $n$ in the graph of
  $\vec A$ starting from a non-null $\vec d_0[i]$.  Therefore, if
  there exists $n > k$ such that $\vec d_n[j] > 0$, then there is a
  path with a loop of positive weight in the graph.  In such a case
  there are infinitely many $m$ such that $\vec d_m[j] > 0$.  A
  contrario, if there exists $m$ such that for all $n \geq m$, $\vec
  d_n[j] = 0$, then $m=k$ is enough: if $u^\omega(\vec x)[j] \in
  \nat$, then $u^\omega(\vec x)[j] = u^k(\vec x)[j]$.

  Let us now derive the desired upper bound on the norm of
  $u^\omega(\vec x)$: either $u^\omega(\vec x)[j]=\omega$ and the
  $j$th coordinate does not contribute to $\norm{u^\omega(\vec x)}$,
  or $u^\omega(\vec x)[j] \in \nat$ and $u^\omega(\vec x)[j] =
  u^k(\vec x)[j]$.  Let $f_i(\vec x)\eqdef\vec A_i\cdot\vec x+\vec
  b_i$; we have
  \begin{align*}
    \vec A &= \prod_{i=n}^1\vec A_i\qquad\qquad
    \vec b =\sum_{j=1}^{n}\left(\prod_{i=n}^{j+1} \vec
      A_i\right)\cdot \vec b_j\\
    u^k(\vec x)&= \vec A^k\cdot\vec x+\sum_{\ell=0}^{k-1}\vec
    A^\ell\cdot\vec b\\
    &=\left(\prod_{i=n}^1 \vec A_i\right)^{\!\!\!k}\cdot\vec
    x+\sum_{\ell=0}^{k-1}\sum_{j=1}^{n}\left(\prod_{i=n}^1 \vec
      A_i\right)^{\!\!\!\ell}\cdot\left(\prod_{i=n}^{j+1} \vec
      A_i\right)\cdot \vec b_j
    \intertext{thus}
    \norm{u^\omega(\vec x)[j]}
    &\leq\norm{\vec A^k\cdot\vec x}+\sum_{j=0}^{k-1}\norm{\vec A^j\cdot\vec b}\\
    &\leq(k\cdot m_1)^{n\cdot k}\cdot\norm{\vec x}+n\cdot k\cdot(k\cdot
    m_1)^{n\cdot k}\cdot m_2\\
    &=(k\cdot m_1)^{n\cdot k}\cdot(\norm{\vec x}+n\cdot k\cdot m_2)\qedhere
  \end{align*}
\end{proof}
\addtocounter{theorem}{-1}

\paragraph{Length Function Theorem}
It remains to apply the bounds of \citet{dickson} on the length of
controlled $r$-bad sequences over $\mathbb N^k$:
\begin{proposition}
  Trace boundedness for affine counter systems is in \textsc{Ack}.
\end{proposition}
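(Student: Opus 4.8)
The plan is to turn the semi-algorithm of \autoref{sub:ubound} into a terminating procedure and to bound its resource consumption by $O(F_\omega(p(n)))$, by feeding the control established for affine systems into a miniaturization (length-function) bound for controlled bad sequences over $(\nat\uplus\{\omega\})^k$. Concretely, I would rely on \autoref{lem:ubound}: for a $(\rho,g,t)$-controlled cd-WSTS it produces, via the shortest-choice exhaustive search, a $(\rho,g^2,t)$-controlled sequence $(s_{i,j})_{i\geq 0,j\in J_i}$ that is good if and only if the system is trace unbounded. The decision procedure generates this sequence and tests for goodness; the whole point is to show it only needs to run for boundedly many steps.

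First I would fix the control. Taking $\rho$, $m_1$, $m_2$ as above, one verifies that $\mathcal{S}$ is $(\rho,g,t)$-controlled for $t=\rho(X_0)$ and a suitable \emph{elementary} control $g$, for instance an affine $g(x)=C\cdot x+C'$ whose coefficients $C,C'$ are exponential in the instance size (one may take $C=(k\cdot m_1)^{2k}$ and $C'=C\cdot k\cdot m_2$): the single-step inequality $\rho(f(X))\leq k\cdot m_1\cdot\rho(X)+m_2$ yields control condition (2), and Claim~\ref{cl:affine}, together with a routine estimate of $g^{|u|}$, yields control condition (3). The key feature is that $g$ is elementary, hence primitive-recursive, with parameters polynomial or at worst exponential in the size of the affine counter system.

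The central step is to bound the depth of the search. If $\mathcal{S}$ is trace bounded, then by \autoref{lem:ubound} the $(\rho,g^2,t)$-controlled sequence $(s_{i,j})$ is \emph{bad}, and a miniaturization of Dickson's Lemma bounds its length: since an $\omega$-coordinate is maximal and can only shorten a bad sequence, the known length-function theorem for $g^2$-controlled bad sequences over $\nat^k$ applies verbatim to $(\nat\uplus\{\omega\})^k$. Calling $L(k,g^2,t)$ this bound, the procedure explores $(s_{i,j})$ up to index $L(k,g^2,t)+1$; if a good pair $s_{i,j}\leq s_{i',j'}$ appears, it yields an increasing fork by Claim~\ref{cl:fork}, hence unboundedness by \autoref{lem:fork-implies-unbounded}, and otherwise the controlled sequence cannot be good, so $\mathcal{S}$ is bounded. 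This makes the semi-algorithm terminate, and the bounded expression can then be read off as in \autoref{cordec}.

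Finally I would collect the complexity. For a fixed dimension $k$ the controlled length function $L(k,g^2,t)$ over $\nat^k$ is primitive-recursive, at a finite level of the hierarchy, but since $k$ is part of the input of the affine counter system the uniform bound over all $k$ lands exactly at level $F_\omega$ (this is the Ackermannian miniaturization of Dickson's Lemma). Each individual step of the search---computing a single or accelerated transition by $\infty$-effectiveness and comparing two configurations---costs resources primitive-recursive in the weights $\rho(s_{i,j})\leq (g^2)^{L(k,g^2,t)}(t)$, and $g$ is itself elementary; because $F_\omega$ eventually majorizes every primitive-recursive function and the class is closed under primitive-recursive reductions, the total cost stays within $O(F_\omega(p(n)))$ for a polynomial $p$, matching the $F_\omega$ lower bound of \autoref{prop:npr-acs}. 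The main obstacle is precisely this miniaturization step: extracting the sharp $F_\omega$ length bound for $g^2$-controlled bad sequences over $(\nat\uplus\{\omega\})^k$ with $k$ varying, and checking that composing this Ackermannian bound with the elementary control $g$ does not escape $F_\omega$.
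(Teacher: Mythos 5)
Your overall strategy is the paper's: control the completed affine counter system via Claim~\ref{cl:affine}, run the exhaustive search of Lemma~\ref{lem:ubound} to obtain a $(\rho,g^2,t)$-controlled sequence that is good iff the system is trace unbounded, and cap the length of its bad prefix by an Ackermannian miniaturization of Dickson's Lemma, uniformly in the dimension $k$. The step that does not survive scrutiny is the claim that the length-function theorem for controlled bad sequences over $\nat^k$ ``applies verbatim'' to $(\nat\uplus\{\omega\})^k$ because an $\omega$-coordinate ``can only shorten a bad sequence.'' The opposite holds: since $\omega\not\leq n$ for finite $n$, a configuration with $X[j]=\omega$ can only be dominated by a \emph{later} configuration that also has $\omega$ in coordinate $j$, so $\omega$-entries occurring early make it \emph{easier} for a sequence to stay bad. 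With the weight $\rho$ used for affine counter systems (which ignores $\omega$-coordinates, so $\rho(\omega)=0$), already for $k=1$ and $t=0$ the sequence $\omega,\,g(0),\,g(0)-1,\dots,0$ is bad and $(\rho,g,0)$-controlled, of length $g(0)+2$, whereas every controlled bad sequence over $\nat$ with $t=0$ has length $1$. Thus a bad sequence over $(\nat\uplus\{\omega\})^k$ neither projects to a bad sequence over $\nat^k$ nor obeys the $\nat^k$ length bound with the same parameters, and your termination argument, as written, fails at exactly this point.

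The paper closes this gap with a reflection argument: project $(s_{i,j})$ to $\nat^k$ by sending $\omega$ to $0$, require the projected sequence to be $2^k$-good rather than merely good, and pigeonhole over the $2^k$ possible patterns of $\omega$-coordinates in $\{1,\omega\}^k$ to extract two indices with identical $\omega$-pattern and comparable finite parts, hence genuinely comparable configurations $s_{k_i}\leq s_{k_j}$ to which Claim~\ref{cl:fork} applies. This needs the $r$-bad version of the Dickson miniaturization (a bound on $2^k$-bad controlled sequences of $\nat^k$), which still yields a bound of the form $F^p_{l+k-1}(\rho(X_0))$ and hence lands at $F_\omega$ uniformly in $k$. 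Once this step is inserted, the rest of your argument---the explicit elementary control $g$ derived from Claim~\ref{cl:affine}, cutting off the search at the miniaturization bound, and absorbing the primitive-recursive per-step overhead into $F_\omega$---is sound and matches the paper's proof.
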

\begin{proof}
  Define the projections $p_1$ and $p_2$ from
  $(\mathbb{N}\uplus\{\omega\})$ to $\mathbb{N}$ and $\{1,\omega\}$
  respectively by 
  \begin{align*}
    p_1(\omega)&\eqdef 0&p_1(n)&\eqdef n&
    p_2(\omega)&\eqdef\omega&p_2(n)&\eqdef 1
  \end{align*}
  for $n<\omega$, and their natural extensions from
  $(\mathbb{N}\uplus\{\omega\})^k$ to $\mathbb{N}^k$ and $\{1,\omega\}^k$.

  Consider the projection $(\vec x_{i,j})_{i\geq 0,j\in
    J_i}=\left(p_1(s_{i,j})\right)_{i\geq 0,j\in J_i}$ on
  $\mathbb{N}^k$ of the sequence defined in \autoref{sub:good}.  This
  sequence is $(\norm{.},g,\norm{\vec x_0})$-controlled if
  $(s_{i,j})_{i\geq 0,j\in J_i}$ is $(\norm{.},g,\norm{\vec
    x_0})$-controlled, and is $r$-good for any finite $r$ whenever the
  trace set of the affine counter system is unbounded.

  Conversely, if the sequence $(\vec x_{i,j})_{i\geq 0,j\in J_i}$ is
  $2^k$-good for the product ordering $\leq$ on $\mathbb{N}^k$, then
  the system has an increasing fork.  Indeed, let $r=2^k$; by
  definition of a $r$-good sequence, we can extract an increasing
  chain $\vec x_{k_0}\leq\vec x_{k_1}\leq\cdots\leq\vec x_{k_r}$ from
  the sequence $(\vec x_{i,j})_{i\geq 0,j\in J_i}$.  Since $r=2^k$, there
  exist $k_i<k_j$ such that $p_2(s_{k_i})=p_2(s_{k_j})$, and therefore
  $s_{k_i}\leq s_{k_j}$ and we can apply \autoref{cl:fork} to
  construct an increasing fork.

  By \autoref{cl:affine}, the sequence $(\vec x_{i,j})_{i\geq 0,j\in
    J_i}$ is $(\norm{.},g,\norm{\vec x_0})$-controlled by a
  primitive-recursive function $g$ (that depends on the size
  $\norm{L}$ of the affine counter system $\tup{L,\vec x_0}$ at hand),
  hence it is of length $\leq F_\omega(p(k,\norm{L},\norm{\vec
    x_0}))$ for some fixed primitive-recursive function
  $p$~\citep{dickson}.
\end{proof}

\subsubsection{$\mathbf{F}_{\omega^\omega}$ Upper Bound for Lossy Channel Systems}\label{sub:lcs_ubound}
\autoref{propnmrlcs} established a \textsc{HAck} lower
bound for the trace boundedness problem in lossy channel systems.  We match
this lower bound, thus establishing that trace boundedness is
\textsc{HAck}-complete.  As in \autoref{sub:acs_ubound}, we need
two results in order to instantiate our recipe for upper bounds: a
control on complete functional LCS, and a miniaturization for their
sequences of states.

\paragraph{Controlling Complete Functional LCS}
According to \citet{fwlcs}, LCS queue contents on an alphabet $M$
can be represented by \emph{simple regular expressions} (SRE) over
$M$, which are finite unions of \emph{products} over $M$.  %
Products, endowed with the language
inclusion ordering, suffice for the completion of functional LCS
\citep[Section~5]{cwsts1}, and thus for the representation of the
effect of accelerated sequences in functional LCS.

\setcounter{claim}{0}
\addtocounter{theorem}{2}
Products can be seen as finite sequences over a finite alphabet 
\begin{align*}
  \Pi_M=\{(a+\varepsilon)\mid a\in M\}\cup\{A^\ast\mid
  A\subseteq M\}
\end{align*}
with $|\Pi_M|=2^{|M|}+|M|$, with associated languages
$L(a+\varepsilon)\eqdef\{a,\varepsilon\}$ and $L(A^\ast)\eqdef
A^\ast$.  We consider the scattered subword ordering $\subword$ on
$\Pi_M^\ast$, defined as usual by $a_1\cdots a_m\subword b_1\cdots
b_n$ if there exists a monotone injection
$f:\{1,\dots,m\}\rightarrow\{1,\dots,n\}$ such that, for all $1\leq
i\leq n$, $a_i=b_{f(i)}$.  The scattered subword ordering is
compatible with language inclusion, thus we can consider the subword
ordering instead of language inclusion in our completed functional
LCS:\footnote{We could first define a partial ordering $\leq$ on
  $\Pi_M$ such that $(a+\varepsilon)\leq A^\ast$ whenever $a\in A$,
  and $A^\ast\leq B^\ast$ whenever $A\subseteq B$.  The corresponding
  subword ordering (using $a_i\leq b_{f(i)}$ in its definition) would
  be equivalent to language inclusion, and result in \emph{shorter}
  bad sequences.}
\begin{claim}\label{cl:subprod}
  For all products $\pi$, $\pi'$ in $\Pi_M^\ast$, $\pi\subword\pi'$
  implies $L(\pi)\subseteq L(\pi')$.
\end{claim}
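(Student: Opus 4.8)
The plan is to rely on the one structural feature shared by all atoms of $\Pi_M$: each of them denotes a language containing the empty word. Concretely, $\varepsilon\in L((a+\varepsilon))=\{\varepsilon,a\}$ for every $a$ in $M$, and $\varepsilon\in L(A^\ast)$ for every $A\subseteq M$ (with $\emptyset^\ast=\{\varepsilon\}$). Consequently every concatenation of atoms, i.e.\ every product in $\Pi_M^\ast$, also contains $\varepsilon$ in its language. I would also record at the outset that $\subword$ as used here matches atoms by \emph{equality} ($a_i=b_{f(i)}$), so matched atoms carry identical languages and the refined ordering of the footnote plays no role in this direction.

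First I would fix a witnessing monotone injection $f:\{1,\dots,m\}\to\{1,\dots,n\}$ for $\pi=e_1\cdots e_m\subword b_1\cdots b_n=\pi'$, with $e_i=b_{f(i)}$, and group the positions of $\pi'$ missed by $f$ into the gaps around the matched ones. This yields a factorisation
\begin{equation*}
  \pi'=\gamma_0\,e_1\,\gamma_1\,e_2\cdots e_m\,\gamma_m
\end{equation*}
with each $\gamma_i\in\Pi_M^\ast$ the (possibly empty) block of inserted atoms, and hence, since concatenation is monotone for language inclusion,
\begin{equation*}
  L(\pi')=L(\gamma_0)\,L(e_1)\,L(\gamma_1)\cdots L(e_m)\,L(\gamma_m)\;.
\end{equation*}

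The final step is to embed $L(\pi)$ into $L(\pi')$ by padding with empty words. Given $w\in L(\pi)=L(e_1)\cdots L(e_m)$, I would decompose $w=w_1\cdots w_m$ with $w_i\in L(e_i)$; then, choosing $\varepsilon\in L(\gamma_i)$ for each inserted block, the same $w=\varepsilon\,w_1\,\varepsilon\,w_2\cdots w_m\,\varepsilon$ lies in the product $L(\pi')$ above, so $L(\pi)\subseteq L(\pi')$. I do not expect any genuine obstacle: the whole content of the claim is the elementary observation that every generator of $\Pi_M$ can emit $\varepsilon$, which is exactly what lets each inserted atom be skipped; the only point to state carefully is that matched atoms are identical, so each $L(e_i)$ appears unchanged on both sides.
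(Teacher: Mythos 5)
Your proof is correct. The paper actually states Claim~\ref{cl:subprod} without any proof, treating it as immediate from the fact that the subword ordering is ``compatible with language inclusion''; your argument---matched atoms are literally equal under the paper's equality-based definition of $\subword$ on $\Pi_M^\ast$, and every inserted atom $(a+\varepsilon)$ or $A^\ast$ can emit $\varepsilon$, so any word of $L(\pi)$ survives the padding into $L(\pi')$---is precisely the elementary justification the authors leave implicit. Your remark that the refined ordering of the footnote plays no role here is also accurate, since that ordering is only proposed as an optimization to obtain shorter bad sequences.
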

Let us fix for the remainder of this section an arbitrary complete
functional LCS $\mathcal{C}=\tup{Q\times
  (\Pi_M)^\ast,(q_0,\varepsilon),\{!,?\}\times
  M,{\rightarrow},{\leq}}$, where $\leq$ is defined on
configurations in $Q\times(\Pi_M)^\ast$ by $(q,\pi)\leq(q',\pi')$ if $q=q'$
and $L(\pi)\subseteq L(\pi')$.

\begin{claim}\label{cl:control_lcs}
  Functional LCS are controlled by $(\norm{.},g,0)$ with
  $\norm{q,\pi}\eqdef|\pi|$ and $g(x)\eqdef 2^{x+2}+x$.
\end{claim}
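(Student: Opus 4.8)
The plan is to check the three clauses of the \emph{controlled} definition directly on the product representation $\pi\in\Pi_M^\ast$ of the completed functional LCS, with $\rho(q,\pi)=|\pi|$ and $g(x)=2^{x+2}+x$ (which is indeed monotone with $g(x)>x$). Clause~(1) is immediate: the initial configuration is $(q_0,\varepsilon)$, so $\rho(q_0,\varepsilon)=|\varepsilon|=0=t$.

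For clause~(2) I would compute the effect of a single completed transition on a product. A write $!a$ appends one letter to the queue, so on the completion it sends the downward-closed set $L(\pi)$ to $\dc(L(\pi)\,a)=L(\pi)\cdot(a+\varepsilon)$; hence the completed write maps $\pi$ to the normal form of $\pi\,(a+\varepsilon)$, which has at most $|\pi|+1$ atoms (normalisation can only merge atoms, e.g.\ $A^\ast(a+\varepsilon)=A^\ast$ when $a\in A$). A read $?a$ drops every atom up to the first one able to emit an $a$: writing $j^\ast$ for the least index whose atom is $(a+\varepsilon)$ or $A^\ast$ with $a\in A$, the completed read returns the suffix of $\pi$ starting at index $j^\ast$ (a leading $A^\ast$ is retained, since the star is idempotent) or $j^\ast+1$ (a leading $(a+\varepsilon)$ is consumed). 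Because $\varepsilon\in L(\alpha)$ for \emph{every} atom $\alpha$, the languages for later occurrences of $a$ are contained in that of this earliest suffix, so the result collapses to a single product of length at most $|\pi|$. In both cases $|\pi'|\le|\pi|+1\le 2^{|\pi|+2}+|\pi|=g(|\pi|)$, with ample slack.

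Clause~(3) is the crux, and where the exponential term in $g$ earns its keep. I would describe $u^\omega(\pi)$ as the least upper bound of the $\subword$-increasing chain $(u^n(\pi))_n$, which lives in the product completion by Claim~\ref{cl:subprod} together with the completion of functional LCS. One pass of $u$ is $|u|$ single steps, so clause~(2) already yields $|u(\pi)|\le g^{|u|}(|\pi|)$; the remaining task is to show that passing to the lub does not push the atom count beyond this bound. The intended argument is that the lub only replaces the segments that genuinely grow under iteration of the loop by finitely many star atoms $A^\ast$, so that the unbounded repetition produced by iterating $u$ is \emph{absorbed} into stars rather than spelled out, whence $|u^\omega(\pi)|\le g^{|u|}(|\pi|)$. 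The main obstacle is precisely this length bound on the accelerated product: one must argue that the loop-detection underlying the acceleration introduces only boundedly many new star atoms and that normalisation over the exponentially large alphabet $\Pi_M$ does not blow up the number of atoms. The deliberately generous term $2^{x+2}$ in $g$ is chosen to absorb this worst-case growth while staying primitive recursive, so that the miniaturisation of \autoref{sub:lcs_ubound} can convert the control into the desired $F_{\omega^\omega}$ upper bound.
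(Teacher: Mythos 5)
Your clauses (1) and (2) are fine: the base case is trivial, and your direct computation of the effect of a single completed read or write on a product is a correct re-derivation of what the paper simply cites (Lemma~6.1 of \citet{fwlcs}, a single step adds at most one atom $(a+\varepsilon)$), so $|\pi'|\leq|\pi|+1\leq g(|\pi|)$ holds with room to spare.

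Clause (3), however, is where the claim actually lives, and there you have named the obstacle rather than removed it. The bound $|u^\omega(\pi)|\leq g^{|u|}(|\pi|)$ is asserted (``the intended argument is\dots'', ``one must argue that\dots'') but never established, and the intuition you offer --- that the lub merely replaces the growing segments by ``finitely many star atoms $A^\ast$'' --- is not an accurate picture of what the acceleration does. The paper closes this step by appealing to the explicit case analysis of loop acceleration on SREs in the proof of Lemma~6.4 of \citet{fwlcs}: since $s\ru{u}s''$ with $s\leq s''$, one is in one of the first three subcases there, and the quantitative outcome is that the first two subcases add a \emph{single} atom $A^\ast$, while the third subcase adds up to $|u|^{|\pi|+2}$ atoms of the form $(a+\varepsilon)$ --- not star atoms at all, and not a number bounded independently of $\pi$. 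It is precisely this $|u|^{|\pi|+2}$ count that the exponential in $g(x)=2^{x+2}+x$ is calibrated to absorb: one checks $|\pi|+|u|^{|\pi|+2}\leq g^{|u|}(|\pi|)$, which is the whole content of the accelerated clause. Without either reproducing that case analysis or citing it, your proof of the claim has a genuine hole exactly at the step you yourself flag as ``the crux''; the choice of $g$ cannot be justified by generosity alone, since the quantity it must dominate is only revealed by the structure of the accelerated product.
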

\begin{proof}
  The claim follows from the results of \citet{fwlcs} on SREs.  Let
  the current configuration be $s\eqdef(q,\pi)$.

  In the case of a single transition step $s\ru{a}s'$, a product grows
  by at most one atomic expression $(a+\varepsilon)$
  \citep[Lemma~6.1]{fwlcs}.

  In the case of an accelerated transition step $s\rua{u^\omega}s'$ on
  a sequence $u$, since $s\ru{u}s''$ with $s\leq s''$, we are in one
  of the first three subcases of the proof of Lemma~6.4 of
  \citet{fwlcs}: the first two subcases yield the addition of an atomic
  expression $A^\ast$, while the third subcase adds at most $|u|^{|\pi|+2}$
  atomic expressions of form $(a+\varepsilon)$.
\end{proof}

\paragraph{Length Function Theorem}
\Citet{SS-icalp2011} give an upper bound on the least $N$ such that any
$(\norm{.},g,n)$-controlled sequence $\sigma$ with $|\sigma|=N$ of elements in
$(\Sigma^\ast,\subword)$ is $r$-good.\addtocounter{theorem}{-2}
\begin{fact}[\citeay{SS-icalp2011}]\label{fact:hig}
  Let $g$ be a primitive-recursive unary function and
  $n$ in $\mathbb{N}$.  Then, if $\sigma$ is a
  $(\norm{.},g,n)$-controlled $r$-bad sequence of
  $(\Sigma^\ast,\subword)$, then $|\sigma|\leq
  F^r_{\omega^{|\Sigma|-1}}(p(n))$ for some primitive-recursive~$p$.
\end{fact}

\begin{proposition}
  Trace boundedness for functional LCS is in \textsc{HAck}.
\end{proposition}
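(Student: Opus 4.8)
The plan is to instantiate, for functional LCS, the very same recipe used for affine counter systems in \autoref{sub:acs_ubound}: combine a control of the system (\autoref{cl:control_lcs}) with a miniaturization for the underlying wqo (\autoref{fact:hig}), applied to the good $(\rho,g^2,0)$-controlled sequence $(s_{i,j})_{i\geq 0,j\in J_i}$ produced by \autoref{lem:ubound}. Here a configuration is a pair $(q,\pi)$ with $q\in Q$ and $\pi\in\Pi_M^\ast$, and by \autoref{cl:control_lcs} the system is $(\rho,g,0)$-controlled with $\rho(q,\pi)=|\pi|$ and the primitive-recursive $g(x)=2^{x+2}+x$; hence the sequence of \autoref{lem:ubound} is $(\rho,g^2,0)$-controlled. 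First I would project this sequence onto its product components, obtaining a sequence $(\pi_{i,j})_{i\geq 0,j\in J_i}$ in $\Pi_M^\ast$ that is $(|\cdot|,g^2,0)$-controlled, since $|\pi_{i,j}|=\rho(s_{i,j})$.

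The next step handles the finite state component $Q$ by a pigeonhole argument, converting goodness for the subword ordering $\subword$ on $\Pi_M^\ast$ into goodness for the genuine ordering on configurations. Concretely, if $(\pi_{i,j})$ is $|Q|$-good, then $\mathcal{S}$ has an increasing fork: a $|Q|$-good sequence yields a $\subword$-increasing chain of $|Q|+1$ products, among whose indices two, say at positions $k<k'$, carry the same state $q\in Q$; since $\pi_k\subword\pi_{k'}$ entails $L(\pi_k)\subseteq L(\pi_{k'})$ by \autoref{cl:subprod}, the configurations satisfy $s_k\leq s_{k'}$, and \autoref{cl:fork} then exhibits an increasing fork, whence unboundedness by \autoref{lem:fork-implies-unbounded}. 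Conversely, when $\mathcal{S}$ is unbounded the sequence $(s_{i,j})$ is infinite, so its projection is an infinite sequence in the wqo $(\Pi_M^\ast,\subword)$ and is therefore $r$-good for every finite $r$, in particular $|Q|$-good. Thus unboundedness is equivalent to $|Q|$-goodness of $(\pi_{i,j})$.

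It then remains to bound the depth at which this $|Q|$-goodness must surface. I would apply \autoref{fact:hig} with alphabet $\Pi_M$, hence with $|\Pi_M|=2^{|M|}+|M|$, control function $g^2$, threshold $t=0$ and order $r=|Q|$: the longest $|Q|$-bad prefix of $(\pi_{i,j})$ has length at most $F_{\omega^{f(|\Pi_M|)}}(|Q|)$ for the primitive-recursive $f$ supplied by the fact. Since a bounded system forces the construction of \autoref{lem:ubound} to get stuck before this length (\autoref{lem-remove-star} requires unboundedness at each step), whereas an unbounded one exhibits a $|Q|$-good pair within it, searching the sequence up to depth $F_{\omega^{f(|\Pi_M|)}}(|Q|)$ decides boundedness, each step being $\infty$-effective on configurations whose size is controlled by the same bound.

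The last step, and the \emph{main obstacle}, is to cast $F_{\omega^{f(|\Pi_M|)}}(|Q|)$ back into the class $F_{\omega^\omega}$. The index $\omega^{f(|\Pi_M|)}$ is below $\omega^\omega$ for every fixed instance, yet it grows with the input through the primitive-recursive $f$ and the exponential $|\Pi_M|$, so the bound diagonalizes over the multiply-recursive levels and lands exactly at $\omega^\omega$. I would make this precise using the fundamental-sequence identity $F_{\omega^\omega}(m)=F_{\omega^m}(m)$: setting $m=\max(f(|\Pi_M|),|Q|)$, monotonicity of $F_\alpha$ in both the index and the argument gives $F_{\omega^{f(|\Pi_M|)}}(|Q|)\leq F_{\omega^m}(m)=F_{\omega^\omega}(m)$, where $m$ is primitive-recursive in the instance size. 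The delicate point is that $m$ is not polynomial; one closes the argument by invoking the robustness of $F_{\omega^\omega}$ under substitution of functions below $F_{\omega^\omega}$ (here primitive recursive, a fortiori multiply recursive) in the argument, which places the whole procedure in $F_{\omega^\omega}$ and, matching \autoref{propnmrlcs}, yields $F_{\omega^\omega}$-completeness.
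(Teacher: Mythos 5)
Your proof is correct and follows essentially the same route as the paper's: project the controlled good sequence of \autoref{sub:good} onto its $\Pi_M^\ast$ components, use a pigeonhole on $Q$ to convert $\subword$-goodness into an increasing pair of configurations (hence an increasing fork via Claim~\ref{cl:fork}), and bound the bad prefix length by Fact~\ref{fact:hig}. The only difference is that you spell out the final step of absorbing $F_{\omega^{f(|\Pi_M|)}}(|Q|)$ into the class $F_{\omega^\omega}$ via the fundamental-sequence identity and closure under primitive-recursive substitution, a point the paper leaves implicit.
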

\begin{proof}
  We consider the sequence of products $(\pi_{i,j})_{i\geq 0,j\in
    J_i}$ extracted from the sequence of configurations
  $(s_{i,j})_{i\geq 0,j\in J_i}$ defined in \autoref{sub:good}.  This
  sequence of configurations is $r$-good for any finite $r$ whenever
  the trace set of the LCS is unbounded.  Conversely, if the sequence
  $(\pi_{i,j})_{i\geq 0,j\in J_i}$ is $(|Q|+1)$-good for the subword
  ordering $\subword$, then $\mathcal{C}$ has an increasing fork.
  Indeed, let $r=|Q|+1$; by definition of an $r$-good sequence, we can
  extract an increasing chain
  $\pi_{k_0}\subword\pi_{k_1}\subword\cdots\subword\pi_{k_r}$ of
  length $|Q|+1$ from the sequence $(\pi_{i,j})_{i\geq 0,j\in J_i}$.
  By \autoref{cl:subprod}, this implies $L(\pi_{k_0})\subseteq
  L(\pi_{k_1})\subseteq\cdots\subseteq L(\pi_{k_r})$.  Since $r=|Q|$,
  there exist $k_i<k_j$ such that $s_{k_i}=(q,\pi_{k_i})$ and
  $s_{k_j}=(q,\pi_{k_j})$ for some $q$ in $Q$.  Thus $s_{k_i}\leq
  s_{k_j}$, and we can apply \autoref{cl:fork} to construct an
  increasing fork.

  As the sequence $(\pi_{i,j})_{i\geq 0,j\in J_i}$ is
  $(\norm{.},g,0)$-controlled by a primitive-recursive function
  according to \autoref{cl:control_lcs}, the length of the sequence
  $(s_{i,j})_{i\geq 0,j\in J_i}$ need not exceed
  $F^{|Q|}_{\omega^{|\Pi_M|-1}}(p(|Q|))$ for some primitive-recursive
  $p$ by \autoref{fact:hig}, thus the upper bound of is
  multiply-recursive, and we obtain the desired
  $\mathbf{F}_{\omega^\omega}$ upper bound.
\end{proof}

\section{Verifying Trace Bounded WSTS}\label{sec:live}
As already mentioned in the introduction, liveness is generally
undecidable for cd-WSTS.  We show in this section
that it becomes decidable for trace bounded systems obtained as the
product of a cd-WSTS $\mathcal{S}$ with a
deterministic Rabin automaton: we prove that it is decidable whether the
language of $\omega$-words of such a system is empty
(\autoref{sub:declive}) and apply it to the LTL model checking problem
(\autoref{sub:decltl}).  We conclude the section with a short survey on
decidability issues when model checking WSTS (\autoref{sub:beyltl}); but
first we emphasize again the interest of trace boundedness for forward
analysis techniques.

\subsection{Forward Analysis}\label{sub:cover}
Recall from the introduction that a forward analysis of the set
of reachable states in an infinite LTS typically relies on
\emph{acceleration techniques} \citep[see e.g.][]{flataccel}
applied to loops $w$ in $\Sigma^\ast$, provided one can effectively
compute the effect of $w^\ast$.  Computing the full reachability set
(resp.\ coverability set for cd-WSTS) using a sequence $w_1^\ast\cdots
w_n^\ast$ requires post$^\ast$ flattability (resp.\ cover
flattability); however, as seen with \autoref{propreachlcs}
\citep[resp.][Proposition~6]{cwsts2}, both these properties are
already undecidable for cd-WSTS.

Trace bounded systems answer this issue since we can compute an
appropriate finite sequence $w_1$, \dots, $w_n$ and use it as
acceleration sequence.  Thus forward analysis techniques become
complete for trace bounded systems.  The Presburger accelerable counter
systems of \citet{foctlpr} are an example where, thanks to an
appropriate representation for reachable states, the full reachability
set is computable in the trace bounded case.
In a more WSTS-centric setting, the forward $\mathsf{Clover}$ procedure of
\citeauthor{cwsts2} for $\infty$-effective cd-WSTS terminates in the
cover flattable case \citep[Theorem~3]{cwsts2}, thus:
\begin{corollary}\label{cor:cover}
  Let $\mathcal{S}$ be a trace bounded $\infty$-effective cd-WSTS.  Then a
  finite representation of $\mathsf{Cover}_\mathcal{S}(s_0)$ can effectively
  be computed.
\end{corollary}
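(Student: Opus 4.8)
The plan is to obtain the result directly from the hierarchy of flattability notions introduced in \autoref{sec:prelim} together with the termination guarantee for the $\mathsf{Clover}$ procedure, so that essentially no new machinery is required. First I would record that, since $\mathcal{S}$ is trace bounded, its trace set $T(\mathcal{S})$ is a bounded language, which is exactly the defining condition of trace flattability (item~(1) after the definition of flattenings). By item~(2) of the same observation, trace flattability implies post$^\ast$ flattability, which in turn implies cover flattability; composing these implications shows that $\mathcal{S}$ is cover flattable.

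Next I would exploit the two standing effectiveness hypotheses: $\mathcal{S}$ is a cd-WSTS and is $\infty$-effective, so the forward $\mathsf{Clover}$ procedure of \citet{cwsts2} can be run on it. By their Theorem~3, $\mathsf{Clover}$ terminates exactly on cover flattable $\infty$-effective cd-WSTS, and upon termination it returns a finite collection of (possibly limit) states whose downward closure is precisely $\mathsf{Cover}_\mathcal{S}(s_0)$. Combining this with the previous paragraph, $\mathsf{Clover}$ applied to $\mathcal{S}$ is guaranteed to halt and to output the sought finite representation of the cover set, which concludes the argument.

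The one point I would flag as the crux is not the existence of the implications—those are routine—but the assurance that \emph{termination of a Karp--Miller-style procedure over a continuous dcpo coincides with the output being a correct, finite description of the cover}; this is not automatic in general and is exactly what the cited Theorem~3 of \citet{cwsts2} supplies, so the entire weight of the proof rests on invoking it correctly. I would also stress that \autoref{cordec} is \emph{not} needed here: boundedness is part of the hypothesis rather than something to be decided, so we perform no test at all and simply execute $\mathsf{Clover}$, knowing a priori from cover flattability that it halts with the desired finite representation.
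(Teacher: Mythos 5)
Your argument is correct and is essentially the paper's own: the paper derives the corollary directly from the chain trace bounded $\Rightarrow$ trace flattable $\Rightarrow$ post$^\ast$ flattable $\Rightarrow$ cover flattable (the remarks following the definition of flattenings) combined with the termination of the $\mathsf{Clover}$ procedure on cover flattable $\infty$-effective cd-WSTS \citep[Theorem~3]{cwsts2}. Your closing observation that \autoref{cordec} plays no role here is also accurate.
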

\noindent Using the $\mathsf{Cover}$ set, one can
answer state boundedness questions for WSTS.  Furthermore,
$\mathsf{Cover}$ sets and reachability sets coincide for lossy systems,
and lossy channel systems in particular.

\subsection{Deciding $\omega$-Language Emptiness}\label{sub:declive}
\paragraph{$\omega$-Regular Languages}
Let us recall the Rabin acceptance condition for $\omega$-words
(indeed, our restriction to deterministic systems demands a stronger
condition than the B\"uchi one).
Let us set some notation for infinite words in a labeled
transition system \mbox{$\mathcal{S}=\tup{S,s_0,\Sigma,\rightarrow}$}.  A
sequence of states $\sigma$ in $S^\omega$ is an \emph{infinite
  execution} for the infinite word $a_0a_1\cdots$ in $\Sigma^\omega$ if
$\sigma=s_0s_1\cdots$ with $s_i\ru{a_i} s_{i+1}$ for all $i$.   We
denote by $T_\omega(\mathcal{S})$ the set of infinite words that have
an execution.  The \emph{infinity set} of an infinite sequence
$\sigma=s_0s_1\cdots$ in $S^\omega$ is the set of
symbols that appear infinitely often in $\sigma$:
  $\mathsf{inf}(\sigma)=\{s\in S\mid|\{i\in\mathbb{N}\mid
s_i=s\}|=\omega\}$.%

Let $\mathcal{S}=\tup{S,s_0,\Sigma,\rightarrow,\leq}$ be a
deterministic WSTS and $\mathcal{A}=\tup{Q,q_0,\Sigma,\delta}$ a DFA.
A \emph{Rabin acceptance condition} is a finite
set of pairs $(E_i,F_i)_i$ of finite subsets of $Q$.  An infinite word $w$
in $\Sigma^\omega$ is accepted by $\mathcal{S}\times\mathcal{A}$ if
its infinite execution $\sigma$ over $(S\times Q)^\omega$ verifies
  $\bigvee_i(\mathsf{inf}(\sigma)\cap(S\times
  E_i)=\emptyset\wedge\mathsf{inf}(\sigma)\cap(S\times
  F_i)\neq\emptyset)$.  %
The set of accepted infinite words is denoted by
$L_\omega(\mathcal{S}\times\mathcal{A},(E_i,F_i)_i)$%
.  Thus
an infinite run is accepting if, for some $i$, it goes only finitely
often through the states of $E_i$, but infinitely often through the
states of~$F_i$.

\paragraph{Deciding Emptiness}%
We reduce the emptiness problem for
$L_\omega(\mathcal{S}\times\mathcal{A},(E_i,F_i)_i)$ to the
trace boundedness problem for a finite set of cd-WSTS,
which is decidable by \autoref{cordec}.
Remark that the following does not hold for nondeterministic
systems, since any system can be turned into a trace bounded
one by simply relabeling every transition with a single
letter~$a$.

\begin{theorem}\label{thlive}
  Let $\mathcal{S}$ be an $\infty$-effective cd-WSTS,
  $\mathcal{A}$ a DFA, and $(E_i,F_i)_i$ a Rabin condition.  If
  $\mathcal{S}\times\mathcal{A}$ is trace bounded, then it is decidable
  whether $L_\omega(\mathcal{S}\times\mathcal{A},(E_i,F_i)_i)$ is empty.
\end{theorem}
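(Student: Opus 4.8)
The plan is to exploit that the Rabin condition $(E_i,F_i)_i$ constrains only the automaton component, since $E_i,F_i\subseteq Q$ and acceptance is phrased through $S\times E_i$ and $S\times F_i$; hence an infinite word $\alpha\in T_\omega(\mathcal{S}\times\mathcal{A})$ is accepted exactly when the (deterministic) run of $\mathcal{A}$ on $\alpha$ satisfies, for some $i$, the pair $(E_i,F_i)$. First I would invoke \autoref{cordec} on the trace bounded system $\mathcal{S}\times\mathcal{A}$ to compute a bounded expression $w_1^\ast\cdots w_n^\ast$ for $T(\mathcal{S}\times\mathcal{A})$. A short structural lemma on the adherence of a bounded language then shows that every infinite trace has the shape $w_1^{k_1}\cdots w_{\ell-1}^{k_{\ell-1}}w_\ell^\omega$ for some $1\le\ell\le n$ and $k_1,\dots,k_{\ell-1}\in\nat$; in particular the automaton run on any such $\alpha$ is ultimately periodic.

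Next I would reduce $\omega$-emptiness to the existence of a \emph{lasso}: a reachable product configuration $c=(s,q)$ together with $k\ge 1$ such that $c\ru{w_\ell^k}c'$ with $c'\ge_\times c$, where the run of $w_\ell^k$ from $c$ visits $S\times F_i$ and avoids $S\times E_i$. Soundness is the crucial monotonicity argument: since $\le_\times$ forces equality on the $\mathcal{A}$-component, $c'\ge_\times c$ means $c'=(s',q)$ with $s'\ge s$; monotonicity of the product WSTS then lets us re-fire $w_\ell^k$ from $c'$, and determinism of $\mathcal{A}$ guarantees that each iteration traverses exactly the same automaton states, so $F_i$ is seen infinitely often and $E_i$ never again. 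Prefixing the finite run reaching $c$ (which can meet $E_i$ only finitely often) yields an accepting $\omega$-run. For completeness I would use the ultimately periodic shape above together with the wqo: along the $w_\ell$-block boundaries of the periodic tail, restricting to the positions carrying the period's automaton state, a Dickson/Higman argument produces two comparable configurations $c\le_\times c'$, i.e.\ exactly such a lasso.

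The reduction to boundedness proper then goes pair by pair. For each $i$ I would build a companion cd-WSTS $\mathcal{S}_i$ obtained by synchronising $\mathcal{S}\times\mathcal{A}$ with a finite monitor that forbids the $E_i$-states and records a commitment to revisit $F_i$, and by adjoining two fresh, differently labelled self-loops $a\ne b$ that become firable precisely on the configurations reached inside such a committed $F_i$-loop. By monotonicity the presence of an increasing loop through $F_i$ is then an increasing fork in the sense of \autoref{def-incfork}, whose two prongs $a$ and $b$ make $\{a,b\}^\ast$ a factor language of $T(\mathcal{S}_i)$; conversely, if no accepting lasso exists the added letters are never repeatedly enabled and $T(\mathcal{S}_i)$ stays bounded. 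By \autoref{thrf} (and \autoref{lem:finite} for realisability in finite runs) $\mathcal{S}_i$ is trace unbounded iff the pair $i$ contributes some accepted $\omega$-word. The construction preserves determinism, completeness and $\infty$-effectiveness, so \autoref{cordec} decides boundedness of each $\mathcal{S}_i$; since there are finitely many pairs, $L_\omega(\mathcal{S}\times\mathcal{A},(E_i,F_i)_i)$ is empty iff every $\mathcal{S}_i$ is bounded, which is decidable.

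I expect the main obstacle to be this last step: arranging the monitor and the gated $a,b$ loops so that \emph{unboundedness of $\mathcal{S}_i$ is equivalent to the existence of an accepting lasso}, while keeping $\mathcal{S}_i$ a genuine $\infty$-effective cd-WSTS to which \autoref{cordec} applies. The delicate point is that the $a,b$ loops must be enabled only after a \emph{genuinely increasing} return through $F_i$, rather than a mere visit, which is exactly what the increasing-fork machinery and the continuity-based \autoref{lem:finite} are there to certify; checking that forbidding $E_i$ and tracking the $F_i$-commitment can be done by a finite synchronous product---hence without breaking well-structuredness or trace boundedness---is the remaining bookkeeping.
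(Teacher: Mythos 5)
Your overall architecture matches the paper's: one boundedness test per Rabin pair, decided via \autoref{cordec}, with soundness by monotonicity and completeness by a wqo argument on configurations sharing an automaton state (your ``lasso'' is essentially the paper's Claim~\ref{claimFi}). The genuine gap is in the final gadget, which you yourself flag as the delicate point but do not resolve, and as described it cannot work. A finite synchronous monitor cannot enable your fresh letters ``precisely on the configurations reached inside a committed $F_i$-loop'': whether a return through $F_i$ is \emph{increasing} is a comparability condition on the infinite state space, not a locally checkable or regular property. And with any finite-state gate (say, ``some $F_i$-state has been visited since the commitment''), adjoining \emph{two} distinct self-loop letters $a\neq b$ at a single reachable configuration immediately puts all of $\{a,b\}^\ast$ into the trace set, so $\mathcal{S}_i$ becomes unbounded as soon as $F_i$ is merely \emph{reachable} once, even when no accepting lasso exists; your converse direction fails.

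The paper's construction avoids exactly this trap by using a \emph{single} fresh letter $f_i$ as a self-loop on every configuration of $S\times F_i$: $f_i^\ast$ alone is bounded, and unboundedness can only arise from an increasing fork in which $f_i$ is interleaved with a genuine increasing loop $w$ of the system through $F_i$ (giving $\{f_i,w\}^\ast$) --- i.e., the increasing-fork characterization itself performs the ``gating'' you were trying to build into the monitor. The second prong of the fork is the system's own loop $w$, not a second artificial letter. Two further ingredients you are missing are then needed: the $e_i$-markers at every $E_i$-visit together with the regular constraint $(\Sigma\uplus\{e_i\})^\ast f_i(\Sigma\uplus\{f_i\})^\ast$ to enforce ``finitely many $E_i$-visits'', and, crucially, Claim~\ref{claimEi} (the $e_i$-instrumented system is a GSM image of the bounded $T(\mathcal{S}\times\mathcal{A})$, hence still bounded), which is what forces any increasing fork of $\mathcal{S}_{i,2}$ to involve $f_i$ in one of its prongs and thus to certify an increasing return through $F_i$. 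Your opening step (computing an explicit bounded expression and deriving the ultimately periodic shape of infinite traces) is correct but not needed once the reduction is set up this way; it does not by itself yield a decision procedure, since deciding lasso existence directly is not obviously easier than the original problem.
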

\begin{proof}
  Set $\mathcal{S}=\tup{S,s_0,\Sigma,\rightarrow,\leq}$ and
  $\mathcal{A}=\tup{Q,q_0,\Sigma,\delta}$.

  We first construct one cd-WSTS
  $\mathcal{S}_{i,1}$ for 
  each condition $(E_i,F_i)$ by adding to $\Sigma$ a fresh symbol $e_i$, to
  $S\times Q$ the pairs $(s,q_i)$ where $s$ is in $S$ and $q_i$ is a
  fresh state for each $q$ in $E_i$, and
  replace in $\rightarrow$ each transition $(s,q)\ru{a}(s',q')$ of
  $\mathcal{S}\times\mathcal{A}$ with $q$ in $E_i$ by two transitions
  $(s,q)\ru{e_i}(s,q_i)\ru{a}(s',q')$.  Thus we read in $\mathcal{S}_i$
  an $e_i$ marker each time we visit some state in~$E_i$.
  \begin{claim}\label{claimEi}%
    Each $\mathcal{S}_{i,1}$ is a trace bounded cd-WSTS.
  \end{claim}\begin{proof}[Proof of \autoref{claimEi}]
Observe that any trace of $\mathcal{S}_{i,1}$ is the image of a trace
of $\mathcal{S}\times\mathcal{A}$ by a \emph{generalized sequential
  machine} (GSM)
$\mathcal{T}_i=\tup{Q,q_0,\Sigma,\Sigma,\delta,\gamma}$ using $\Sigma$
both as input and output alphabet, and constructed
from $\mathcal{A}=\tup{Q,q_0,\Sigma,\delta}$ with the same set of states
and the same transitions, and by setting the output
function $\gamma$ from $Q\times\Sigma$ to $\Sigma^\ast$ to be
\begin{align*}
  (q,a)&\mapsto e_ia &\text{if }q\in E_i\\
  (q,a)&\mapsto a    &\text{otherwise.}
\end{align*}
A GSM behaves like a DFA on a word $a_1\cdots a_n$ by defining a run
$q_0\ru{a_1}q_1\cdots q_{n-1}\ru{a_n}q_n$ with
$q_{i+1}=\delta(q_i,a_{i+1})$ for all $i$, but additionally outputs
the word $\gamma(q_0,a_1)\gamma(q_1,a_2)\cdots\gamma(q_{n-1},a_n)$,
hence defining a function from finite words over its input alphabet to
finite words over its output alphabet.  Since bounded languages are
closed under GSM mappings \citep[Corollary on p.~348]{bcfl} and
$\mathcal{S}\times\mathcal{A}$ is trace bounded, we know that
$\mathcal{S}_{i,1}$ is trace bounded.
\end{proof}

  In a second phase, we add a new symbol $f_i$ and the elementary loops
  $(s,q)\ru{f_i}(s,q)$ for each $(s,q)$ in $S\times F_i$ to obtain a system
  $\mathcal{S}_{i,2}$.  Any run that visits some state in $F_i$ has
  therefore the opportunity to loop on~$f_i^\ast$.

  In $\mathcal{S}\times\mathcal{A}$, visiting $F_i$ infinitely often
  implies that we can find two configurations $(s,q)\leq(s',q)$ with
  $q$ in $F_i$.  In $\mathcal{S}_{i,2}$, we can thus recognize any
  sequence in $\{f_i,w\}^\ast$, where $(s,q)\ru{w}(s',q)$, from
  $(s',q)$: $\mathcal{S}_{i,2}$ is not trace bounded. 
  \begin{claim}\label{claimFi}%
    Each $\mathcal{S}_{i,2}$ is a cd-WSTS, and is trace
    unbounded iff there exists a run $\sigma$ in
    $\mathcal{S}\times\mathcal{A}$ with $\mathsf{inf}(\sigma)\cap(S\times
    F_i)\neq\emptyset$.
  \end{claim}
\begin{proof}[Proof of \autoref{claimFi}]
  If there exists a run $\sigma$ in $\mathcal{S}\times\mathcal{A}$
  with $\mathsf{inf}(\sigma)\cap(S\times F_i)\neq\emptyset$, then we
  can consider the infinite sequence of visited states in $S\times
  F_i$ along $\sigma$.  Since $\leq$ is a well quasi ordering on
  $S\times Q$, there exist two steps $(s,q)$ and later $(s',q')$ in
  this sequence with $(s,q)\leq(s',q')$.  Observe that
  the same execution $\sigma$, modulo the transitions introduced in
  $\mathcal{S}_{i,1}$, is also possible in $\mathcal{S}_{i,2}$.
  Denote by $w$ in $\Sigma^\ast$ the sequence of transitions between
  these two steps, i.e.\ $(s,q)\ru{w}(s',q')$.  By monotonicity
  of the transition relation of $\mathcal{S}_{i,2}$, we can recognize
  any sequence in $\{f_i,w\}^\ast$ from $(q',s')$.  Thus
  $\mathcal{S}_{i,2}$ is not trace bounded.

  Conversely, suppose that $\mathcal{S}_{i,2}$ is not trace bounded.  By
  \autoref{lemma-exist-fork}, it has an increasing fork with
  $(s_0,q_0)\mathrel{\mbox{$\rua{w}$}}(s,q)\rua{au}(s_a,q)$ and
  $(s,q)\ru{bv}(s_b,q)$, $s_a\geq s$, $s_b\geq s$, $a\neq b$ in
  $\Sigma\uplus\{e_i,f_i\}$, $u$, $w$ in
  $(\Sigma\uplus\{e_i,f_i\})_\mathrm{acc}$, and $v$ in
  $(\Sigma\uplus\{e_i,f_i\})^\ast$.

  Observe that
  if $f_i$ only appears in the initial segment labeled by $w$, then
  a similar fork could be found in $\mathcal{S}_{i,1}$, since $(s,q)$
  would also be accessible.  Thus, by
  \autoref{lem:fork-implies-unbounded}, $\mathcal{S}_{i,1}$ would not
  be trace bounded.  Therefore $f_i$ appears in $au$ or $bv$, and
  thus the corresponding runs for $au$ or $bv$ visit some state in
  $F_i$.  But then, by monotonicity, we can construct a run that visits
  a state in $F_i$ infinitely often.
\end{proof}

  In the last, third step, we construct the synchronous product
  $\mathcal{S}_{i,3}=\mathcal{S}_{i,2}\times\mathcal{A}_i$, where
  $\mathcal{A}_i$ is a DFA for the language $(\Sigma\uplus\{e_i\})^\ast
  f_i(\Sigma\uplus \{f_i\})^\ast$ (where $\uplus$ denotes a disjoint
  union).  This ensures that any run of $\mathcal{S}_{i,3}$ that goes
  through at least one $f_i$ cannot go through $e_i$ any longer, hence
  it visits the states in $E_i$ only finitely many often.  Since a run
  can always choose not to go through a $f_i$ loop, the previous claim
  still holds.  Therefore each $\mathcal{S}_{i,3}$ is a cd-WSTS, is
  trace unbounded iff there exists a run $\sigma$ in
  $\mathcal{S}\times\mathcal{A}$ with $\mathsf{inf}(\sigma)\cap(S\times
  E_i)=\emptyset$ and $\mathsf{inf}(\sigma)\cap(S\times
  F_i)\neq\emptyset$, and we can apply \autoref{cordec}.%
\end{proof}

\subsection{Model Checking LTL Formul\ae}\label{sub:decltl}
By standard automata-theoretic arguments~\cite{vw,safra}, one can
convert any linear-time temporal logic (LTL) formula $\varphi$ over a
finite set $\mathrm{AP}$ of atomic propositions, representing
transition predicates, into a deterministic Rabin
automaton $\mathcal{A}_{\neg\varphi}$ that recognizes exactly the runs over
$\Sigma=2^{\mathrm{AP}}$ that model $\neg\varphi$.  The synchronized product of
$\mathcal{A}_{\neg\varphi}$ with a complete, deterministic,
$\infty$-effective, and trace bounded WSTS $\mathcal{S}$ is again
trace bounded, and such that
$L_\omega(\mathcal{S}\times\mathcal{A},(E_i,F_i)_i)=T_\omega(\mathcal{S})\cap
L_\omega(\mathcal{A},(E_i,F_i)_i)$.
\autoref{thlive} entails that we can decide whether this language is empty,
and whether all the infinite traces of $\mathcal{S}$ verify
$\varphi$, noted $\mathcal{S}\models\varphi$.  This reduction also works
for LTL extensions that remain $\omega$-regular.
\begin{corollary}\label{corltl}
  Let $\mathcal{S}=\tup{S,s_0,2^{\mathrm{AP}},\rightarrow,\leq}$ be an
  $\infty$-effective trace bounded cd-WSTS, and $\varphi$ a LTL
  formula on the set $\mathrm{AP}$ of atomic propositions.  It is
  decidable whether $\mathcal{S}\models\varphi$. 
\end{corollary}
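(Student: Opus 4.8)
The plan is to reduce the validity of $\varphi$ to the $\omega$-language emptiness problem already settled by \autoref{thlive}. First I would invoke the standard translation from LTL to deterministic $\omega$-automata \citep{vw,safra}: since the negation $\neg\varphi$ defines an $\omega$-regular language over the alphabet $\Sigma=2^{\mathrm{AP}}$, there is effectively a deterministic Rabin automaton $\mathcal{A}_{\neg\varphi}=\tup{Q,q_0,\Sigma,\delta}$ together with an acceptance condition $(E_i,F_i)_i$ that recognises exactly the infinite words modelling $\neg\varphi$. This construction yields a finite automaton and is itself effective.

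Next I would form the synchronous product $\mathcal{S}\times\mathcal{A}_{\neg\varphi}$ and check that it inherits all the hypotheses of \autoref{thlive}. Because $\mathcal{A}_{\neg\varphi}$ is a DFA---a finite WSTS ordered by equality---the product is again a deterministic WSTS; its state space $S\times Q$ is a continuous dcpo, being the product of a continuous dcpo with a finite flat one, and each transition function remains partial continuous, so the product is a cd-WSTS. Since the finite component admits no nontrivial lub-accelerations, the lub-acceleration $u^\omega$ in the product can be computed directly from that of $\mathcal{S}$, so the product stays $\infty$-effective. Finally, every trace of the product projects onto a trace of $\mathcal{S}$, hence $T(\mathcal{S}\times\mathcal{A}_{\neg\varphi})\subseteq T(\mathcal{S})$; as $\mathcal{S}$ is trace bounded and subsets of bounded languages are bounded, the product is trace bounded too.

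With all hypotheses verified, \autoref{thlive} decides whether $L_\omega(\mathcal{S}\times\mathcal{A}_{\neg\varphi},(E_i,F_i)_i)$ is empty, and it remains to connect this test to the model-checking question. By construction $L_\omega(\mathcal{A}_{\neg\varphi},(E_i,F_i)_i)$ is precisely the set of infinite words satisfying $\neg\varphi$, and the product identity $L_\omega(\mathcal{S}\times\mathcal{A}_{\neg\varphi},(E_i,F_i)_i)=T_\omega(\mathcal{S})\cap L_\omega(\mathcal{A}_{\neg\varphi},(E_i,F_i)_i)$ shows that this language collects exactly the infinite traces of $\mathcal{S}$ that violate $\varphi$. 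Therefore $\mathcal{S}\models\varphi$---every infinite trace of $\mathcal{S}$ satisfies $\varphi$---holds if and only if this language is empty, which is decidable.

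I expect the delicate point to be not the logical reduction, which is routine, but the verification that synchronising with $\mathcal{A}_{\neg\varphi}$ genuinely preserves all four structural properties at once: completeness, determinism, $\infty$-effectiveness, and trace boundedness. Each of these follows from the finiteness and determinism of the Rabin automaton together with the fact that the product trace language is a subset of $T(\mathcal{S})$, so the corollary is essentially an assembly of these preservation facts around \autoref{thlive}. The same argument applies verbatim to any extension of LTL that remains $\omega$-regular, since only the effective construction of a deterministic Rabin automaton from the formula is used.
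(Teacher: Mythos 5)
Your proposal is correct and follows essentially the same route as the paper: translate $\neg\varphi$ into a deterministic Rabin automaton, observe that the synchronous product with $\mathcal{S}$ remains trace bounded (and satisfies the remaining hypotheses), and apply \autoref{thlive} to the resulting $\omega$-language emptiness question via the identity $L_\omega(\mathcal{S}\times\mathcal{A}_{\neg\varphi},(E_i,F_i)_i)=T_\omega(\mathcal{S})\cap L_\omega(\mathcal{A}_{\neg\varphi},(E_i,F_i)_i)$. The only difference is that you spell out the preservation of completeness, determinism and $\infty$-effectiveness under product with a finite deterministic automaton, which the paper leaves implicit; note that \autoref{thlive} as stated only requires $\mathcal{S}$ itself to be an $\infty$-effective cd-WSTS and the \emph{product} to be trace bounded, so part of that verification is not strictly needed.
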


An alternative application of \autoref{thlive} is, rather than
relying on the trace boundedness of $\mathcal{S}$, to ensure that
$\mathcal{A}_{\neg\varphi}$ is trace bounded.  To this end, the following
slight adaptation of the flat counter logic of \citet{flatltl} is
appropriate:%
\begin{definition}
  A LTL formula on a set $\mathrm{AP}$ of atomic propositions is
  \emph{co-flat} if it is of form $\neg\varphi$, where $\varphi$
  follows the abstract syntax, where $a$ stands for a letter
  in~$2^\mathrm{AP}$:
  \begin{align}
    \varphi &::=
    \varphi\wedge\varphi\mid\varphi\vee\varphi\mid\mathsf{X}\varphi\mid\alpha\mathsf{U}\varphi\mid\mathsf{G}\alpha\tag{flat
    formul\ae}\\
    \alpha&::= \bigwedge_{p\in a}p\wedge\bigwedge_{p\not\in a}\neg
    p\;.\tag{alphabetic formul\ae}
  \end{align}
\end{definition}
\noindent In a conjunction $\varphi\wedge\varphi'$, one of $\varphi$ or
$\varphi'$ could actually be an arbitrary LTL formula.

One can easily check that flat formul\ae\ define languages of infinite
words with bounded sets of finite prefixes, and we obtain:
\begin{corollary}\label{corflatltl}%
  Let $\mathcal{S}=\tup{S,s_0,2^{\mathrm{AP}},\rightarrow,\leq}$ be an
  $\infty$-effective cd-WSTS, and $\varphi$ a
  co-flat LTL formula on the set $\mathrm{AP}$ of atomic propositions.
  It is decidable whether $\mathcal{S}\models\varphi$.
\end{corollary}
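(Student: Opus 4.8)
The plan is to reduce $\mathcal{S}\models\varphi$ to an $\omega$-language emptiness question solved by \autoref{thlive}, but this time securing that theorem's boundedness hypothesis \emph{through the automaton} rather than through $\mathcal{S}$. Writing $\psi$ for the flat formula with $\varphi=\neg\psi$, I would first convert $\psi$ into a deterministic Rabin automaton $\mathcal{A}_{\neg\varphi}$ over $\Sigma=2^{\mathrm{AP}}$ recognising exactly $\{w\in\Sigma^\omega\mid w\models\psi\}$, using the standard constructions of \cite{vw,safra}. Since $\mathcal{S}\models\varphi$ holds precisely when no infinite trace of $\mathcal{S}$ satisfies $\psi$, and $L_\omega(\mathcal{S}\times\mathcal{A}_{\neg\varphi},(E_i,F_i)_i)=T_\omega(\mathcal{S})\cap\{w\in\Sigma^\omega\mid w\models\psi\}$, the statement reduces to deciding emptiness of this product $\omega$-language, which \autoref{thlive} handles once the product is trace bounded.

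The crux is to show that for a flat formula $\psi$ the prefix set $\prefix{\{w\in\Sigma^\omega\mid w\models\psi\}}$ is a \emph{bounded} language, which I would establish by structural induction on $\psi$, relying on the closure properties of bounded languages recalled in \autoref{sec:prelim} (finite union, intersection, concatenation, and passage to subsets). In the base case $\psi=\mathsf{G}\alpha$ with $\alpha$ denoting the single letter $a$, the language is $\{a^\omega\}$ and its prefix set is $a^\ast$, which is bounded. For $\psi_1\vee\psi_2$ the prefix set is the finite union of the prefix sets of $\psi_1$ and $\psi_2$; for $\psi_1\wedge\psi_2$ it is contained in the prefix set of $\psi_1$ alone, hence bounded as a subset of a bounded language --- which is exactly why one conjunct may be an arbitrary LTL formula. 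For $\mathsf{X}\psi'$ the prefix set is contained in $\{\varepsilon\}\cup\Sigma\cdot\prefix{\{w\mid w\models\psi'\}}$, a finite union of concatenations of a letter with a bounded set, and for $\alpha\mathsf{U}\psi'$ (with $\alpha=a$) it is contained in $a^\ast\cdot\prefix{\{w\mid w\models\psi'\}}$, the concatenation of the bounded language $a^\ast$ with a bounded set; in both cases the induction hypothesis supplies boundedness of the subformula prefix set.

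It then remains to transfer boundedness from the prefix language to the product. I would prune $\mathcal{A}_{\neg\varphi}$ to the states that are reachable from $q_0$ and admit an accepting continuation (a standard, effective computation on a finite Rabin automaton); this leaves $L_\omega(\mathcal{A}_{\neg\varphi})$ and determinism unchanged, while forcing its trace set to coincide with the prefix set $\prefix{\{w\in\Sigma^\omega\mid w\models\psi\}}$, now known to be bounded. As $T(\mathcal{S}\times\mathcal{A}_{\neg\varphi})\subseteq T(\mathcal{A}_{\neg\varphi})$ and bounded languages are closed under subsets, the product $\mathcal{S}\times\mathcal{A}_{\neg\varphi}$ is trace bounded, so \autoref{thlive} decides whether $L_\omega(\mathcal{S}\times\mathcal{A}_{\neg\varphi},(E_i,F_i)_i)$ is empty --- equivalently, whether $\mathcal{S}\models\varphi$.

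I expect the main obstacle to be not the induction, which the paper rightly calls routine, but the automaton bookkeeping: arranging the deterministic Rabin automaton for $\psi$ so that its trace set is literally the prefix language $\prefix{\{w\in\Sigma^\omega\mid w\models\psi\}}$ (the pruning step), and checking that the resulting triple still meets all hypotheses of \autoref{thlive}, namely that $\mathcal{A}_{\neg\varphi}$ is a genuine DFA with Rabin condition while $\mathcal{S}$ remains the $\infty$-effective cd-WSTS to which the theorem applies.
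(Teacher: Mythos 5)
Your proposal is correct and follows essentially the same route as the paper, which disposes of this corollary in one line by observing that flat formul\ae\ define $\omega$-languages with bounded prefix sets and then invoking the reduction to \autoref{thlive} already set up for Corollary~\ref{corltl}. You merely fill in the details the paper leaves implicit --- the structural induction using closure of bounded languages under finite union, concatenation and subsets (including the observation explaining why one conjunct may be arbitrary), and the pruning of $\mathcal{A}_{\neg\varphi}$ to non-blocking states so that its trace set is exactly $\prefix{L_\omega(\mathcal{A}_{\neg\varphi})}$, whence the product with $\mathcal{S}$ is trace bounded.
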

Extensions of \autoref{corflatltl} to less restrictive LTL fragments
seem possible, but our ideas thus far lead to rather unnatural
conditions on the shape of
formul\ae.%

\subsection{Beyond $\omega$-Regular Properties}\label{sub:beyltl}
We survey in this section some results from the model checking
literature and their consequences for several classes of trace bounded
WSTS.  Outside the realm of $\omega$-regular properties, we find
essentially two kinds of properties: state-based
properties or branching properties, or indeed a blend of the
two~\citep{foctlpr,lpar06,cwsts0}.

\paragraph{Affine Counter Systems}
Not all properties are decidable for trace bounded cd-WSTS, as seen with
the following theorem on affine counter systems.  Since these systems
are otherwise completable, deterministic, and $\infty$-effective,
action-based properties are decidable for them using
\autoref{thlive}, but we infer that state-based properties are
undecidable for trace bounded $\infty$-effective cd-WSTS.

\begin{theorem}[\citeay{racs}]%
  Reachability is undecidable for trace bounded affine counter systems.
\end{theorem}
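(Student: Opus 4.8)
The plan is to reduce from the halting problem of deterministic $2$-counter Minsky machines, which is undecidable (\autoref{propminsky} already relies on this problem). Given such a machine $\mathcal{M}$, I would build an affine counter system — concretely a reset Petri net, whose update $m'(p)=f(t,p)$ on reset arcs and $m'(p)=m(p)-f(p,t)+f(t,p)$ elsewhere is an affine map $X\mapsto AX+b$ with $A$ a non-negative integer matrix, hence a genuine affine counter system — that simulates $\mathcal{M}$ so that a single designated target configuration $m_f$ is reachable from the initial one if and only if $\mathcal{M}$ halts. Increments and decrements are simulated faithfully by ordinary transitions; the only delicate instructions are the zero-tests, which a monotone affine system cannot perform exactly. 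Here I would exploit the classical observation underlying the undecidability of \emph{exact} reachability for reset nets: a zero-test gadget is permitted to clear the tested counter by a reset, but any run that resets while the counter was nonzero leaves spurious tokens behind and can therefore never arrive at the precise marking $m_f$. Demanding reachability of $m_f$ exactly thus selects precisely the faithful runs, so that $m_f$ is reachable iff $\mathcal{M}$ halts.

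It then remains to make the system trace bounded, and the key point is that this comes essentially for free: configuration reachability is a property of the underlying dynamics on $\mathbb{N}^k$ and does not depend at all on how transitions are labeled. I would therefore relabel the net so that its trace set is bounded while leaving the reachability relation $\ra$ untouched. The crudest choice — giving every transition the same letter $a$ — already yields $T(\mathcal{S})\subseteq a^\ast$, which is bounded with bounded expression $a^\ast$; since labels do not affect which markings are reachable, reachability of $m_f$ stays undecidable, which would prove the theorem for trace bounded affine counter systems.

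The main obstacle is reconciling this relabeling with the determinism that the surrounding discussion expects of a cd-WSTS: collapsing all labels to $a$ destroys determinism the moment two transitions are simultaneously firable, which is exactly what the zero-test guessing requires. I would resolve this by \emph{free labeling} — assigning each transition its own symbol, so that each $\ru{t}$ is a partial function and the system is a deterministic cd-WSTS — and then recovering boundedness by a separate device rather than by letter collapse. Following the consuming-gadget idea of \autoref{propnondet} together with the bounding-counter recipe of \autoref{sub:exp}, I would arrange that every run either has length bounded by the (weakly computed) number of distinct faithful configurations, or, once it commits to an unfaithful reset, is funneled into a short terminating suffix, so that the entire trace set sits inside a fixed bounded expression of the shape $a^\ast b\{c,d\}^{\le n}$. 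The real work of the proof is therefore not the reduction itself but this packaging: simulating $\mathcal{M}$ with affine reset transitions, certifying faithfulness through the exactness of $m_f$, and simultaneously forcing the deterministically labeled trace language to remain bounded.
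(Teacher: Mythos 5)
First, a point of calibration: the paper does not prove this theorem at all --- it is imported as a black box from \citet{racs} --- so your proposal has to stand on its own, and the route behind the cited result is quite different from yours. The standard way to get undecidable reachability \emph{together with} trace boundedness is to work with \emph{flat} affine counter systems: a finite sequence of simple affine loops, whose trace set lies in $w_1^\ast\cdots w_n^\ast$ by construction, and where the undecidability comes not from zero tests but from the arithmetic power of iterated affine maps --- the effect of $w_1^{k_1}\cdots w_n^{k_n}$ expresses polynomial constraints in the iteration counts $k_i$, so exact reachability of a target configuration encodes Hilbert's tenth problem. There, boundedness is free and all the difficulty sits in the final reachability check; your proposal inverts this and gets stuck on the boundedness side.

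The concrete gap is that your reduction must map \emph{every} Minsky machine to a \emph{trace bounded} system, and it does not. A monotone system can only weakly simulate a zero test by a nondeterministic guess (``reset and assume zero'' via $t_z$ versus ``decrement'' via $t_d$), with cheating detected only by the exactness of the final marking. Whenever the faithful run reaches a zero test with a nonzero counter and both branches admit long continuations, the freely labelled trace set acquires every word of $\{u,v\}^\ast$ as a factor for two non-commuting words $u$ (through $t_z$) and $v$ (through $t_d$), hence is unbounded by the Ginsburg--Spanier criterion recalled in \autoref{sec:prelim}. For instance, take the non-halting machine $q_0:c{+}{+};\mathtt{goto}\:q_1$ and $q_1:\mathtt{if}\;c=0\;\mathtt{goto}\:q_0\;\mathtt{else}\;c{-}{-};\mathtt{goto}\:q_1'$ with $q_1'$ looping back to $q_0$: at every visit to $q_1$ one has $c=1$, both $t_z$ and $t_d$ are enabled, and the simulation's traces contain arbitrarily long words over $\{t_{\mathrm{inc}}t_z\ldots,\,t_{\mathrm{inc}}t_d\ldots\}$. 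Your two repair devices do not close this. The weakly computed budget bounds the \emph{length} of each run but not the branching structure, and since the budget $B$ ranges over all of $\mathbb{N}$, every word of $\{u,v\}^\ast$ still occurs as a factor of some trace (equivalently, in the completed system the budget accelerates to $\omega$ and an increasing fork appears). And ``funneling unfaithful runs into a short suffix'' presupposes that the system can tell locally which resets were unfaithful --- which is precisely what a monotone system cannot do; that inability is the reason you needed exact reachability of $m_f$ in the first place. (A smaller slip: a reset does not leave spurious tokens in the reset place --- it erases them --- so the discrepancy must be recorded in a complementary or budget place, as in the Araki--Kasami argument you are implicitly invoking.)
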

Affine counter systems are thus the only class of systems
(besides Minsky counter machines) in \autoref{fig:nom} for which trace boundedness
does not yield a decidable reachability problem.

\paragraph{Presburger Accelerable Counter Systems}
\Citet{foctlpr} study the class of trace bounded counter systems for which
accelerations can be expressed as Presburger relations.\footnote{Whether
  trace boundedness is decidable for deterministic Presburger
  accelerable counter systems (i.e.\ not necessarily well-structured) is
  not currently known, while \autoref{propnondet} answers
  negatively in the nondeterministic well-structured case.}
Well-structured $\infty$-effective Presburger accelerable counter
systems include trace bounded reset/transfer Petri nets and broadcast
protocols, and \autoref{thlive} shows that $\omega$-regular
properties are decidable for them.

By the results of \citeauthor{foctlpr}, not only is the full
reachability set computable for these systems, but furthermore an
extension of state-based CTL$^\ast$ model checking with Presburger
quantification on the paths is also decidable.

\paragraph{Guarded Properties}
Let us recall that state-based LTL model checking is already undecidable
for Petri nets~\citep{ltlpn}.  However, state-based properties become
decidable for WSTS if they only allow to reason about upward-closed
sets.  This insight is applied by \citet{lpar06}, who define an upward and
downward guarded fragment of state-based $\mu$-calculus and prove its
decidability for all WSTS.  \Citet{cwsts0} presents a generalization to
open sets in well topological spaces.  Extensions of
\autoref{thlive} along these lines could be investigated.

\section{On Trace Unbounded WSTS}\label{sec:tunb}
As many systems display some commutative behavior, and on
that account fail to be trace bounded, \citet[Section~5.2]{flataccel}
introduce \emph{reductions} in order to enumerate the possible 
bounded expressions more efficiently, e.g.\ removal of identity loops, of
useless conjugated sequences of transitions, and of commuting
sequences.  Such reductions are systematically looked for, up to some
fixed length of the considered sequences.

Increasing forks suggest a different angle on this issue: whenever we
identify a source of trace unboundedness, we could try to check whether
the involved sequences commute, normalize our system, and restart the
procedure on the new system, which is trace-equivalent modulo the spotted
commutation.  Considering again the example Petri net of
\autoref{fig:unb}, the two sequences $c$ and $d$ responsible for an
increasing fork do commute.  If we were to force any sequence of
transitions in $\{c,d\}^\ast$ to be in the set $(cd)^\ast(c^\ast\cup
d^\ast)$, then 
\begin{itemize}
  \item the set of reachable states would remain the same, but
  \item the normalized trace set would be
    \begin{equation*}
      a^\ast\cup\bigcup_{0\leq 2m\leq n} a^nb(cd)^m(c^{\leq n-2m}\cup d^{\leq
        n-2m})\;,
    \end{equation*}
which is bounded.
\end{itemize}
Provided the properties to be tested do not depend on the relative
order of $c$ and $d$, we would now be able to apply
\autoref{thlive}.

We formalize this idea in
\autoref{sub:norm} using a partial commutation relation (see
\autoref{ax:com} for background on partial commutations), and
illustrate its interest for a bounded-session version of the
Alternating Bit Protocol (see \autoref{ax:abp} for background on this
protocol).

\newcommand{\cI}{\mathrel{\sim_{\!I}}}
\newcommand{\cIa}{\mathrel{\sim_{\!I}}}
\newcommand{\cIlim}{\mathrel{\sim_{\!I}^\mathrm{lim}}}
\newcommand{\clq}{\mathcal{C}}
\newcommand{\fnf}{\mathsf{fnf}_{\!I}}

\subsection{Partial Commutations}\label{ax:com}
Let $\Sigma$ be a finite alphabet; a \emph{dependence relation}
$D\subseteq\Sigma\times\Sigma$ is a reflexive and symmetric relation
on $\Sigma$.  Its complement $I=(\Sigma\times\Sigma)\setminus D$ is an
\emph{independence relation}.  On words in $\Sigma^\ast$, an
independence relation can be interpreted as a congruence
$\cI\;\subseteq\Sigma^\ast\times\Sigma^\ast$ generated by repeated
applications of $ab\mathrel{\leftrightarrow_I}ba$ for some $(a,b)$ in
$I$: $\cI\:=\:\leftrightarrow_I^\ast$, where
$w\mathrel{\leftrightarrow_I}w'$ if and only if there exist $u$ and
$v$ in $\Sigma^\ast$ and $(a,b)$ in $I$ with $w=uabv$ and $w'=ubav$.
We work on infinite words modulo the partial commutations described
by~$I$.

\paragraph{Closure}
The \emph{limit extension}
$\cIlim\;\subseteq\Sigma^\omega\times\Sigma^\omega$ of the congruence
$\cI$~\citep{regtraces,omcl} is defined by $\sigma\cIlim\sigma'$ iff,
\begin{itemize}
\item for every finite prefix $u$ of $\sigma$, there is a finite prefix
  $u'$ of $\sigma'$ and a finite word $v$ of $\Sigma^\ast$ such that
  $uv\cIa u'$, and
\item symmetrically, for every finite prefix $u'$ of $\sigma'$, there is
  a finite prefix $u$ of $\sigma$ and a finite word $v'$ of
  $\Sigma^\ast$ such that $u'v'\cIa u$.
\end{itemize}
Consider for instance the relation $I\eqdef\{(a,b),(b,a)\}$; then
$(aab)^\omega\cIlim(ab)^\omega$ (e.g.\ $(aab)^nb^n\cIa(ab)^{2n}$
and $(ab)^na^n\cIa(aab)^n$), but $(aab)^\omega\not\cIlim a^\omega$ (e.g.\ 
$(aab)^n v\not\cIa a^m$ for all $n>0$, $m>0$, and $v$
in $\Sigma^\ast$).

A language $L\subseteq\Sigma^\ast$ (resp.\ $L\subseteq\Sigma^\omega$)
is \emph{$I$-closed}, if for any $\sigma$ in $L$, and for every
$\sigma'$ with $\sigma\cIa\sigma'$ (resp.\ $\sigma\cIlim\sigma'$),
$\sigma'$ is also in $L$.
The closure of an $\omega$-regular language for a given partial
commutation is decidable, and more precisely \textsc{PSpace}-complete if
the language is given as a B\"uchi automaton or an LTL
formula~\citep{omcl}.

\begin{definition}
  An LTS is \emph{$I$-diamond} if, for any pair $(a,b)$
  of $I$, and for any states $s$ in $\dom\ru{ab}\cap\:\dom\ru{ba}$ and
  $s'$ in $S$, $s\ru{ab}s'$ iff $s\ru{ba}s'$.
\end{definition}
\noindent We have the following sufficient condition for the closure of
$T_\omega(\mathcal{S})$, which is decidable for $I$-diamond WSTS: just
compare the elements in the finite bases for $\dom\ru{ab}$ and
$\dom\ru{ba}$.
\begin{lemma}\label{lemwstscl}%
  Let $I$ be an independence relation and $\mathcal{S}$ an
  LTS, both on $\Sigma$.  If $\mathcal{S}$ is $I$-diamond and, for
  all $(a,b)$ of $I$, $\dom\ru{ab}\;=\dom\ru{ba}$, then
  $T_\omega(\mathcal{S})$ is $I$-closed.
\end{lemma}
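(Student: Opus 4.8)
The plan is to reduce the infinite-word statement to a finite-word commutation lemma and then glue finite runs into an infinite one. The crux is the following claim on finite words: if $w\cIa w'$ in $\Sigma^\ast$, then $\ru{w}$ and $\ru{w'}$ coincide as relations on $S$, i.e.\ for all $s,s''$ we have $s\ru{w}s''$ iff $s\ru{w'}s''$.

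First I would prove this claim. Since $\cIa$ is the reflexive--transitive closure of the single-swap relation $\leftrightarrow_I$, and coincidence of relations is preserved under transitivity, it suffices to treat one swap $w=uabv\mathrel{\leftrightarrow_I}ubav=w'$ with $(a,b)\in I$; the reverse direction is identical because $I$ is symmetric. Suppose $s\ru{uabv}s''$, decomposed as $s\ru{u}t\ru{ab}t'\ru{v}s''$. Then $t\in\dom\ru{ab}$, so the hypothesis $\dom\ru{ab}\;=\dom\ru{ba}$ gives $t\in\dom\ru{ba}$, hence $t\in\dom\ru{ab}\cap\dom\ru{ba}$. Applying the $I$-diamond property at $t$ with target $t'$ turns $t\ru{ab}t'$ into $t\ru{ba}t'$, so $s\ru{u}t\ru{ba}t'\ru{v}s''$, that is $s\ru{ubav}s''$. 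Here the joint use of the two hypotheses is essential: the domain equality guarantees the swapped order is enabled at $t$ at all, and the diamond property guarantees it reaches the \emph{same} successor $t'$, so the tail $v$ can be replayed unchanged.

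Second, I would lift this to infinite words. Let $\sigma\in T_\omega(\mathcal{S})$ with witnessing execution $s_0s_1s_2\cdots$, and let $\sigma\cIlim\sigma'$; I must exhibit an execution of $\sigma'$. Fix any finite prefix $u'$ of $\sigma'$. By the second clause of the definition of $\cIlim$, there are a finite prefix $u$ of $\sigma$ and a finite word $v'\in\Sigma^\ast$ with $u'v'\cIa u$. The execution of $\sigma$ yields $s_0\ru{u}s_m$ where $m=|u|$, so by the finite-word claim $s_0\ru{u'v'}s_m$, and in particular $u'$ is fireable from $s_0$. Thus every finite prefix of $\sigma'$ labels a finite run starting at $s_0$.

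Third, I would assemble these finite runs into a single infinite execution of $\sigma'$, which is the one delicate point. In the deterministic case each prefix $u'$ of $\sigma'$ determines a unique reached state and the runs for nested prefixes are automatically consistent, so listing these states gives the desired execution directly. For a general finitely branching LTS I would instead organise the finite runs reading the prefixes of $\sigma'$ into a tree---root $s_0$, with children obtained by extending a run by the single appropriate letter $\sigma'[n]$---note that it has a node at every depth (fireability of every prefix) and finite branching, and invoke König's lemma to extract an infinite branch, i.e.\ an execution of $\sigma'$. Either way $\sigma'\in T_\omega(\mathcal{S})$, so $T_\omega(\mathcal{S})$ is $I$-closed. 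I expect the finite-word claim to be the technical heart, whereas the infinite gluing is the step most sensitive to the precise branching assumptions on $\mathcal{S}$.
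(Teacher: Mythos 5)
Your proof is correct and follows essentially the same route as the paper's: first show that the domain equality plus the $I$-diamond property make the set of \emph{finite} traces $I$-closed (by reducing $\cIa$ to single swaps), then transfer this to infinite words through the two clauses defining $\cIlim$. The only differences are presentational---the paper runs the second step as a one-paragraph contradiction and leaves both the finite-word swap argument and the prefix-to-infinite-execution gluing implicit, whereas you spell them out and correctly flag that the gluing needs determinism or finite branching plus K\"onig's lemma, a point the paper's proof silently elides.
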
%
\begin{proof}
  One can easily check that this condition implies that the set of
  finite traces $T(\mathcal{S})$ is $I$-closed.

  Let now $\sigma$ be an infinite word in $T_\omega(\mathcal{S})$, and
  $\sigma'$ an infinite word in $\Sigma^\omega$ with
  $\sigma\cIlim\sigma'$, but suppose that $\sigma'$ is not in
  $T_\omega(\mathcal{S})$.  Thus there exists a finite prefix $u'$ of
  $\sigma'$ that does not belong to $T(\mathcal{S})$.  By definition of
  $\cIlim$, there is however a prefix $u$ of $\sigma$ and a word $v'$ of
  $\Sigma^\ast$ such that $u'v'\cIa u$.  But this contradicts the closure
  of $T(\mathcal{S})$, since $u$ is in $T(\mathcal{S})$, but $u'v'$ is
  not---or $u'$ would be in the prefix-closed language~$T(\mathcal{S})$.
\end{proof}

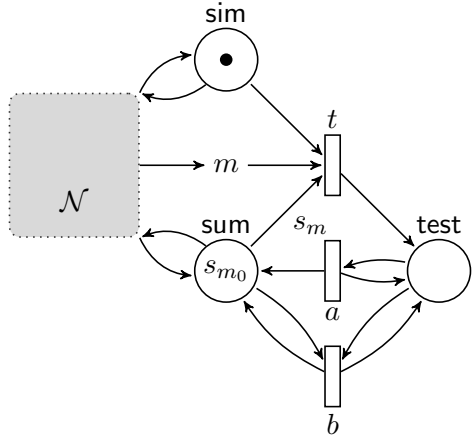
\begin{figure}[t]
  \centering
\begin{tikzpicture}[->,>=stealth',shorten >=1pt,initial text=,%
                    node distance=1.4cm,on grid,semithick,auto,
                    inner sep=2pt,every transition/.style={minimum width=2mm,minimum height=8mm}]
  \node[draw,rounded corners,text width=1cm, text height=1.2cm,inner
                    sep=10pt,text badly centered,fill=black!15,draw=black!80,dotted] (box){$\mathcal{N}$};
  \node[right=2cm of box](m){$m$};
  \node[place,tokens=1,above=of m,label=above:$\mathsf{sim}$](sim){};
  \node[place,below=of m,label=above:$\mathsf{sum}$](sum){$s_{m_0}$};
  \node[place,right=2.8cm of sum,label=above:$\mathsf{test}$](test){};
  \path
    (box.north east) edge[bend left]  (sim)
    (sim)            edge[bend left]  (box.north east)
    (box.east)       edge             (m)
    (box.south east) edge[bend right] (sum)
    (sum)            edge[bend right] (box.south east);
  \node[transition,right=of m,label=above:$t$](t){}
    edge[pre](sim) edge[pre](m) edge[pre]node{$s_m$}(sum) edge[post](test);
  \node[transition,below=of t,label=below:$a$](a){}
    edge[pre,bend left=15](test) edge[post,bend right=15](test) edge[post](sum);
  \node[transition,below=of a,label=below:$b$](b){}
    edge[pre,bend left=15](test) edge[pre,bend right=15](sum)
    edge[post,bend right=15](test) edge[post,bend left=15](sum);
\end{tikzpicture}
  \caption{\label{fig:trans}The transfer Petri net $\mathcal{N}'$ of the
  proof of \autoref{propcl}.}
\end{figure}
However, already in the case of $I$-diamond WSTS and already for
finite traces, $I$-closure is undecidable; a sufficient condition like
\autoref{lemwstscl} is the best we can hope for.
\begin{proposition}\label{propcl}%
  Let $I$ be an independence relation and $\mathcal{S}$ an $I$-diamond
  cd-WSTS, both on $\Sigma$.  It is undecidable whether
  $T(\mathcal{S})$ is $I$-closed or not.
\end{proposition}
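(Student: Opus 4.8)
The plan is to reduce from the reachability problem for transfer Petri nets, which is undecidable, to the $I$-closure problem for $I$-diamond cd-WSTS. Given a transfer net $\mathcal{N}$, an initial marking $m_0$, and a target marking $m$, I would build the transfer net $\mathcal{N}'$ of \autoref{fig:trans}, equipped with free labeling so that each transition carries its own letter; this makes the underlying LTS deterministic, and since transfer nets are affine counter systems they complete to a cd-WSTS on $(\mathbb{N}\cup\{\omega\})^k$. The independence relation is taken to be $I=\{(a,b),(b,a)\}$, where $a$ and $b$ are the two distinguished transitions of the gadget on the right of the figure, every other pair of letters being dependent.

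The heart of the construction is a faithful test for \emph{exact} reachability of $m$, obtained by combining a coverability test with a token count. During a ``simulation phase'', flagged by a token in $\mathsf{sim}$, $\mathcal{N}'$ runs $\mathcal{N}$ while maintaining in the place $\mathsf{sum}$ the total number of tokens currently present in $\mathcal{N}$; this is possible because ordinary arcs change the total by a statically known constant and pure transfers preserve it, so $\mathsf{sum}$ starts at $s_{m_0}=\sum_p m_0(p)$ and stays synchronised. The transition $t$ fires only when the current marking covers $m$ \emph{and} simultaneously removes $s_m=\sum_p m(p)$ tokens from $\mathsf{sum}$; since a marking $\ge m$ whose total equals $s_m$ must equal $m$, firing $t$ and leaving $\mathsf{sum}=0$ witnesses that $m$ was reached exactly. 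After $t$ a token sits in $\mathsf{test}$ and the $\mathsf{sim}$ flag is consumed, so the simulation halts and only $a$ and $b$ remain firable.

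Then I would analyse the gadget. The transition $a$ merely loops on $\mathsf{test}$ and deposits a token into $\mathsf{sum}$, whereas $b$ loops on both $\mathsf{test}$ and $\mathsf{sum}$ and hence needs $\mathsf{sum}\ge 1$ to fire. Consequently $\dom\ru{ab}$ only requires a token in $\mathsf{test}$, while $\dom\ru{ba}$ additionally requires a token in $\mathsf{sum}$, so the two domains differ exactly on the reachable configurations with $\mathsf{test}=1$ and $\mathsf{sum}=0$. On configurations where both orders are firable they yield the same marking, so $\mathcal{N}'$ is genuinely $I$-diamond; what fails is the hypothesis $\dom\ru{ab}=\dom\ru{ba}$ of \autoref{lemwstscl}. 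If some run reaches a configuration with $\mathsf{test}=1,\mathsf{sum}=0$ via a trace $w$, then $wab\in T(\mathcal{N}')$ but $wba\notin T(\mathcal{N}')$ although $wab\cI wba$, so $T(\mathcal{N}')$ is not $I$-closed; conversely, such a configuration is reachable if and only if $t$ fires with $\mathsf{sum}$ dropping to $0$, i.e.\ exactly when $m$ is reachable in $\mathcal{N}$, and when it is not, $a$ and $b$ never create an asymmetric pair and $T(\mathcal{N}')$ is $I$-closed. Thus $T(\mathcal{N}')$ is $I$-closed iff $m$ is unreachable, which settles undecidability.

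The step I expect to be the main obstacle is the faithful maintenance of the invariant that $\mathsf{sum}$ equals the total token count of $\mathcal{N}$ in the presence of transfer arcs, together with the argument that this invariant plus the coverability test really captures exact reachability. I would need to check that all transfers used are total-preserving (pure) transfers and that no spurious configuration with $\mathsf{test}=1,\mathsf{sum}=0$ can arise outside a genuine exact hit of $m$; in particular that, once in the test phase, $a$ can only raise $\mathsf{sum}$ and $b$ leaves it unchanged, so the value $\mathsf{sum}=0$ persists from the firing of $t$ only in the exact case.
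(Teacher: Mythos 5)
Your proposal is correct and follows essentially the same route as the paper: a reduction from transfer Petri net reachability using the gadget with $\mathsf{sim}$, $\mathsf{sum}$, $\mathsf{test}$ and the two loops $a$ and $b$ whose firing domains differ exactly on configurations with $\mathsf{test}=1$ and $\mathsf{sum}=0$. The points you flag as needing care (the $\mathsf{sum}$ invariant under transfers, exactness of the hit on $m$, and persistence of $\mathsf{sum}=0$ in the test phase) all go through as you describe and match the paper's argument.
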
%
\begin{proof}
  We reduce the (undecidable) reachability problem for a transfer Petri net
  $\mathcal{N}$ and a marking $m$~\cite{rtransnets} to the $I$-closure
  problem for a new transfer Petri net $\mathcal{N}'$.  Let us recall
  that a \emph{transfer arc} $(p,t,p')$ transfers all the tokens from
  a place $p$ to another place $p'$ when $t$ is fired.

  The new transfer Petri net $\mathcal{N}'$ extends $\mathcal{N}$ with
  three new places $\mathsf{sim}$, $\mathsf{sum}$, and $\mathsf{test}$,
  and three new transitions $t$, $a$, and $b$ (see \autoref{fig:trans}).
  Its initial marking is expanded so that $\mathsf{sim}$ originally
  contains one token,  $\mathsf{sum}$ the sum $s_{m_0}=\sum_p m_0(p)$ of
  all the tokens in the initial marking of $\mathcal{N}$, and
  $\mathsf{test}$ no token.  It simulates $\mathcal{N}$ while a token
  resides in $\mathsf{sim}$, and updates $\mathsf{sum}$ so that it
  contains at all times the sum of the tokens in all the places of
  $\mathcal{N}$.
  Transfer arcs are not an issue since they do not change this overall
  sum of tokens.  Nondeterministically, $\mathcal{N}'$ fires $t$, which
  removes $m(p)$ in each place $p$ of $\mathcal{N}$, one token from
  $\mathsf{sim}$, $s_m=\sum_p m(p)$ tokens from $\mathsf{sum}$, and places
  one token in~$\mathsf{test}$.

  Now, a token can appear in $\mathsf{test}$ if and only if a marking
  $m'$ larger than $m$ can be reached in $\mathcal{N}'$.  Furthermore,
  the distance $\sum_p m'(p)-m(p)$ is in $\mathsf{sum}$, so that $m$
  is reachable in $\mathcal{N}$ if and only if a marking with one
  token in $\mathsf{test}$ and no token in $\mathsf{sum}$ is reachable
  in~$\mathcal{N}'$.

  The latter condition is tested by having $a$ remove
  one token from $\mathsf{test}$ and put one token in $\mathsf{sum}$ and
  one back in $\mathsf{test}$, and $b$ remove one from $\mathsf{sum}$
  and $\mathsf{test}$ and put them back.  Set
  $I\eqdef\{(a,b),(b,a)\}$; $\mathcal{N}'$ is $I$-diamond.  The transition
  sequence $ab$ can be fired if and only if there if a token in
  $\mathsf{test}$, but $ba$ further requires $\mathsf{sum}$ not to be
  empty.  Thus $a$ and $b$ do not commute if and only if $m$ is
  reachable in~$\mathcal{N}$.
\end{proof}

\paragraph{Foata Normal Form}
Let us assume an arbitrary linear ordering $<$ on $\Sigma$.  For an
independence relation $I$, we denote by $\clq(I)$ the set of cliques of
$I$, i.e.\
\begin{equation*}
  \clq(I)\eqdef\{C\subseteq\Sigma\mid\forall a,b\in C, (a,b)\in I\}\;.
\end{equation*}
We further introduce a homomorphism
$\nu:2^\Sigma\rightarrow\Sigma^\ast$ by
\begin{align*}
  \nu(\{a_1,a_2,\dots,a_k\})&= a_1a_2\cdots a_k&\!\text{if
  }a_1<a_2<\cdots<a_k.
\end{align*}

An infinite word $\sigma$ in $\Sigma^\omega$ is in \emph{Foata normal
form} \citep[see e.g.][]{inftr} if there is an infinite decomposition
$\sigma=\nu(C_0)\nu(C_1)\cdots$ with each $C_i$ in $\clq(I)$, and for
each $a$ in $C_i$, there exists $b$ in $C_{i-1}$ such that $(a,b)$ is in
$D$.  As indicated by its name, for any word $\sigma$ in
$\Sigma^\omega$, there exists a unique word $\fnf(\sigma)$ in Foata
normal form such that $\sigma\cIlim\fnf(\sigma)$.  For instance
$\fnf((aab)^\omega)=(ab)^\omega$ for $I=\{(a,b),(b,a)\}$.

Let us finally define the \emph{normalizing language} $N_I$ of $I$ as
the set of all infinite words in Foata normal form.  The following lemma
shows that $N_I$ is very well behaved, being recognized by a
deterministic B\"uchi automaton $\mathcal{B}_I$ with only accepting
states.  Thus its synchronous product with a WSTS $\mathcal{S}$ does not
require the addition of an acceptance condition:
$T_\omega(\mathcal{S}\times\mathcal{B}_I)=T_\omega(\mathcal{S})\cap
N_I$.
\begin{lemma}\label{lem:fnfsafety}
  Let $I$ be an independence relation on $\Sigma$.  Then $N_I$ is a
  topologically closed $\omega$-regular language.
\end{lemma}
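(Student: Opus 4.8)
The plan is to realize $N_I$ as the $\omega$-language recognized by a deterministic finite automaton $\mathcal{B}_I$ in which \emph{every} state is accepting, i.e.\ a safety (Büchi) automaton: such an automaton accepts exactly those infinite words that admit an infinite run, and that set is both $\omega$-regular (finitely many states) and topologically closed (its complement is a union of cylinders $u\Sigma^\omega$ over the finite prefixes $u$ on which the run blocks). So the entire content reduces to building a deterministic, finite-memory, left-to-right parser that accepts an infinite word precisely when that word is in Foata normal form.

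First I would design the states of $\mathcal{B}_I$ to carry exactly the information needed to check the defining conditions block by block: a state is a pair $(P,B)$ where $B\in\clq(I)$ records the letters of the block currently being read (a sorted clique) and $P\in\clq(I)\cup\{\emptyset\}$ records the previous completed block ($P=\emptyset$ marking the very first block), together with a separate initial state. There are finitely many such pairs since $|\clq(I)|\leq 2^{|\Sigma|}$. Reading a letter $a$ in state $(P,B)$, the transition is forced --- this forcing is the crux of the argument --- according to three mutually exclusive cases. If $B\cup\{a\}$ fails to be a clique, or $a\in B$, then $a$ cannot lie in the current block, so it must open a new one; this is legal because $a$ then depends on some letter of $B$ (including $(a,a)\in D$ when $a\in B$), and I move to $(B,\{a\})$. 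If instead $a\notin B$ and $a$ is independent of all of $B$ with $a>\max B$, then $a$ cannot start a new block (that would require $a$ to depend on the completed block $B$), so it is forced to extend the current block, provided the previous-block condition holds, i.e.\ $P=\emptyset$ or $a$ depends on some letter of $P$; I then move to $(P,B\cup\{a\})$. In all remaining situations (an order violation $a<\max B$ with $a$ independent of $B$, or an extension that would violate the dependency on $P$) neither option is legal, so the transition is left undefined.

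Next I would verify that this parser exactly captures $N_I$. The key point is that for a word genuinely in Foata normal form the block decomposition is \emph{recovered deterministically} by the rules above: within a block the letters of a clique are pairwise independent and strictly increasing, and the normal-form requirement that every letter of $C_{i}$ depend on some letter of $C_{i-1}$ is precisely what rules out the ambiguous case (an independent letter could otherwise be commuted into an earlier block). I would then argue both inclusions. If $\sigma\in N_I$, its canonical decomposition $\nu(C_0)\nu(C_1)\cdots$ drives an infinite run of $\mathcal{B}_I$, each transition being an extension or a block opening; since the $C_i$ are nonempty and bounded in size by $|\Sigma|$, infinitely many blocks are produced and the run is infinite. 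Conversely, an infinite run never blocks, so it segments $\sigma$ into an infinite sequence of nonempty sorted cliques satisfying the dependency condition by construction, whence $\sigma\in N_I$.

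Finally, with $\mathcal{B}_I$ in hand and all states declared accepting, I would conclude formally: $L_\omega(\mathcal{B}_I)=\{\sigma\in\Sigma^\omega\mid \sigma\text{ has an infinite run}\}=N_I$, which is $\omega$-regular; and $\Sigma^\omega\setminus N_I=\bigcup\{u\Sigma^\omega\mid\text{the run on the prefix }u\text{ is undefined}\}$ is open, so $N_I$ is topologically closed. The main obstacle, as noted, is establishing the determinism of the parse --- proving that at each position exactly one of ``extend'' and ``open a new block'' is compatible with membership in $N_I$, and getting the boundary cases (first block, repeated letters, order versus dependence failures) exactly right; once that is nailed down, the closure and regularity follow immediately from the shape of $\mathcal{B}_I$.
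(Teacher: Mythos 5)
Your proposal is correct and follows essentially the same route as the paper: a deterministic B\"uchi automaton with all states accepting (hence recognizing a topologically closed, $\omega$-regular safety language) whose states record the previous and current cliques, with mutually exclusive transition cases for extending the current clique versus opening a new one. The only cosmetic differences are that the paper additionally stores the last letter read (equivalent to your use of $\max B$ on a sorted clique) and uses $\Sigma$ rather than $\emptyset$ as the marker for the first block, exploiting reflexivity of $D$ to make the previous-block condition vacuous there.
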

\begin{proof}
  The topologically closed $\omega$-regular languages, aka ``safety''
  languages, are the languages recognized by finite deterministic
  B\"uchi automata with only accepting states.  We provide such an
  automaton $\mathcal{B}_I=\tup{Q,\Sigma,q_0,\delta,Q}$ such that
  $L(\mathcal{B}_I)=N_I$.

  Set $Q\eqdef\{q_0\}\cup(\clq(I)\cup\{\Sigma\})\times\clq(I)\times\Sigma$.
  We define $\delta(q_0,a)$ as $(\Sigma,\{a\},a)$ for all $a$ in $\Sigma$;
  for all $C_1$ in $\clq(I)\cup\{\Sigma\}$, $C_2$ in $\clq(I)$, $a$, $b$
  in $\Sigma$, we define $\delta((C_1,C_2,a),b)$ by
  \begin{equation*}
    \begin{cases}(C_1,C_2\cup\{b\},b)&\text{if
    }a<b,~\exists d\in C_1, (b,d)\in D,\\&\phantom{\text{if }}\text{and }\forall d\in C_2, (b,d)\in I,\\
    (C_2,\{b\},b)&\text{if }\exists d\in C_2,(b,d)\in D\;.\end{cases}
  \end{equation*}
  The automaton simultaneously checks that consecutive cliques enforce
  the Foata normal form, and that the individual letters of each clique
  are ordered according to~$<$.
\end{proof}

\subsection{The Alternating Bit Protocol}\label{ax:abp}
The \emph{Alternating Bit Protocol} (ABP) is one of the oldest case
studies~\citep{boch80}.  It remains interesting today because no
complete and automatic procedure exists for its verification.  It can
be nicely modeled as a lossy channel system \citep[see][and the next
discussion ``A Quick Tour'']{fwlcs}, but even in this representation,
liveness properties cannot be checked.  We believe it provides a good
illustration of the kind of issues that make a system trace unbounded,
which we categorize into commutativity issues, which we tackle through
normalization, and main control loop issues, which we avoid by
bounding the number of sessions.

\paragraph{A Quick Tour}
If the ABP is modeled as a fifo automaton (in fact two finite automata
communicating through two fifo queues), then all non-trivial properties
are undecidable, because fifo automata can simulate Turing
machines \citep[see e.g.][]{BZ83}.  Nevertheless, several classes of
fifo automata have been studied in the literature, often with decidable
reachability problems:
\begin{itemize}
\item One may observe that for any control state $q$ of this particular
fifo automaton, the language of the two fifo queues is
\emph{recognizable} (as a subset of $\{q\}\times A^\ast\times B^\ast$
where $A$ and $B$ are the alphabets of the queues).  \Citet{pachl} has
shown that reachability and safety are then decidable.  But this
recognizability property itself is in general undecidable.
\item One may also observe that the languages of the fifo queues contents
are \emph{bounded}~\citep{FC-Petri87}, and then one may simulate the
fifo automaton with a Petri net and decide reachability.  Again, this
subclass of fifo automata is not recursive.
\item Yet another way is to use \emph{loop acceleration} with
  QDDs~\citep{qdd} or more generally CQDDs~\citep{cqdd} as symbolic
  representations, and to observe that the reachability set is CQDD
  computable; but still without termination guarantee when applied to
  non-flat systems.
\end{itemize}
Neither of these techniques is fully automatic nor allows to check
liveness properties.

\begin{figure}[t!]
  \centering
  \begin{tikzpicture}[->,>=stealth',shorten >=1pt,initial text=,%
                    node distance=2cm,on grid,semithick,auto,
                    inner sep=2pt]
    \node[state,initial](0){0};
    \node[state,right=of 0](1){1};
    \node[state,below=of 1](2){2};
    \node[state,below=of 0](3){3};
    \node[below right=1cm and 1.3cm of 3]{Sender};
    \path[every initial by arrow/.style={color=gray}]
      node[state,color=gray,initial,right=2.5cm of 1](a){0};
    \node[state,color=gray,right=of a](b){1};
    \node[state,color=gray,below=of b](c){2};
    \node[state,color=gray,below=of a](d){3};
    \node[below left=1cm and 1.3cm of c,color=gray]{Receiver};
    \path[every node/.style={font=\footnotesize}]
      (0) edge node{$\mathsf{snd}$} (1)
      (1) edge[loop above] node{$c_M!0$} ()
      (1) edge node{$c_A?0$} (2)
      (2) edge node{$\mathsf{snd}$} (3)
      (3) edge[loop below] node{$c_M!1$} ()
      (3) edge node{$c_A?1$} (0);
    \path[every node/.style={font=\footnotesize,color=gray}]
      (a) edge[loop above,color=gray] node{$c_A!1$} ()
      (a) edge[color=gray] node{$c_M?0$} (b)
      (b) edge[color=gray] node{$\mathsf{rcv}$} (c)
      (c) edge[loop below,color=gray] node{$c_A!0$} ()
      (c) edge[color=gray] node{$c_M?1$} (d)
      (d) edge[color=gray] node{$\mathsf{rcv}$} (a);
  \end{tikzpicture}
  \caption{\label{fig:abp}The Alternating Bit Protocol.}
\end{figure}
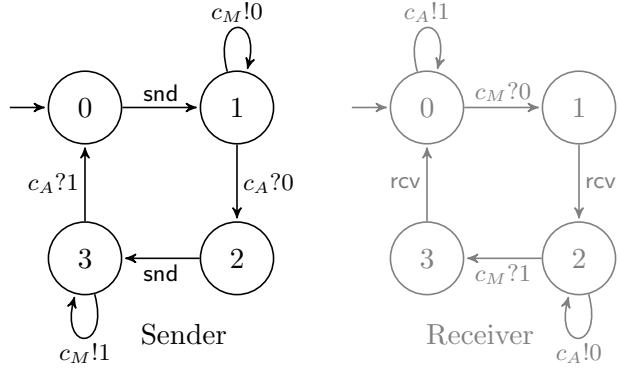
The most effective approach is arguably to model the ABP as a lossy
channel system (see \autoref{fig:abp}); reachability and safety are then
decidable, but liveness remains undecidable.  Furthermore, a forward
analysis using SREs as symbolic representations---as performed by a tool
like \textsc{TReX}---, will terminate and construct a finite symbolic graph (for
the verification of safety properties) \citep{fwlcs}: indeed, the ABP is
\emph{cover flattable}, but unfortunately this property is in general
undecidable.

\paragraph{Verification}
We model the ABP as two functional lossy channel systems (Sender and
Receiver) that run in parallel, and communicate through two shared
channels $c_M$ for messages and $c_A$ for acknowledgments.  Our
correctness property is whether each sent message (proposition
$\mathsf{snd}$) is eventually received (proposition~$\mathsf{rcv}$):
\begin{equation}\label{eq:liveabp}
  \mathsf{G}(\mathsf{snd}\Rightarrow\mathsf{X}(\neg\,\mathsf{snd}\mathrel{\mathsf{U}}\mathsf{rcv}))\;,\tag{$\varphi_\text{ABP}$}
\end{equation}
under a weak fairness assumption (every continuously firable transition
is eventually fired).

The full system is displayed for its useful accessible part in
\autoref{fig:abpsync}, with Receiver's transitions in grey.  This
system is clearly not trace bounded, thus we cannot apply
\autoref{thlive} alone.

\begin{figure}[t]
  \centering
  \hspace*{-.7ex}%
  \begin{tikzpicture}[->,>=stealth',shorten >=1pt,initial text=,%
                    node distance=1.7cm,on grid,semithick,auto,
                    inner sep=2pt]
    \node[state,initial](00){0{\color{gray}0}};
    \node[state,right=of 00](10){1{\color{gray}0}};
    \node[state,right=of 10](11){1{\color{gray}1}};
    \node[state,right=of 11](12){1{\color{gray}2}};
    \node[state,below=2cm of 12](22){2{\color{gray}2}};
    \node[state,left=of 22](32){3{\color{gray}2}};
    \node[state,left=of 32](33){3{\color{gray}3}};
    \node[state,left=of 33](30){3{\color{gray}0}};
    \path[every node/.style={font=\footnotesize}]
      (00) edge[loop above,color=gray] node[color=gray]{$c_A!1$} ()
      (00) edge node{$\mathsf{snd}$} (10)
      (10) edge[loop above,color=gray] node[color=gray]{$c_A!1$} ()
      (10) edge[loop below] node{$c_M!0$} ()
      (10) edge[color=gray] node[color=gray]{$c_M?0$} (11)
      (11) edge[loop above] node{$c_M!0$} ()
      (11) edge[color=gray] node[color=gray]{$\mathsf{rcv}$} (12)
      (12) edge[loop above] node{$c_M!0$} ()
      (12) edge[loop right,color=gray] node[color=gray]{$c_A!0$} ()
      (12) edge node{$c_A?0$} (22)
      (22) edge[loop right,color=gray] node[color=gray]{$c_A!0$} ()
      (22) edge node{$\mathsf{snd}$} (32)
      (32) edge[loop above,color=gray] node[color=gray]{$c_A!0$} ()
      (32) edge[loop below] node{$c_M!1$} ()
      (32) edge[color=gray] node[color=gray]{$c_M?1$} (33)
      (33) edge[loop below] node{$c_M!1$} ()
      (33) edge[color=gray] node[color=gray]{$\mathsf{rcv}$} (30)
      (30) edge[loop below] node{$c_M!1$} ()
      (30) edge[loop left,color=gray] node[color=gray]{$c_A!1$} ()
      (30) edge node{$c_A?1$} (00);
  \end{tikzpicture}
  \caption{\label{fig:abpsync}Synchronized view of the ABP.}
\end{figure}

\subsection{Trace Bounded Modulo $I$}\label{sub:norm}
The search for increasing forks on the ABP successively finds four
witnesses of trace unboundedness in states $10$, $12$, $32$, and $30$,
where at each occasion two competing elementary loops can be fired.
Thankfully, all these loops commute, because they involve two
different channels.  Our goal is to transform our system in order to
remove these forks, while maintaining the ability to verify
\eqref{eq:liveabp}.

\begin{definition}\label{defmod}%
  A WSTS $\mathcal{S}$ is \emph{trace bounded modulo $I$} an
  independence relation, if $T_\omega(\mathcal{S})$ is $I$-closed and
  the set of finite prefixes of the normalized language
  $T_\omega(\mathcal{S})\cap N_I$ is trace bounded.
\end{definition}
By \autoref{lem:fnfsafety}, we can construct a cd-WSTS $\mathcal{S}'$
for $T_\omega(\mathcal{S})\cap N_I$, and decide whether it is trace
bounded thanks to \autoref{cordec}.  Thus trace boundedness modulo $I$
is decidable for $I$-closed WSTS.

Finally, provided the language $L(\neg\varphi)$ of the property to
verify is also $I$-closed, the normalized system and the original
system are equivalent when it comes to verifying $\varphi$.  Indeed,
we can generalize \autoref{thlive} to trace bounded modulo $I$
cd-WSTS and $I$-closed $\omega$-regular languages:
\begin{theorem}\label{propmod}%
  Let $I$ be an independence relation, $\mathcal{S}$ be a trace bounded
  modulo $I$ cd-WSTS, and $L$ an $I$-closed $\omega$-regular language,
  all three on $\Sigma$.  Then it is decidable whether
  $T_\omega(\mathcal{S})\cap L$ is empty.
\end{theorem}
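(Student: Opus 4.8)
The plan is to reduce emptiness of $T_\omega(\mathcal{S})\cap L$ to an instance of Theorem~\ref{thlive}. The obstruction to applying that theorem directly is that $T_\omega(\mathcal{S})$ itself is \emph{not} trace bounded; the point of boundedness modulo $I$ is that it becomes bounded once restricted to Foata normal forms. So I would first collapse the whole problem onto the normalized sublanguage, and then recognize the $\omega$-regular constraint $L$ with a deterministic Rabin automaton sitting on top of the now-bounded normalized system.

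First I would record that $M=T_\omega(\mathcal{S})\cap L$ is $I$-closed: $T_\omega(\mathcal{S})$ is $I$-closed by Definition~\ref{defmod}, $L$ is $I$-closed by hypothesis, and $I$-closure is stable under intersection (if $\sigma\in M$ and $\sigma\cIlim\sigma'$, then $\sigma'$ lies in both $I$-closed factors). The crucial reduction is that, for any $I$-closed $M$,
\[
  M=\emptyset\quad\Longleftrightarrow\quad M\cap N_I=\emptyset\;.
\]
The nontrivial direction is the contrapositive: an arbitrary $\sigma\in M$ is $\cIlim$-equivalent to its unique Foata normal form $\fnf(\sigma)\in N_I$, so $I$-closure of $M$ gives $\fnf(\sigma)\in M\cap N_I$. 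Hence it suffices to decide emptiness of $M\cap N_I=T_\omega(\mathcal{S})\cap N_I\cap L$.

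Next I would build the system to feed into Theorem~\ref{thlive}. Using the deterministic B\"uchi automaton $\mathcal{B}_I$ of Lemma~\ref{lem:fnfsafety}, set $\mathcal{S}'=\mathcal{S}\times\mathcal{B}_I$, a cd-WSTS with $T_\omega(\mathcal{S}')=T_\omega(\mathcal{S})\cap N_I$; restricting $\mathcal{S}'$ to its co-accessible states, exactly as in the reduction from languages to trace sets of Section~\ref{sub:dec}, makes $T(\mathcal{S}')$ equal to the set of finite prefixes of $T_\omega(\mathcal{S})\cap N_I$, which is bounded by Definition~\ref{defmod}. Since $L$ is $\omega$-regular, fix a DFA $\mathcal{A}_L$ and a Rabin condition $(E_i,F_i)_i$ recognizing $L$, and form $\mathcal{S}'\times\mathcal{A}_L$. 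Products with finite automata preserve completeness, determinism, and $\infty$-effectiveness (the same construction is already used inside the proof of Theorem~\ref{thlive}), and $T(\mathcal{S}'\times\mathcal{A}_L)\subseteq T(\mathcal{S}')$ is bounded; thus $\mathcal{S}'\times\mathcal{A}_L$ is a trace bounded $\infty$-effective cd-WSTS. Since
\[
  L_\omega(\mathcal{S}'\times\mathcal{A}_L,(E_i,F_i)_i)=T_\omega(\mathcal{S}')\cap L=M\cap N_I\;,
\]
Theorem~\ref{thlive} decides its emptiness, and by the reduction above this decides emptiness of $M$.

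The main obstacle is the normalization equivalence $M=\emptyset \Leftrightarrow M\cap N_I=\emptyset$, which is exactly what trades the unbounded $T_\omega(\mathcal{S})$ for the bounded $T_\omega(\mathcal{S})\cap N_I$; it relies on having both factors $I$-closed \emph{and} on existence and uniqueness of Foata normal forms. The remaining work is bookkeeping: confirming that $\mathcal{S}'$ and $\mathcal{S}'\times\mathcal{A}_L$ stay $\infty$-effective cd-WSTS and that trace boundedness is inherited by the subsystem, so that the hypotheses of Theorem~\ref{thlive} are genuinely met. The one subtle identification to get right is that the \emph{finite prefixes of} $T_\omega(\mathcal{S})\cap N_I$ from Definition~\ref{defmod} coincide with the trace set of the co-accessible restriction of $\mathcal{S}'$, which is precisely the language-to-trace-set passage already justified in Section~\ref{sub:dec}.
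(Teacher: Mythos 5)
Your proposal is correct and follows essentially the same route as the paper's proof: construct the normalized system $\mathcal{S}'=\mathcal{S}\times\mathcal{B}_I$ via Lemma~\ref{lem:fnfsafety}, which is trace bounded by Definition~\ref{defmod}, apply Theorem~\ref{thlive} with a deterministic Rabin automaton for $L$, and justify the reduction by mapping any $\sigma\in T_\omega(\mathcal{S})\cap L$ to its Foata normal form $\fnf(\sigma)$, which stays in both factors by $I$-closure. The extra bookkeeping you supply (co-accessible restriction, preservation of the cd-WSTS hypotheses under product) only makes explicit what the paper leaves implicit.
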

\begin{proof}
  By \autoref{lem:fnfsafety}, we can construct a cd-WSTS
  $\mathcal{S}'$ for $T_\omega(\mathcal{S})\cap N_I$, which will be
  trace bounded by hypothesis.  Wlog., we can assume that we have a DFA
  with a Rabin acceptance condition for $L$, and can apply
  \autoref{thlive} to decide whether $T_\omega(\mathcal{S}')\cap
  L=\emptyset$.
  
  It remains to prove that
  \begin{equation*}
  T_\omega(\mathcal{S})\cap L=\emptyset\text{  iff 
  }T_\omega(\mathcal{S}')\cap L=\emptyset\;.
  \end{equation*}
  Obviously, if $T_\omega(\mathcal{S})\cap L$ is empty, then the same
  holds for $T_\omega(\mathcal{S}')\cap L$.  For the converse, let
  $\sigma$ be a word in $T_\omega(\mathcal{S})\cap L$.  Then, since
  $\mathcal{S}$ is $I$-closed, $\fnf(\sigma)$ also belongs to
  $T_\omega(\mathcal{S})$ and to $N_I$, and thus to
  $T_\omega(\mathcal{S}')$.  And because $L$ is $I$-closed,
  $\fnf(\sigma)$ further belongs to $L$, hence to
  $T_\omega(\mathcal{S}')\cap L$.
\end{proof}

Once our system is normalized against partial commutations, the only
remaining source of trace unboundedness is the main control loop.  By
bounding the number of sessions of the protocol, i.e.\ by unfolding
this main control loop a bounded number of times, we obtain a trace
bounded system.

This transformation would disrupt the verification of \eqref{eq:liveabp},
if it were not for the two following observations:
\begin{enumerate}
\item The full set of all reachable configurations is already explored
  after two traversals of the main control loop.  This is established
  automatically thanks to \autoref{cor:cover} on the 2-unfolding
  of the normalized ABP, which is a trace bounded cd-WSTS.  Thus any
  possible session, with any possible reachable initial configuration,
  can already be exhibited at the second traversal of the system.
\item Our property \eqref{eq:liveabp} is intra-session: it only requires
  to be tested against any possible session.
\end{enumerate}
The overall approach, thanks to the concept of trace boundedness modulo
partial commutations, thus succeeds in reducing the ABP to a trace bounded
system where our liveness property can be verified.

\section{Trace Boundedness is not a Weakness}\label{sec:appl}
\begin{table}[t]%
  \caption{\label{tab:decsum}Some decidability results
    for selected classes of cd-WSTS---Petri nets (PN), affine counter
    systems (ACS), and functional lossy channel systems (LCS)---in the
    general and trace bounded cases (t.b.).}%
  \centering{\small
  \begin{tabular}{lcccccc}
    \toprule
    & PN & t.b.\ PN& ACS & t.b.\ ACS & LCS & t.b.\ LCS\\
    \midrule
    Reachability& \tickYes & \tickYes & \tickNo & \tickNo & \tickYes &
    \tickYes\\
    $\mathsf{Post}^\ast$ inclusion%
    & \tickNo & \tickYes & \tickNo &
    \tickNo & \tickNo & \tickYes\\
    Liveness & \tickYes & \tickYes & \tickNo & \tickYes & \tickNo &
    \tickYes\\
    \bottomrule
  \end{tabular}}
\end{table}
To paraphrase the title \emph{Flatness is not a
  Weakness}~\citep{flatltl}, trace boundedness is a powerful property for the
analysis of systems, as demonstrated with the termination of forward
analyses and the decidability of $\omega$-regular
properties for trace bounded WSTS (see also \autoref{tab:decsum})---and is implied by
flatness.  More examples of its interest can be found in the recent
literature on the verification of multithreaded programs, where
trace boundedness of the context-free synchronization languages yields
decidable reachability~\citep{vineet,parikhb}. 

Most prominently, trace boundedness has the
considerable virtue of being decidable for a large class of systems,
the $\infty$-effective complete deterministic WSTS.  There is
furthermore a range of unexplored possibilities beyond partial
commutations (starting with semi-commutations or contextual
commutations) that could help turn a system into a trace bounded one.

\subsection*{Acknowledgments}
We thank the anonymous reviewers for their careful reading, which
improved the paper.

\bibliographystyle{abbrvnat}
\bibliography{journalsabbr,bounded}
\end{document}